\documentclass[runningheads]{llncs}
\usepackage[T1]{fontenc}
% T1 fonts will be used to generate the final print and online PDFs,
% so please use T1 fonts in your manuscript whenever possible.
% Other font encondings may result in incorrect characters.
%
\usepackage{graphicx}
% Used for displaying a sample figure. If possible, figure files should
% be included in EPS format.
%
\usepackage{hyperref} 
% If you use the hyperref package, please uncomment the following two lines
% to display URLs in blue roman font according to Springer's eBook style:
\usepackage{color}

\urlstyle{rm}
\usepackage{wrapfig}

\usepackage[textsize=small]{todonotes}
\newcounter{todocounter}

\usepackage{amsmath,amssymb,cleveref}

\usepackage{thm-restate}

\usepackage{tikz}
\usetikzlibrary{automata}
\usetikzlibrary{snakes}

\usepackage{algorithm}
\usepackage[noend]{algpseudocode}

%% Macros
\newcommand\N{\mathbb N}

\newcommand\loc{\mathrm{loc}}
\newcommand\events{\mathcal E}
\newcommand\Foata{\mathsf F}
\newcommand\Foatalength[1]{|#1|_{\mathsf F}}
\newcommand\view{\mathrm{view}}
\newcommand\jointview{\overleftarrow{\view}}
\newcommand\incl{\subseteq}
\newcommand\diam{\mathsf{Diam}}
\newcommand\blue[1]{\textcolor{blue}{#1}}
\newcommand\A{\mathcal A}
\newcommand\B{\mathcal B}

\newcommand\length[1]{|#1|}
\newcommand\f{f}

\newcommand{\oo}{o}
\newcommand{\BB}{\mathfrak{B}}
\newcommand{\inn}{\mathsf{in}}
\newcommand{\out}{\mathsf{out}}
\newcommand{\restrict}[2]{#1{{\downarrow}_{#2}}}
\newcommand{\trest}{\restrict{t}{\Sigma^{\inn}(B)}}

\newcommand{\synchronise}{\mathsf{synchronise}}
\newcommand{\expand}{\mathsf{expand}}
\newcommand{\cut}{\mathsf{cut}}
\newcommand{\gcomm}{G_{(\Sigma, \loc)}}
\newcommand{\parent}{\mathsf{parent}}
\newcommand{\state}{\textsc{State}}
\newcommand{\cstate}{\textsc{CState}}
\newcommand{\Bacy}{\B_{\mathsf{acyc}}}
\newcommand{\Bcomb}{\B^{\mathsf{mod}}_{\mathsf{acyc}}}

\newcommand{\roo}{\mathsf{root}}
\newcommand\Lt{L_{\mathsf{tr}}}

\begin{document}

\title{Synthesising Asynchronous Automata from Fair Specifications}

\author{B\'eatrice B\'erard \inst{1}\and 
Benjamin Monmege \inst{2}\and 
B Srivathsan \inst{3,4}\and 
Arnab Sur\inst{3}}

\institute{
  Sorbonne Universit\'e, CNRS, LIP6, F-75005 Paris, France \\ \email{beatrice.berard@lip6.fr} \and
  Aix-Marseille Univ, CNRS, LIS, Marseille, France \\ \email{benjamin.monmege@univ-amu.fr} \and
  Chennai Mathematical Institute, Chennai, India  \\ \email{\{sri, arnabs\}@cmi.ac.in} \and
  CNRS IRL 2000, ReLaX, Chennai, India 
}

%\authorrunning{B.~Bérard, B.~Monmege, B.~Srivathsan, A.~Sur}

\maketitle

\begin{abstract}
Asynchronous automata are a model of distributed finite state processes synchronising on shared actions. A celebrated result by Zielonka shows how a deterministic asynchronous automaton (AA) can be synthesised, starting from two inputs: a global specification given as a deterministic finite-state automaton (DFA) and a distribution of the alphabet into local alphabets for each process.
The DFA to AA translation is particularly complex and has been revisited several times, with no complete prototype tool provided for the full construction.
In this work, we revisit this construction on a restricted class of
``fair'' specifications: a DFA describes a fair specification if in
every loop, all processes participate in at least one action — so, no
process is starved. For fair specifications, we present a new
construction to synthesise an AA. Our construction results in an AA
where every process has a number of local states that is linear in the
number of states of the DFA, and where the only exponential explosion
is related to a fairness parameter: the length of the longest word
that can be read in the DFA in which not every process
participates. We have implemented a prototype tool showing how it can
be applied to some examples, in particular a concrete one: the dining
philosophers problem. Finally, we show how this
construction can be combined with an existing construction for
hierarchical process architectures, in order to relax the fairness
assumption.
\end{abstract}

% Context: AA Synthesis, Zeilonka's theorem, intricate, simpler constructions for restricted architectures (related work)

% Goal: Find a meaningful restriction on the specification which gives a conceptually simpler and more efficient construction.

% Contributions: 
% - Fair specifications, challenge is to come up with a neat implementation for this restriction. Complexity exponential only in the fairness parameter.
% - Weakening the restriction in two ways: (1) incorporating Krishna, Muscholl construction, (2) synthesising only the fair executions of the specification + tool

\section{Introduction}

Asynchronous automata (AA) are a foundational model for distributed systems, where independent processes synchronise on shared actions. These models enable the design and analysis of systems with decentralised control, making them critical in theoretical computer science and applications like parallel computing and distributed algorithms. Figure~\ref{fig:intro-aa-example} gives an example (from~\cite{Madhavan-Notes-Zielonka}) of an AA.
\begin{figure}[h]
\centering
\scalebox{0.88}{\begin{tikzpicture}[state/.style={draw, circle, thick, inner
    sep=2pt}]

    \begin{scope}[every node/.style={state}]
        \node (0) at (0,0) {\scriptsize $0$};
        \node (1) at (1.5, 0) {\scriptsize $1$};
        \node (2) at (3, 0) {\scriptsize $2$};
     \end{scope}

     \begin{scope}[->, >=stealth, thick, auto]
        \draw (-0.7, 0) to (0);
        \draw (0) to node {\scriptsize $a$} (1);
        \draw (1) to node {\scriptsize $a$} (2);
        \draw [red, dashed] (1) to [bend left=20] node [below] {\scriptsize $c$} (0);
        \draw [blue] (2) to [bend left=45] node [below] {\scriptsize $c$} (0);
     \end{scope} 
     \node at (1.5, -1.3) {\scriptsize $p_1$};

     \begin{scope}[xshift = 4.5cm]
        \begin{scope}[every node/.style={state}]
            \node (0) at (0,0) {\scriptsize $0$};
            \node (1) at (1.5, 0) {\scriptsize $1$};
            \node (2) at (3, 0) {\scriptsize $2$};
         \end{scope}
    
         \begin{scope}[->, >=stealth, thick, auto]
            \draw (-0.7, 0) to (0);
            \draw (0) to node {\scriptsize $b$} (1);
            \draw (1) to node {\scriptsize $b$} (2);
            \draw [red, dashed] (1) to [bend left=20] node [below] {\scriptsize $c$} (0);
            \draw [blue] (2) to [bend left=45] node [below] {\scriptsize $c$} (0);
         \end{scope} 
         \node at (1.5, -1.3) {\scriptsize $p_2$};
     \end{scope} 

     \draw (8, -1.5) to (8, 1.5);
     \begin{scope}[xshift=9cm]
        \node (00) at (0,1) {\scriptsize $0,0$};
        \node (01) at (1.5, 1) {\scriptsize $0,1$};
        \node (02) at (3, 1) {\scriptsize $0,2$};

        \node (10) at (0,0) {\scriptsize $1,0$};
        \node (11) at (1.5, 0) {\scriptsize $1,1$};
        \node (12) at (3, 0) {\scriptsize $1,2$};

        \node (20) at (0,-1) {\scriptsize $2,0$};
        \node (21) at (1.5, -1) {\scriptsize $2,1$};
        \node (22) at (3, -1) {\scriptsize $2,2$};
        
        \begin{scope}[->,>=stealth, thick, auto]
        \draw (-0.7, 1) to (00);
        \draw (00) to node {\scriptsize $b$} (01);
        \draw (01) to node {\scriptsize $b$} (02);
        \draw (10) to node [below] {\scriptsize $b$} (11);
        \draw (11) to node [below left] {\scriptsize $b$} (12);
        \draw (20) to node [below] {\scriptsize $b$} (21);
        \draw (21) to node [below] {\scriptsize $b$} (22);

        \draw (00) to node [left] {\scriptsize $a$} (10);
        \draw (10) to node [left] {\scriptsize $a$} (20);
        \draw (01) to node[above right] {\scriptsize $a$} (11);
        \draw (11) to node {\scriptsize $a$} (21);
        \draw (02) to node {\scriptsize $a$} (12);
        \draw (12) to node {\scriptsize $a$} (22); 

        \draw [red, dashed] (11) to [bend left =20] node[above right,node distance=1mm,yshift=-1mm]{\scriptsize $c$} (00);
        \draw [blue] (22) to [bend right = 20] node[above right,node distance=1mm,yshift=-1mm]{\scriptsize $c$}(00);
        \end{scope} 
     \end{scope}
\end{tikzpicture}}
\caption{Left: An AA. Right: its semantics seen as a DFA}
\label{fig:intro-aa-example}
\end{figure}
It contains two processes $p_1$, $p_2$ which control letters $\{a, c\}$ and $\{b, c\}$ respectively: so $a$ is local to $p_1$, $b$ is local to $p_2$, and $c$ is shared by $p_1$ and~$p_2$. The key point is the synchronisation mechanism on shared actions. Here, the AA is defined in a way that the two red-dashed $c$ transitions can synchronise, as well as the two blue $c$ transitions. However, the red-dashed $c$ in one cannot be done together with the blue $c$ of the other. The behaviour of the AA is described as a deterministic finite-state automaton (DFA) on the right. The language of this AA is $(([ab] + [aabb]).c)^*$ where $[ab]$ stands for either $ab$ or $ba$, and $[aabb]$ denotes the set of words obtained by shuffling two $a$'s and two $b$'s. %The notation $[aabb]$ represents the different ways the two local automata can concurrently execute two $a$ transitions and two $b$ transitions, before doing $c$. 

From an AA, it is straightforward to construct a language-equivalent DFA, as shown in the example. The reverse process --- that is, given a DFA (global specification), construct a language-equivalent AA (distributed implementation) --- is highly non-trivial. A cornerstone result is Zielonka's theorem, which provides a method for synthesising a (deterministic) AA from a given DFA, and an alphabet distribution among processes~\cite{Zie87}. However, %despite its significance, 
Zielonka’s construction is known to be extremely complex and has been the subject of numerous refinements and optimisations (e.g.~\cite{CorMet93,MukSoh97,Genest06}), leading to an optimal Zielonka-type construction~\cite{GenGim10}. In all these constructions, each process keeps track of what it believes to be the latest ``information'' available to every other process. When there is a shared action, the processes share their local information, reconcile them and update their local states. This underlying idea is implemented using a \emph{gossip automaton} in~\cite{MukSoh97,Madhavan-Notes-Zielonka}, and so-called \emph{zones} in~\cite{Genest06,GenGim10}. This idea yields a number of local states that is exponential in the number of processes (whereas the construction can be kept polynomial in the number of states of the DFA). 

%These constructions require, as a crucial ingredient to solve the difficult gossip problem, to keep track of the ordering information between leading process events. They all yield an explosion of the number of states in each process that is exponential in the number of processes (while the explosion in the number of states of the original DFA can be avoided).

Several solutions have been considered in order to avoid the resolution of this \emph{gossip problem}. For instance, compositional methods have been used in order to construct non-deterministic AA~\cite{Bau11,pighizzini1993synthesis,baudru2006unfolding}, still requiring a number of states exponential in the number of processes.
Another solution, leading to a quadratic number of states, is to restrict the topology of communications to be acyclic, and with any letter synchronising at most two processes~\cite{KRISHNA2013109}. This special case has recently been extended to remove the restriction on the number of processes synchronising on each letter, and to allow for a reconfiguration of the topology along the executions~\cite{HauLeh24}.
Another recent work~\cite{AdsGas24} has proposed a logic-based route for Zielonka's theorem by going through a local, past-oriented fragment of propositional dynamic logic. The AA is obtained by a cascade product of localised AA, which essentially operate on a single process.

%Another recent work has proposed yet another proof of Zielonka's construction by going through a local, past-oriented fragment of propositional dynamic logic~\cite{AdsGas24}: this results in an AA obtained by a cascade product of
%localised AA, which essentially operate on a single process, without the need to use a gossip construction.
Our goal in this work is
to find a ``meaningful'' restriction on the DFA specifications which
can result in a conceptually simpler and more efficient AA synthesis,
that can be implemented in a prototype tool. Indeed, there are very
few attempts of implementations of the synthesis algorithm based on
Zielonka's theorem, partly due to its intricacy. The work
of~\cite{SteEsp03} proposes a notion of safe asynchronous automata and
states that the problem of synthesizing a safe AA whose language is
contained in a DFA specification is undecidable. They go on to study
syntactic conditions on DFAs which make them equivalent to safe AA and
then propose a method to find sub-automata inside the DFA
specification that satisfy these conditions. From such sub-automata,
safe AA are synthesized using a modified Zielonka's procedure, which
is implemented and experimented.
%One such attempt~\cite{SteEsp03} implements some usual constructions, as well as a new one based on safe AA, in which all states reachable from the initial state are final.
Another implementation has been proposed in \cite{AksDin13} for a
fragment of asynchronous automata that are called \emph{realistic},
that is both deadend-free and locally accepting. Both studies define
restrictions on the DFA specification which are orthogonal to those we
consider.

%In both cases, severe restrictions on the specification DFA have to be made to produce an AA.

In this article, we revisit AA synthesis with a focus on \emph{fair}
specifications. As an example, consider
the dining philosophers problem, introduced in \cite{Dij71}. The
problem is modeled for $n$ philosophers (the processes), sitting
around a table with a single chopstick lying between two consecutive
philosophers. A philosopher's actions (picking up/putting down the
left/right chopstick) are shared actions, involving the philosopher
and its two neighbors. The description of the authorised sequence of
actions, as well as the objective for all philosophers to eat
(i.e. have both chopsticks at some point in the future, before
releasing them), can be given as a DFA where each state stores the
current status of each philosopher. One of the objectives is to design
a distributed algorithm to avoid deadlock situations (where nothing
else can be done), and the starvation of any of the philosophers. In
our setting, we want to design an AA where every philosopher eats
infinitely often, which is traditionally called \emph{fairness} in the
realm of infinite words (every process that wants to play infinitely
often should be scheduled infinitely often). We strengthen the
requirement to  \emph{bounded-fairness} (though we will drop the
\emph{bounded-} in the rest of the article) where we not only want
that every process plays infinitely often, but also frequently
enough. Scheduling policies like round-robin scheduling impose bounded-fairness naturally~(see Section~2 of \cite{10.1145/295656.295659}, for
instance). Similar bounds are also frequently used in timed systems
to require, for instance, bounded response times.

From a DFA perspective, this bound on fairness requires that in every
loop of the DFA, every process participates (and thus in the long run,
no process is starved). The DFA in Figure~\ref{fig:intro-aa-example}
is fair, since every loop is closed on~$c$, which is a global action
where both processes participate. We introduce a natural number $k$ to
measure the quality of fairness: a DFA is $k$-fair whenever every word
of length $k$ that can be read in the DFA makes every process
participate at least once. The DFA in
Figure~\ref{fig:intro-aa-example} is $3$-fair. Notice that this
parameter $k$ is a priori independent of the number of processes and
the size of the DFA: there are examples (see
Example~\ref{ex:independent} for one) of DFA with arbitrary size and
with a distribution of actions among arbitrarily many processes that
are $k$-fair for a fixed parameter~$k$.

After studying the fairness of trace languages in Section~\ref{sec:fair}, we introduce a novel construction that simplifies the synthesis process for fair specifications in Section~\ref{sec:contrib-fair}. 
A crucial technical ingredient of our construction is the notion of Foata Normal Form (FNF) \cite{CarFoa69}. % to make the reasonings independent of all possible linearisations of the traces. 
Our construction
%is conceptually simple, relying only on elementary arguments and
does not require the use of gossip automata, or other %difficult
sophisticated tools used in other constructions.
As a result, our approach ensures that the resulting AA is linear in the size of the DFA, polynomial in the size of the alphabet, and the only exponential explosion is related to the parameter $k$. The complexity is independent of the number of processes. We can even extend our synthesis procedure to synthesise an AA recognising all $k$-fair traces of a (potentially non-fair) DFA specification: in the dining philosophers problem, this allows us to ensure that every philosopher is active from time to time. We present a tool implementing our synthesis procedures in Section~\ref{sec:implementation}. 

We finally weaken the fairness restriction in Section~\ref{sec:combination}, by showing that our new construction can be combined with results known for hierarchical process architectures (as studied in \cite{KRISHNA2013109,HauLeh24}). In these architectures, the set of processes is organised as a tree where each process can communicate only with its parent or its children in the tree. We show that each node in this tree can be enlarged into a \emph{bag} of processes. If the DFA restricted to the alphabet of each of these bags is fair, then our fairness construction can be coupled with the construction known for hierarchical architectures. As a possible application, we can imagine a pool of servers that are communicating in a hierarchical fashion, each linked to some clients where a server and its clients should interplay in a fair fashion (ignoring what happens outside). We once again obtain a construction of polynomial complexity for a fixed value of fairness parameter $k$.

%Our second contribution is to show that our new construction can be adapted to extend its applicability to hierarchical process architectures (as studied in \cite{KRISHNA2013109,HauLeh24}) where the set of processes is organised into a tree of \emph{bags} of processes, such that the DFA restricted to the alphabet of any of the bag is fair. We once again obtain a construction of polynomial complexity for a fixed value of fairness parameter $k$.

% The article is structured as follows. After a preliminary section (Section~\ref{sec:prelim}), we describe fair traces and DFA, as well as some of their important properties in Section~\ref{sec:fair}. We give our first contribution, and an overview of the proof, by using an example, in Section~\ref{sec:contrib-fair}. In Section~\ref{sec:combination}, we then describe how to combine this construction with a more general hierarchical process architecture. 
Detailed proofs of all results are presented in the clearly marked appendices.

\section{Preliminaries}\label{sec:prelim}

%\paragraph{Traces.}
Let $\Sigma$ be a finite alphabet. A \emph{concurrent alphabet} is a pair $(\Sigma, I)$ with $I\subseteq \Sigma\times \Sigma$ the \emph{independence relation}, being a symmetric and irreflexive relation. The corresponding dependence relation is the set $D = (\Sigma\times \Sigma) \setminus I$. 
A concrete independence relation can be obtained by distributing the letters into a finite set $P$ of processes, formally defined by a function $\loc\colon \Sigma \to 2^P$ where $\loc(a)$ is the set of processes that can read the letter $a \in \Sigma$. We further define $\Sigma_p = \{a \in \Sigma \mid p \in \loc(a)\}$ as the alphabet of process $p\in P$. We call $(\Sigma, \loc)$ a \emph{distributed alphabet}. Such a distribution naturally leads to an independence relation $I_\loc = \{(a, b) \in \Sigma\times \Sigma \mid \loc(a) \cap \loc(b) = \emptyset\}$.
We extend the function $\loc$ to words by letting $\loc(\varepsilon) = \emptyset$, and $\loc(ua) = \loc(u) \cup \loc(a)$ for all $u \in \Sigma^*$ and $a \in \Sigma$.

\begin{example}\label{ex:distributed-alphabet}
  Let $\Sigma = \{a, b, c, d\}$ and $P  = \{p_1, p_2, p_3\}$ be the processes. Consider the distribution $\loc(a) = \{p_1, p_2\}$, $\loc(b) = \{p_1, p_3\}$, $\loc(c) = \{p_2\}$, and $\loc(d) = \{p_3\}$. Then $\Sigma_{p_1} = \{a, b\}$, $\Sigma_{p_2} = \{a, c\}$ and $\Sigma_{p_3} = \{b, d\}$. The independence relation is $I_\loc = \{(a, d), (d, a), (d, c), (c, d), (c, b), (b, c)\}$.
\end{example}

% \begin{example}
%   [Continued] Let $w = abca$ and $x = cd$, then $[w] = \{abca, acba\}$, $[x] = \{cd, dc\}$, and $[w][x] = \{abcacd, abcadc, abcdac, acbacd, acbcad, acbdac, abdcac\}$. 
% \end{example}

% \begin{definition}
%   [Trace Monoids] We call the quotient monoid $\Sigma^*/\sim$, the \emph{trace monoid} over the distributed alphabet $\Sigma$.
%   \label{def:traceMonoid}
% \end{definition}

\paragraph*{Traces.} A \emph{trace} over a concurrent alphabet $(\Sigma, I)$ is a labelled partial order $t = (\events, \leq, \lambda)$ where $\events$ is a set of \emph{events}, $\lambda\colon \events \to \Sigma$ labels each event by a letter, and $\leq$ is a partial order of $\events$ satisfying the following conditions: 
\begin{itemize}
  \item $(\lambda(e), \lambda(f)) \notin I$ implies $e\leq f$ or $f\leq e$;
  \item $e \lessdot f$ implies $(\lambda(e), \lambda(f)) \notin I$ where ${\lessdot} =  {<} \setminus {<^2}$ is the immediate successor relation induced by the partial order: $\{(e, f) \mid e < f \text{ and } \lnot \exists g.~ e < g <f \}$.
\end{itemize}
The number of events in the trace $t$ is denoted as $|t|$, and is called its \emph{length}.

A word $w=a_0\cdots a_{n-1}\in \Sigma^*$ gives rise to a unique trace by putting one event per position in the word, and defining the successor relation $\lessdot$ as all the pairs $(i, j)$ of positions such that $i<j$, $(a_i, a_j)\notin I$ and there are no positions $k$ such that $i<k<j$ and $(a_i, a_k), (a_k, a_j)\notin I$. We call this word a \emph{linearisation} of the trace. Two words $w$ and $w'$ mapped to the same trace are said to be \emph{equivalent}, denoted by $w\sim w'$. We write $[w]$ for the equivalence class of $w$ with respect to $\sim$. In the following, we use both representations (partial orders and equivalence classes) of traces interchangeably.

Minimal (respectively, maximal) elements of a trace $t$ are all the events that have no smaller (respectively, larger) events. The set of minimal (respectively, maximal) events of $t$ is denoted by $\min(t)$ (respectively, $\max(t)$). 
Given two traces $t_1 = (\events_1, \leq_1, \lambda_1)$ and $t_2 = (\events_2, \leq_2, \lambda_2)$, the concatenation $t_1 t_2$ is the trace $(\events', \leq', \lambda')$ where $\events'= \events_1 \cup \events_2$, $\lambda'(e) = \lambda_1(e)$ if $e \in \events_1$, and $\lambda'(e) = \lambda_2(e)$ otherwise, and $\leq'$ is ${\leq_1} \cup {\leq_2} \cup \{ (x, y) \mid  x \text{ is maximal in } t_1, y \text{ is minimal in } t_2, \text{ and } (x,y) \notin I\}$.  

\begin{example}\label{ex:trace} A trace over the distributed alphabet of Example~\ref{ex:distributed-alphabet} can be depicted 
  as follows, where the arrows between the events denote the relation $\lessdot$:
  \begin{center}
    \begin{tikzpicture}[>=latex,xscale = 1.5, yscale=1.5]
      \draw (0, 0) node {$a$};
      \draw (1, 0) node {$b$};
      \draw (2, 0) node {$d$};
      \draw (1, -0.5) node {$c$};
      \draw (2, -0.5) node {$a$};
      \draw (3, -0.5) node {$c$};
    
      \draw[->] (0.1, 0) -- (0.9, 0); % a_1 -> b
      \draw[->] (0.1, -0.1) -- (0.9, -0.5); % a_1 -> c
      \draw[->] (1.1, 0) -- (1.9, 0); % b -> d
      \draw[->] (1.1, -0.5) -- (1.9, -0.5); % c -> a_2
      \draw[->] (1.1, -0.1) -- (1.9, -0.4); % b -> a_2
      \draw[->] (2.1, -0.5) -- (2.9, -0.5); % a_2 -> c_2
    \end{tikzpicture}
  \end{center}
  %in Figure~\ref{fig:trace}, 
   It is the trace associated with the equivalence class $[abcacd]$, that is equal, for instance, to the equivalence class $[acbdac]$. The minimal event is the one labelled by $a$ on the left. The maximal events are those labelled by $d$ and $c$ on the right. 
% \begin{figure}[t]
%   \centering
%   \begin{tikzpicture}[>=latex,xscale = 1.5, yscale=1.5]
%     \draw (0, 0) node {$a$};
%     \draw (1, 0) node {$b$};
%     \draw (2, 0) node {$d$};
%     \draw (1, -0.5) node {$c$};
%     \draw (2, -0.5) node {$a$};
%     \draw (3, -0.5) node {$c$};
  
%     \draw[->] (0.1, 0) -- (0.9, 0); % a_1 -> b
%     \draw[->] (0.1, -0.1) -- (0.9, -0.5); % a_1 -> c
%     \draw[->] (1.1, 0) -- (1.9, 0); % b -> d
%     \draw[->] (1.1, -0.5) -- (1.9, -0.5); % c -> a_2
%     \draw[->] (1.1, -0.1) -- (1.9, -0.4); % b -> a_2
%     \draw[->] (2.1, -0.5) -- (2.9, -0.5); % a_2 -> c_2
%   \end{tikzpicture}
%   \caption{\label{fig:trace} A trace as a labelled partial order}
% \end{figure}
\end{example}

% \todo{Do we really need to concatenate traces later?}
% Given two traces $t_1 = (\events_1, {\leq}_1, \lambda_1)$ and 
% $t_2 = (\events_2,{\leq}_2, \lambda_2)$, the concatenated trace $t_1\cdot t_2$ is the trace $t' = (\events', {\leq}', \lambda')$ where $\events' 
% = \events_1\cup \events_2$, $\lambda'(e)=\lambda_1(e)$ if $e\in \events_1$ and $\lambda_2(e)$ if $e \in \events_2$, and ${\leq}'$ is generated by ${\lessdot}_1 \cup {\lessdot}_2 \cup \{(x,y)\mid x \in \max(t_1), y \in \min(t_2), (\lambda_1(x), \lambda_2(y)) \in D\}$.

% Trace equivalence is known to be a congruence, i.e.~given words $u, u', v, v' \in \Sigma^*$ if $u \sim u'$ and $v \sim v'$ then $uv \sim u'v'$. Thus concatenation of traces is well-defined, and for all $u, v \in \Sigma^*$, we have that $[u][v] = [uv]$. Given traces $s, t$, we say that $s$ is a \emph{prefix} of $t$, denoted $s \le t$, if there exists a trace $u$ such that $t = su$.

\paragraph*{Views.}
\newcommand\ideal[1]{#1{\downarrow}}

For a trace $t = (\mathcal E, {\leq}, \lambda)$, a subset $J\subseteq \mathcal E$ is called an \emph{ideal} of $t$ if for all $e\in J$, and $f\in \mathcal E$ such that $f\leq e$, we have $f\in J$. An ideal can also be seen as a trace by keeping the same partial order and labelling as in the original trace.
For a subset $X\subseteq \mathcal E$ of events, we let $\ideal{X}$ be the ideal $\{f\in \mathcal E \mid \exists x\in X \; f\leq x\}$. 
From a linearisation perspective, an ideal $s$ of a trace $t$ is related to a prefix of one of the linearisations: there must exist a linearisation $w$ of~$t$ that can be written as $w=uv$ where $s$ is the trace $[u]$. 

% \todo{New} Similarly, we can define the notion of \emph{factor}. A trace $s$ is a factor of the trace $t$ if there exists a linearisation $w$ of $t$ such that $w=u_1 v u_2$ with $s$ being the trace $[v]$.

We identify special ideals called \emph{views}: the \emph{view} of a process $p$ in a trace $t$ is the ideal of~$t$ consisting of all events currently known by the process $p$. Formally, we let $\max_p(t)$ be the largest event of $t$ that is labelled in $\Sigma_p$. Then, the view of process $p$ in a trace $t$, denoted as $\view_p(t)$, is the ideal $\ideal{\max_p(t)}$. 
The view of a set of processes $X \subseteq P$ in a trace $t$ is the ideal 
$\ideal{\{\max_p(t) \mid p\in X\}}$, obtained as the union of the views of the processes in $X$. 

\begin{example}\label{ex:views} Consider again the trace $t = [abcacd]$ presented in Example~\ref{ex:trace}. Then, $\view_{p_1}(t) = [abca]$, $\view_{p_2}(t) = [abcac]$, $\view_{p_3}(t) = [abd]$, and $\view_{\{p_1, p_3\}}(t) = [abcad]$. 
\end{example}

%\begin{corollary}
%  Let $t \in \Sigma^*/\sim$ be a trace, $P = \{p_i\}_{i \in [n]}$ be the set of all process and for all $i \in [n]$, $P_i = \{p_j\}_{j \in [i]}$ be sets of processes. Then
%  \begin{equation*}
%    t = \view_{P_1}(t) \cdot (\view_{P_2}(t) \setminus \view_{P_1}(t)) \cdot \cdots (\view_{P_n}(t) \setminus \view_{P_{n-1}}(t))
%  \end{equation*}
%  \label{cor:traceDecompositionIntoResidues}
%\end{corollary}

\paragraph*{Foata normal form.} The Foata normal form (FNF) of a trace encodes a maximal parallel execution of the trace \cite{CarFoa69}. To define this notion formally, we use the concept of \emph{steps} from~\cite{Diekert-Muscholl-11}. A \emph{step} is a non-empty subset $S \incl \Sigma$ of pairwise independent letters: it is sometimes called a \emph{clique}, from the point of view of the graph of the independence relation. In a step, all letters can be executed in parallel.  Observe that the set of labels of the minimal elements of a trace $t$ provides a maximal step for $t$. Then, the FNF $\Foata(t)$ of a trace $t$ is a sequence of steps $\varphi = S_1 S_2 \cdots S_m$ where $S_1$ is the set of minimal elements of $t$, and $S_2 \cdots S_m = \Foata(t')$ where $t'$ is the trace obtained by removing all minimal elements from $t$. The FNF is a unique decomposition of the trace into maximal steps. 

\begin{example}\label{ex:Foata} The FNF of the trace $[abcacd]$ from Example~\ref{ex:trace} is $\{a\}\{b,c\}\{a,d\}\{c\}$.
\end{example}

We denote by $\varphi_i$ the $i$-th step of the FNF $\varphi$. We shall also denote by $\Foatalength t$, the number of steps in the FNF decomposition of the trace $t$, and we call it the \emph{Foata length of $t$}.
In order to simplify further explanations, we suppose that a step $\varphi_i$ exists as the empty set, for all $i$ greater than the length of $\varphi$: we do not depict this infinite sequence of empty sets when we give examples of FNF.

Interestingly, FNF are increasing with respect to ideals: if a trace $s$ is an
ideal of~$t$, there is an injective correspondence from steps of $s$ to the
$\Foatalength s$ first steps of $t$. Essentially, the FNF $\Foata(s)$ is
obtained by deleting from $\Foata(t)$ the events not in~$s$. %Some events of some steps of $t$ might be missing from $s$, but we still get: 

\begin{restatable}{lemma}{prefixFoata}
  \label{lem:prefixFoata} 
  Let $t$ be a trace and $s$ an ideal of $t$. Then for all $i \leq \Foatalength s$, $\Foata(s)_i \subseteq \Foata(t)_i$.
\end{restatable}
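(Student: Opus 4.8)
The plan is to induct on the Foata length $\Foatalength s$, simultaneously peeling off the first step of the normal form on both sides. By the recursive definition of the FNF, if $s'$ denotes the ideal of $t$ obtained by deleting $\min(s)$ and $t'$ the trace obtained by deleting $\min(t)$, then $\Foata(s)_1$ is the set of labels of $\min(s)$, $\Foata(t)_1$ the set of labels of $\min(t)$, and $\Foata(s)_{i+1}=\Foata(s')_i$, $\Foata(t)_{i+1}=\Foata(t')_i$. (We use here that two distinct events carrying the same label are comparable, since $I$ is irreflexive, so these label sets really are the maximal first steps.) Thus the statement reduces to two claims: first, $\Foata(s)_1\subseteq\Foata(t)_1$; second, $s'$ is an ideal of $t'$ with $\Foatalength{s'}=\Foatalength s-1$, so that the induction hypothesis applies to give $\Foata(s)_{i}\subseteq\Foata(t)_{i}$ for $i\ge 2$. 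The base case $\Foatalength s=0$ is vacuous.

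The single structural fact driving both claims is the identity $\min(s)=\min(t)\cap\events(s)$; in particular $\min(s)\subseteq\min(t)$. To prove it I would argue: if $e\in\min(s)$ and $f\le e$ in $t$, then $f\in s$ by downward-closure of $s$, hence $f=e$ by minimality of $e$ in $s$, whence $e\in\min(t)$; the converse inclusion is immediate because the order on $s$ is the restriction of the order on $t$. Applying $\lambda$ to $\min(s)\subseteq\min(t)$ gives the first claim at once.

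For the second claim, I would note that $t'$ carries $\events(t)\setminus\min(t)$ with the inherited order, and that by the identity above $\events(s')=\events(s)\setminus\min(s)=\events(s)\cap\events(t')$, so $s'$ is a sub-poset of $t'$. Downward-closure of $s'$ inside $t'$ then follows: given $e\in\events(s')$ and $f\in\events(t')$ with $f\le e$, we have $f\in s$ since $s$ is an ideal of $t$, and $f\notin\min(t)$ since $f\in\events(t')$, hence by the identity $f\notin\min(s)$, i.e.\ $f\in\events(s')$. Finally $\Foatalength{s'}=\Foatalength s-1$ is immediate from the recursive definition of FNF (with $s$ nonempty in the inductive step).

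I do not expect a genuine obstacle here; the one point deserving care is the bookkeeping identity $\events(s')=\events(s)\cap\events(t')$ — that ``removing the minimal events of $s$'' coincides with ``restricting $s$ to $t'$'' — which is exactly what $\min(s)=\min(t)\cap\events(s)$ provides, and which is what makes the FNF recursions on the two sides line up.
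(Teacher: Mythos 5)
Your proof is correct, but it takes a genuinely different route from the paper's. The paper also inducts on $\Foatalength{s}$, but it keeps the ambient trace $t$ fixed and peels off the \emph{last} step: writing $\Foata(s)=S_1\cdots S_k$ and $\Foata(t)=T_1\cdots T_m$, it applies the induction hypothesis to the prefix ideal with FNF $S_1\cdots S_{k-1}$ to get $S_i\subseteq T_i$ for $i\leq k-1$, and then places an arbitrary $a\in S_k$ into $T_k$ by sandwiching the index $\ell$ of its step in $t$: an immediate predecessor of $a$ lying in $S_{k-1}\subseteq T_{k-1}$ forces $\ell\geq k$, while an immediate predecessor of $a$ in $T_{\ell-1}$ must belong to $s$ (since $s$ is an ideal) and hence to some $S_{\ell'}$ with $\ell'\leq k-1$, forcing $\ell=\ell'+1\leq k$. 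You instead peel off the \emph{first} step, reducing everything to the identity $\min(s)=\min(t)\cap\events(s)$ together with the observation that the residual $s'$ is again an ideal of the residual $t'$, after which the recursive definition of the FNF does all the work. Your route avoids the $\lessdot$-bookkeeping entirely and is arguably more elementary; its one cost is having to check that ``remove $\min(s)$ from $s$'' and ``restrict $s$ to the events of $t'$'' yield the same labelled poset, which is exactly what your identity on minimal elements provides and which you verify correctly. The paper's version never changes the ambient trace, so it needs no such compatibility check, but pays for that with the two-sided immediate-predecessor argument. Both proofs are sound.
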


\begin{example}
  Consider again the trace $t = [abcacd]$ of Example~\ref{ex:trace}. The FNF of the ideal $s = [abd]$ is $\{a\}\{b\}\{d\}$. 
  %We can \emph{complete}\todo{Arnab: Reviewer 1 asked for a definition of "complete"} it step by step, to obtain the FNF of $t$ given in Example~\ref{ex:Foata}.
   We can extend the ideal $s$ by the remainder of the the trace $t$, $[cac]$ letter by letter. The first letter $c$ commutes with $b$ and $d$ but not with $a$ and thus must belong to the second step. Arguing similarly we see that the next letter $a$ must belong to the third step and the final letter $c$ must be appended in a new step. We thus end up with the FNF of $t$ as expected.
\end{example}

Even more interestingly, if we know the FNF of the views of two (or more) processes $p_1$ and $p_2$, it is possible to deduce the FNF of the view of $\{p_1, p_2\}$: they can accumulate their knowledge simply by taking pairwise unions of the steps. This gives us an algorithm to compute $\view_X(t)$ for some $X \subseteq P$, given $\view_p(t)$ for all $p \in X$. 
In the following, given two FNF $\varphi$ and $\varphi'$, we write $\varphi \cup \varphi'$ for the sequence of steps $(\varphi_1 \cup \varphi'_1) \cdots (\varphi_m \cup \varphi'_m)$ where $m$ is the maximal length of $\varphi$ and $\varphi'$ (remember that we have added empty steps at the end of the FNF so that $\varphi_k$ has a meaning, even for $k$ greater than the length of $\varphi$).

\begin{restatable}{lemma}{viewFoata}
  \label{lem:viewFoata}
  Let $t$ be a trace and $X \subseteq P$. Then, $\Foata(\view_X(t)) = \bigcup\limits_{p\in X} \Foata(\view_p(t))$.
  % For all $i\leq \Foatalength t$, 
  % $\Foata(\view_X(t))_i = \bigcup\limits_{p\in X} \Foata(\view_p(t))_i$.
\end{restatable}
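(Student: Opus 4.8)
The plan is to reduce the multi-process statement to the two-process case and then prove the two-process case by tracking, step by step, which events of $t$ lie in $\view_p(t) \cup \view_q(t)$ versus in $\view_{\{p,q\}}(t)$. Since $\view_X(t)$ is defined as the union $\bigcup_{p \in X}\view_p(t)$ of the underlying event sets (and this union of ideals is again an ideal of $t$), it suffices to show: whenever $s_1$ and $s_2$ are ideals of $t$ with $s_1 \cup s_2$ again an ideal of $t$, one has $\Foata(s_1 \cup s_2) = \Foata(s_1) \cup \Foata(s_2)$ in the sense of the pointwise-union notation introduced above. The general case then follows by a straightforward induction on $|X|$, since a finite union of views is obtained by repeatedly unioning in one more view and the intermediate unions are always ideals of $t$.

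For the core claim, fix an event $e$ of $s_1 \cup s_2$ and let $i$ be the index of the Foata step of $t$ containing $e$, i.e. $e \in \Foata(t)_i$. By Lemma~\ref{lem:prefixFoata}, if $e \in s_j$ then $e$ lies in $\Foata(s_j)_{i'}$ for some $i' \le i$; I first argue $i' = i$. This is the heart of the matter: I claim that for any ideal $s$ of $t$ and any $e \in s$, the Foata-step index of $e$ in $\Foata(s)$ equals the Foata-step index of $e$ in $\Foata(t)$. Indeed, the step index of $e$ in a trace equals $1$ plus the length of the longest chain $e_0 \lessdot e_1 \lessdot \cdots \lessdot e_r = e$ ending at $e$ (equivalently, the height of $e$ in the Hasse diagram). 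Since $s$ is an ideal, every such chain ending at $e$ lies entirely inside $s$, so the set of chains ending at $e$ is the same whether computed in $s$ or in $t$; hence the heights coincide. Applying this to $s = s_1$, $s = s_2$, and $s = s_1 \cup s_2$ simultaneously: $e \in \Foata(s_1 \cup s_2)_i$ iff $e \in s_1 \cup s_2$ and $e$ has height $i{-}1$ in $t$, iff ($e \in s_1$ with height $i{-}1$) or ($e \in s_2$ with height $i{-}1$), iff $e \in \Foata(s_1)_i \cup \Foata(s_2)_i$. This gives $\Foata(s_1 \cup s_2)_i = \Foata(s_1)_i \cup \Foata(s_2)_i$ for every $i$, which is exactly $\Foata(s_1 \cup s_2) = \Foata(s_1) \cup \Foata(s_2)$.

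The main obstacle I anticipate is justifying cleanly the characterisation of the Foata-step index as the Hasse-diagram height, and in particular that this height is computed using only the immediate-successor relation $\lessdot$ restricted to the ideal. Once that "locality of height within an ideal" fact is in hand, everything else is bookkeeping. An alternative that avoids the height characterisation is a direct induction on $\Foatalength{s_1 \cup s_2}$ using the recursive definition of FNF: the first step $S_1$ of any trace is the set of labels of its minimal events, and the minimal events of $s_1 \cup s_2$ are precisely the events minimal in $t$ that lie in $s_1 \cup s_2$, hence those minimal in $s_1$ together with those minimal in $s_2$; removing all minimal events commutes with taking the union of ideals, so the induction hypothesis applies to the residual traces. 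Either route works; I would present the height-based argument as the main proof since it makes the pointwise-union identity transparent, and relegate the inductive version to a remark.
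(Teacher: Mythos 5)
Your proof is correct, and it reaches the same conclusion by a somewhat different route than the paper. The paper's argument is two lines long: since each $\view_p(t)$ is an ideal of $\view_X(t)$, Lemma~\ref{lem:prefixFoata} gives $\Foata(\view_p(t))_i \subseteq \Foata(\view_X(t))_i$ for every $i$ (the padding by empty steps handling large $i$), which yields one containment; the reverse containment follows because every event of $\view_X(t)$ lies in some $\view_p(t)$, and again by Lemma~\ref{lem:prefixFoata} it sits at the same step index there. You instead prove a stronger and cleaner invariant --- the Foata step index of an event equals its height in the Hasse diagram, and this height is computed identically in any ideal containing the event because chains ending at $e$ are contained in every downward-closed set containing $e$ --- from which both containments drop out simultaneously, and from which Lemma~\ref{lem:prefixFoata} itself would also follow. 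Your "locality of height" worry is easily discharged: the covering relation of an ideal is exactly the restriction of the covering relation of $t$ (no new covering pairs can appear, since any intermediate event $z$ with $x<z<y$ and $y$ in the ideal is itself in the ideal), and the longest $<$-chain ending at $e$ is always realised by a saturated, hence $\lessdot$-, chain. Two small remarks: your reduction to pairwise unions and induction on $|X|$ is unnecessary overhead, since the height argument applies verbatim to the full union $\view_X(t)=\bigcup_{p\in X}\view_p(t)$, which is already an ideal of $t$; and both your proof and the paper's implicitly pass between events and letters inside a step, which is harmless because two equally-labelled events are dependent and hence never share a step. On balance your version buys self-containedness and transparency at the cost of re-deriving what the paper already packaged into Lemma~\ref{lem:prefixFoata}.
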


\begin{example}
  Continuing Example~\ref{ex:views}, we have $\Foata(\view_{p_1}(t)) = \{a\}\{b,c\}\{a\}$ and $\Foata(\view_{p_3}(t)) = \{a\}\{b\}\{d\}$, from which we can indeed deduce $\Foata(\view_{\{p_1, p_3\}}(t)) = \{a\}\{b,c\}\{a,d\}$.
\end{example}

\paragraph*{Regular Trace-Closed Languages and Asynchronous Automata.}
A language $L \subseteq \Sigma^*$ is said to be \emph{trace-closed} (for the independence relation $I$) if for all $w, w' \in \Sigma^*$ such that $w \in L$ and $w' \sim w$, we have $w' \in L$.
A trace-closed language is \emph{regular} if it is accepted by a finite state automaton $\A = (Q, \Sigma, Q_0, \Delta, Q_f)$, where $Q$ is the set of states, $Q_0$ are the initial states, $Q_f$ the final ones and $\Delta\subseteq Q\times \Sigma\times Q$ are the transitions. We denote by $q\xrightarrow a q'$ the transition $(q, a, q') \in \Delta$. In the following, we let $\Delta(q, w)$ be the set of states $q'$ such that there is a sequence of transitions $q=q_0\xrightarrow {a_0} q_1 \xrightarrow {a_1} \cdots \xrightarrow{a_{n-1}} q_{n} = q'$, if $w=a_0 a_1 \ldots a_{n-1}$. As usual, the language recognised by $\A$ is the set of words $w$ such that there exists $q\in Q_0$ with $\Delta(q, w)\cap Q_f\neq \emptyset$.
Trace-closed regular languages are recognised by automata having a special syntactical property, the \emph{diamond property}. 

\begin{definition} For a concurrent alphabet $(\Sigma, I)$, a finite state automaton $\A = (Q, \Sigma, Q_0, \Delta, Q_f)$ satisfies the \emph{diamond property} if for all $q, q', q'' \in Q$ and $(a, b) \in I$ such that $q \xrightarrow{a} q' \xrightarrow b q''$, there exists $q'''\in Q$ such that $q \xrightarrow{b} q''' \xrightarrow a q''$. 
\end{definition}

The diamond property is a sufficient condition for an automaton to recognise a trace-closed language. Indeed, if $\A$ is a finite state automaton satisfying the diamond property then for all $q\in Q$ and $w, w' \in \Sigma^*$ such that $w \sim w'$, we have $\Delta(q, w) = \Delta(q, w')$.
%
% \begin{proof}
%   By definition of trace equivalence, there exists $v_1, v_2, \dots v_k \in \Sigma^*$ such that $v_1 = w$ and $v_k = w'$ and for each $i \in [k-1]$, there exists $u_i, u_i' \in \Sigma^*$ and $a_i, b_i \in \Sigma$ such that $(a_i, b_i) \in I$, and $v_i = u_ia_ib_iu_i'$ and $v_{i+1} = u_ib_ia_iu_i'$. We prove by induction on $k$ that the lemma holds for all $v_i$. The base case where $k = 0$ is trivial. We assume it holds for all $l \le j$, and prove that it holds for $j+1$. Let $q \xrightarrow{u_j} q_1 \xrightarrow{a_jb_j} q_2 \xrightarrow{u_j'} q'$ be a run in $A$ on $v_j$. Since $A$ satisfies the diamond property and $(a_i, b_i) \in I$, there exists a run $q_1 \xrightarrow{b_ja_j} q_2$ in $A$ and thus, a run $q \xrightarrow{u_j} q_1 \xrightarrow{b_ja_j} q_2 \xrightarrow{u_j'} q'$ in $A$ on $v_{j+1}.$
% \end{proof}
%
%
In particular, $\Delta(q, t)$ is well-defined even for traces $t$, since every linearisation of $t$ goes to the same state. In the following, we let $\Lt(\A)$ be the set of traces accepted by $\A$.

A finite state automaton $\A = (Q, \Sigma, Q_0, \Delta, Q_f)$ is said to be deterministic if $Q_0$ is a singleton, and for all $q\in Q$, and $a\in \Sigma$, $\Delta(q, a)$ is of cardinality at most $1$. We generally write $\A = (Q, \Sigma, q_0, \delta, Q_f)$ with $Q_0 = \{q_0\}$, and $\delta(q, a) = q'$ for all $(q, a, q')\in \Delta$.
Every finite state automaton satisfying the diamond property can be transformed into an equivalent deterministic finite state automaton (DFA) satisfying the diamond property. %\todo{Do we have a reference?}

% \begin{proposition}
%   If a (non-deterministic) finite state automaton $A = (Q, \Sigma, q_0, \Delta, F)$ satisfies the diamond property then $L(A)$ is trace-closed.
%   \label{prop:diaproptrclosed}
% \end{proposition}

% \begin{proof}
%   From Proposition \ref{prop:diapropruns}, it follows that for all $w, w' \in \Sigma^*$ such that $w \in L(A)$ and $w' \sim w$, if there exists $q \in F$ such that $q \in \hat{\Delta}(q_0, w)$ then $q \in \hat{\Delta}(q_0, w')$ and thus $w' \in L(A)$.
% \end{proof}

Zielonka's theorem aims at distributing a trace-closed regular language to the individual processes. This can be formally stated with (deterministic) asynchronous automata.

\begin{definition}
  An asynchronous automaton (AA) over the distributed alphabet $(\Sigma, \loc)$ is a tuple $\B = ((Q_p)_{p \in P}, \Sigma, q^0, (\delta_a)_{a \in \Sigma}, F)$ where
  \begin{itemize}
    \item $Q_p$ is the set of local states of a process $p \in P$,
    \item $\delta_a\colon \Pi_{p \in \loc(a)} Q_p \to \Pi_{p \in \loc(a)} Q_p$ is the transition function associated with $a \in \Sigma$,
    \item $q^0 \in \Pi_{p \in P} Q_p$ is the global initial state, and
    \item $F \subseteq \Pi_{p \in P} Q_p$ is the set of global final states.
  \end{itemize}
  The AA is said to be \emph{finite} if each process has a finite number of states.
  \label{def:asynchronousAutomata}
\end{definition}

% \sriin{Transitions: should it not be $\delta_a \incl \Pi_{p \in \loc(a)} Q_p \times \Pi_{p \in \loc(a)} Q_p$? With this definition, we need to clarify what a deterministic AA is.}\todo{Ben: my point was to only define deterministic AA, since this is the only object we use afterwards. I added a parenthesis before the definition: is the intention clearer?}

% \sriin{Yes, I think this is better. Here was my confusion: the definition also requires the AA to be complete. Initially, it was not clear to me whether it is easy to complete an AA. I guess we simply need to add a sink state to every process. Also it was not immediate to me whether we are constructing a complete AA later. But if completion is not an issue then it's ok.}

The semantics of an AA is given by a DFA whose set of states is the product $\Pi_{p\in P} Q_p$. We call these states the \emph{global states} of $\B$. Its initial global state is $q^0$, and its final global states are given by $F$. Moreover, for each letter $a\in \Sigma$, we let $(q_p)_{p\in P} \xrightarrow a (q'_p)_{p\in P}$ if for all $p'\notin\loc(a)$, $q'_p = q_p$, and $\delta_a((q_p)_{p\in \loc(a)}) = (q'_p)_{p\in \loc(a)}$. We can show that this DFA satisfies the diamond property, and thus, we let $\Lt(\B)$ be its trace language, and say that the AA~$\B$ recognises $\Lt(\B)$.

\begin{theorem}[\cite{Zie87}]
  For every DFA $\A$ over the concurrent alphabet $(\Sigma, I)$ satisfying the diamond property, and every distributed alphabet $(\Sigma, \loc)$ such that $I_\loc = I$, there exists a finite AA $\B$ over $(\Sigma, \loc)$ such that $\Lt(\A) = \Lt(\B)$.
\end{theorem}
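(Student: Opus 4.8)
The plan is to follow the time-stamping (``gossip'') route to Zielonka's theorem. The state reached by an AA after reading a trace $t$ is, process by process, determined by the \emph{views} $\view_p(t)$, so we want each $Q_p$ to consist of a finite abstraction of $\view_p(t)$ that is rich enough to (i) recover the residue function $\rho_p^{\,t}\colon Q\to Q$, $q\mapsto\delta(q,\view_p(t))$ — well-defined on the trace $\view_p(t)$ because $\A$ has the diamond property, and ranging over the finite set $Q^Q$ — and (ii) when $p$ synchronises with the other processes of $\loc(a)$ on a letter $a$, merge the participating views into $\view_{\loc(a)}(t)\cdot a$ and compute the updated abstractions. Item (i) alone is cheap; the real work is (ii): when $p$ and $q$ pool their views, the union is $\view_p(t)\cup\view_q(t)$ with the two ``new'' parts mutually concurrent, so the merged residue is governed by the residue of the common part $\view_p(t)\cap\view_q(t)$, and neither $\rho_p^{\,t}$ nor $\rho_q^{\,t}$ determines that common residue. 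This is exactly the secondary-knowledge obstruction that makes the theorem hard.

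To resolve (ii) I would run, as an auxiliary deterministic AA in parallel, a \emph{gossip}/time-stamping automaton: each process maintains, for every ordered pair of processes $(q,r)$, a bounded comparison of how $p$'s latest information about $r$ relates to $q$'s latest information about $r$, realised as a consistent family of time-stamps taken modulo a bounded value and attached to the latest $\Sigma_r$-events visible to each process. The key lemma — Zielonka's insight, presented via gossip in~\cite{MukSoh97,Madhavan-Notes-Zielonka} — is that these comparisons can be updated using only local data at each synchronisation and never need more than finitely many values, the bound being exponential in $|P|$. Armed with time-stamps, the processes in $\loc(a)$ can, at the moment they synchronise, determine precisely which events they already share and which are new, hence reconstruct the FNF of $\view_{\loc(a)}(t)$ (by Lemma~\ref{lem:prefixFoata} the $i$-th step of a view sits inside the $i$-th step of the full trace, so step indices are globally consistent, and by Lemma~\ref{lem:viewFoata} the view of a set of processes is the stepwise union of the individual views once the overlap is identified) together with the appropriate residue functions, then post-compose with the transition for $a$ and adopt the result; the gossip components are updated in the same step. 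Accordingly, each process $p$ stores $\rho_p^{\,t}$ together with a suitably bounded family of residue functions indexed by its time-stamped frontier, plus the gossip state.

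Putting this together, $\B$ has $Q_p$ = (bounded family of residue functions of $p$) $\times$ (gossip states of $p$), with $\delta_a$ implementing the merge-and-update just described, $q^0$ the tuple of identity residues and empty gossip, and $F$ the set of global states $(s_p)_{p\in P}$ such that merging \emph{all} the stored residue functions — legitimate because the per-process gossip components jointly pin down every pairwise overlap, and because every event of $t$ lies below $\max_p(t)$ for some $p$ in its support, so $\view_P(t)=t$ — and applying the result to $q_0$ lands in $Q_f$; this makes $F$ well-defined as a subset of $\Pi_{p\in P}Q_p$ and ensures $\Lt(\B)=\Lt(\A)$. Correctness is an induction on $|t|$ with invariant ``after reading $t$, the residue components of $p$ are correct for $\view_p(t)$ and the gossip component is correct'': the inductive step for reading $a$ uses the diamond property to show that the participants' new residues match $\rho^{\,t\cdot a}_{\cdot}$, while the views, and hence states, of non-participating processes are unchanged. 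Finiteness is immediate from the finiteness of $Q^Q$ and of the gossip state space, and the resulting AA is deterministic since each $\delta_a$ is a function. The main obstacle is precisely the gossip/time-stamping layer: proving that a \emph{bounded} amount of secondary information lets processes reconcile their views is the genuinely hard part, everything else being bookkeeping around the diamond property and Lemmas~\ref{lem:prefixFoata}--\ref{lem:viewFoata}.
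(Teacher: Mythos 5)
First, a point of reference: the paper does not prove this theorem --- it is stated purely as a citation of Zielonka's original result, and the paper's own technical work (Theorem~\ref{thm:fair-AA}, proved in \Cref{app:full-proof-fairness}) is a \emph{different} construction that applies only to $k$-fair specifications and is designed precisely to avoid the gossip machinery you invoke. So there is no in-paper proof to compare against. Your proposal follows the classical gossip/time-stamping route of \cite{MukSoh97,Madhavan-Notes-Zielonka}, whereas the paper's route (keep only a bounded suffix of the FNF of each view, plus a counter modulo $2k$) works only because fairness bounds how far the views of two processes can diverge; it does not generalise to arbitrary diamond DFAs, which is exactly why the general theorem needs the heavier machinery you describe.

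As a proof, however, your proposal has a genuine gap, and you name it yourself. The entire mathematical content of Zielonka's theorem is the claim that each process can maintain, with \emph{finitely many} values updated using only the local data of the synchronising processes, enough pairwise comparison information (``for every pair $(q,r)$, whose latest information about $r$ is fresher --- $p$'s or $q$'s'') to identify, at each shared action, the overlap $\view_p(t)\cap\view_q(t)$ and the residue functions of the relevant intermediate ideals. You state this as ``the key lemma'' and defer it entirely to the references: the consistency of the time-stamps, the fact that they can be recycled within a bounded range without ambiguity, and the fact that the secondary information (residues indexed by the time-stamped frontier) is simultaneously sufficient to compute the merged residue and closed under purely local updates are all asserted, not established. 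Everything else in your argument --- residue functions as elements of the finite set $Q^Q$, well-definedness of $\delta(q,\cdot)$ on traces via the diamond property, acceptance by merging all residues since $\view_P(t)=t$ --- is correct but is the easy bookkeeping surrounding that lemma. A smaller issue: you describe the participants as ``reconstruct[ing] the FNF of $\view_{\loc(a)}(t)$'' via Lemmas~\ref{lem:prefixFoata} and~\ref{lem:viewFoata}, but in the unrestricted setting the FNFs of two views can differ in unboundedly many steps, so no finite local state can store or reconstruct them; what the processes actually compute is the residue function of the joint view from the residues of the parts and of their intersection, and the Foata normal form plays no essential role outside the fair case treated by the paper.
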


\section{Fair specifications}\label{sec:fair}

Fairness is an important condition in the verification of distributed systems, that imposes conditions on distributed runs such that no process starves in the long run. In our situation where finite-state automata and finite traces are considered, we strengthen the fairness condition to obtain a notion of \emph{bounded-fairness}: we require that no process lags behind the other ones more than a bounded number of steps. Formally, for a positive integer $k$, a trace $t$ is $k$-\emph{fair} if for every factor $u$ of any linearisation $w$ of $t$, that has length at least $k$, we have $\loc(u) = P$ (where $P$ is the set of all processes).

\begin{example}
  Consider the trace $[abcacd]$ over the distributed alphabet of Example~\ref{ex:trace}. Notice that in every factor of the linearisation $abcacd$ with length at least 4, all processes participate. This property can be verified to be true in every linearisation. Hence $[abcacd]$ is $4$-fair. However, $[abcacd]$ is not $3$-fair since the factor $cac$ in the linearisation $abcacd$ does not involve process~$p_3$.
\end{example}

%\subparagraph*{Properties of fair traces.}
As intended, the $k$-fairness of a trace ensures that views of different processes do not differ too much.

\begin{restatable}{lemma}{pairOfProcsAtmostKApart}
  \label{lem:pairOfProcsAtmostK-1Apart}
  Let $t$ be a $k$-fair trace. For all pairs of processes $p, p' \in P$, $|\length{\view_p(t)} - \length{\view_{p'}(t)}| \leq k-1$.
\end{restatable}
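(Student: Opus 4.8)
The statement to prove is Lemma~\ref{lem:pairOfProcsAtmostK-1Apart}: for a $k$-fair trace $t$ and any two processes $p,p'$, we have $|\,\length{\view_p(t)} - \length{\view_{p'}(t)}\,| \le k-1$.

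The plan is to argue by contradiction. Suppose, without loss of generality, that $\length{\view_p(t)} - \length{\view_{p'}(t)} \ge k$. The idea is to exhibit a linearisation of $t$ with a factor of length at least $k$ that does not involve $p'$, contradicting $k$-fairness. To build such a linearisation, I would use the fact that $\view_p(t)$ is an ideal of $t$ (by definition, as a downward-closed set), and likewise $\view_{p'}(t)$ is an ideal; moreover $\view_{p'}(t) \subseteq \view_p(t)$ is false in general, so I first need to pass to the joint view or to the intersection. A cleaner route: let $J = \view_{p'}(t)$ and $J' = \view_p(t)$. Consider the ideal $J \cup J' = \view_{\{p,p'\}}(t)$. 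Since both $J$ and $J'$ are ideals, so is $J \cup J'$, and we can pick a linearisation $w = w_0 w_1 w_2$ of $t$ where $w_0$ is a linearisation of $J \cap J'$ (an ideal), then $w_0 w_1$ is a linearisation of the ideal $J'= \view_p(t)$, and then $w_0 w_1 w_2'$ extends to all of $J \cup J'$ — the point being to order events so that all events of $J' \setminus J$ come before any event of $J \setminus J'$.

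The key combinatorial observation is this: any event $e$ in $J' \setminus J = \view_p(t) \setminus \view_{p'}(t)$ cannot be labelled in $\Sigma_{p'}$. Indeed, if $\lambda(e) \in \Sigma_{p'}$ and $e \in \view_p(t)$, then $e \le \max_p(t)$; but then $e$ is an event labelled in $\Sigma_{p'}$, and since $\max_{p'}(t)$ is the largest such event, either $e \le \max_{p'}(t)$ (putting $e \in \view_{p'}(t)$, contradiction) or $e$ and $\max_{p'}(t)$ are incomparable — but two events both labelled in $\Sigma_{p'}$ have dependent labels (they share process $p'$), hence are comparable by the first trace axiom; so $e \le \max_{p'}(t)$, contradiction. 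Therefore $J' \setminus J$ contains no $\Sigma_{p'}$-labelled event. Now I order a linearisation of $J \cup J'$ so that the block of events $J' \setminus J$ appears consecutively: this is possible because $J \cap J'$ is a downward-closed subset of the ideal $J'$, so we can linearise $J \cap J'$ first, then $J' \setminus J$, then the rest $(J \cup J') \setminus J'$, and finally the rest of $t$; downward-closedness of $J \cap J'$ inside $J'$ guarantees no event of $J' \setminus J$ is below an event we want to place earlier. This consecutive block $J' \setminus J$ has length $\length{J'} - \length{J \cap J'} \ge \length{J'} - \length{J} \ge k$ and consists entirely of events not labelled in $\Sigma_{p'}$, giving a factor of length $\ge k$ with $p' \notin \loc(\text{factor})$, contradicting $k$-fairness.

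The main obstacle is the bookkeeping to justify that one can choose a linearisation of $t$ in which the events of $\view_p(t) \setminus \view_{p'}(t)$ form a contiguous factor. This requires checking that $\view_p(t) \cap \view_{p'}(t)$ is an ideal (intersection of ideals is an ideal — routine), that the union $\view_p(t) \cup \view_{p'}(t)$ is an ideal (it is $\view_{\{p,p'\}}(t)$ by definition of the view of a set of processes), and then invoking the linearisation/ideal correspondence stated in the "Views" paragraph (every ideal is the prefix-trace of some linearisation) iteratively to peel off $J \cap J'$, then $J'$, then $J \cup J'$, then $t$. One subtlety to handle carefully: when we extend the linearisation of ideal $J'$ to a linearisation of ideal $J \cup J'$, the newly added events are exactly $J \setminus J'$, and none of them can be inserted "into the middle" because $J'$ is an ideal — so the $J' \setminus J$ block really stays contiguous. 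Once this structural fact is in place, the length count and the labelling observation finish the proof immediately.
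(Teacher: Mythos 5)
Your proof is correct, but it takes a genuinely different route from the paper's. The paper compares each view to the \emph{whole trace}: writing $t=\view_p(t)\,s$, it observes that no event of the suffix $s$ is labelled in $\Sigma_p$ (any $\Sigma_p$-labelled event is comparable to, hence below, $\max_p(t)$), so $k$-fairness forces $|s|\le k-1$; thus $0\le |t|-|\view_p(t)|\le k-1$ for every $p$, and the bound on $|\,|\view_p(t)|-|\view_{p'}(t)|\,|$ follows by triangulation. You instead argue by contradiction, comparing the two views directly: you isolate $\view_p(t)\setminus\view_{p'}(t)$ as a contiguous factor of a linearisation (using that $\view_p(t)\cap\view_{p'}(t)$ and $\view_p(t)$ are both ideals, so one can linearise block by block), note that this block contains no $\Sigma_{p'}$-labelled event, and derive a violation of $k$-fairness when its size is $\ge k$. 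Both arguments are sound; your key labelling observation (an event of $\view_p(t)$ labelled in $\Sigma_{p'}$ must lie below $\max_{p'}(t)$ by dependence of letters sharing a process) is exactly right, and the block-linearisation is legitimate since every prefix of your block sequence is an ideal. What the paper's approach buys is brevity and a reusable intermediate fact --- every view is within $k-1$ letters of the full trace --- at the cost of invoking the decomposition $t=\view_p(t)\,s$; what yours buys is a direct two-view comparison with no reference to $|t|$, at the cost of the linearisation bookkeeping you yourself flag as the main obstacle.
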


% \begin{remark}\todo{I do not think we need to state this remark... but if we do, we must rewrite it with new notations}
%   If $t = t_1 \dots t_m$ is a trace in Foata normal form, then for all $a_m \in t_m$ there exists $a_i \in t_i$ for all $i \in [m-1]$ such that there is a path $a_1a_2\dots a_m$ in $\langle t \rangle$, the d-graph of $t$.
%   \label{rem:mPathInmSizeFoataNF}
% \end{remark}

The situation is indeed even better. The view of a process, when written in FNF, coincides with the whole trace, except in the last steps that contain at least $k-1$ letters in total. To describe such a property more easily, we define a measure on traces: for a natural number $\ell$ and a trace $t$ of length at least $\ell$, we let $\f(t, \ell)$ be the largest natural number $i$ such that the steps $\Foata(t)_i \cdots \Foata(t)_{\Foatalength{t}}$ contain at least $\ell$ letters in total. Notice in particular that, by maximality of $i$, the steps $\Foata(t)_{i+1} \cdots \Foata(t)_{\Foatalength{t}}$ contain in total at most $\ell-1$ letters.

%The situation is indeed even better: each process views the whole Foata normal form of the trace except in the last $k-1$ steps, which it might not know entirely.
\begin{restatable}{lemma}{completeTraceBeyondKFoataComponents}
  \label{lem:completeTraceBeyondKFoataComponents}
    Let $t$ be a $k$-fair trace and $p \in P$ be a process with a view of length at least $k-1$. For all $i < \f(\view_p(t), k-1)$, $\Foata(t)_i = \Foata(\view_p(t))_i$.
\end{restatable}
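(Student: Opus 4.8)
The plan is to pare the statement down, using Lemma~\ref{lem:prefixFoata}, to a purely combinatorial claim about Foata layers, and then prove that claim by exhibiting a long process‑starved factor whenever it fails. Throughout, write $s=\view_p(t)$ (an ideal of $t$), let $r=\events\setminus s$ be the events of $t$ lying outside $p$'s view (so $r$ is upward closed, being the complement of the ideal $s$), and set $j=\f(s,k-1)$, which is well defined since $|s|\geq k-1$. Call the \emph{layer} of an event $e$ of a trace the index $i$ such that $e$ belongs to the $i$-th Foata step (equivalently, the length of a longest chain of the trace ending at $e$). Since $s$ is an ideal, every chain of $t$ ending at an event $g\in s$ stays inside $s$, so $g$ has the same layer in $s$ and in $t$; hence $\Foata(s)_i$ is exactly the set of labels of the layer‑$i$ events of $t$ that lie in $s$. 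This re‑proves the inclusion $\Foata(s)_i\subseteq\Foata(t)_i$ of Lemma~\ref{lem:prefixFoata}, with equality precisely when no event of $r$ has layer $i$. So it suffices to show: \emph{every event of $r$ has layer at least $j$ in $t$}. The cases $r=\emptyset$, $j=1$, and $k=1$ (where every letter involves all processes, forcing $s=t$ and $r=\emptyset$) are trivial, so from now on assume $r\neq\emptyset$, $j\geq 2$, $k\geq 2$.

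The core of the argument is to fix an event $f_0\in r$ whose layer $i_0$ is minimal among all events of $r$, and to prove $\sum_{\ell\geq i_0}|\Foata(s)_\ell|\leq k-1$. First, any event $g\in s$ of layer $\geq i_0$ must be incomparable with $f_0$ in $t$: it cannot be below $f_0$ (its layer would then be $<i_0$), it cannot be above $f_0$ (since $r$ is upward closed, this would place $g$ in $r$), and it is not $f_0$ itself; hence $g\notin\ideal{f_0}$. Thus all layer‑$\geq i_0$ events of $s$ lie in $s\setminus\ideal{f_0}$, and it remains to bound $|s\setminus\ideal{f_0}|$. Here I would use that $f_0$, being of minimal layer in $r$, is a minimal element of the upward closed set $r$, so that no event strictly below $f_0$ lies in $r$, and therefore $s\cup\{f_0\}$ is again an ideal of $t$. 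Linearise $t$ by listing first the ideal $\ideal{f_0}\cap s$, then the remaining events of the ideal $s$ (that is, $s\setminus\ideal{f_0}$), then $f_0$, then the rest of $r$: this is a valid linearisation since its prefixes after each block are the ideals $\ideal{f_0}\cap s\subseteq s\subseteq s\cup\{f_0\}\subseteq\events$, and the middle block $s\setminus\ideal{f_0}$ is a contiguous factor. Every letter occurring in that factor labels an event incomparable with $f_0$, hence is independent of $\lambda(f_0)$, so the factor omits every process of $\loc(\lambda(f_0))$, which is non‑empty. By $k$-fairness this factor has length at most $k-1$, which gives the claimed bound.

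To conclude, suppose for contradiction that $i_0\leq j-1$. Since $1\leq j\leq\Foatalength s$ (the latter because steps of index exceeding $\Foatalength s$ are empty, while $\f(s,k-1)$ needs at least $k-1\geq 1$ letters), all the indices $i_0,\dots,j-1$ lie in $\{1,\dots,\Foatalength s\}$, so the corresponding Foata steps of $s$ are non‑empty and contribute at least $j-i_0\geq 1$ letters; and by definition of $j=\f(s,k-1)$ the steps of index $\geq j$ contribute at least $k-1$ letters. Adding these, $\sum_{\ell\geq i_0}|\Foata(s)_\ell|\geq (j-i_0)+(k-1)\geq k$, contradicting the bound $\leq k-1$ of the previous paragraph. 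Hence $i_0\geq j$, every event of $r$ has layer $\geq j$ in $t$, and the lemma follows.

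The delicate point I would double‑check most carefully is the constant in the middle paragraph: one must obtain $|s\setminus\ideal{f_0}|\leq k-1$ rather than a weaker bound such as $2(k-1)$, and the ``$+1$'' of slack in the final inequality is exactly what makes the argument work. Both hinge on choosing $f_0$ at the \emph{minimal} layer of $r$: this is what makes $s\cup\{f_0\}$ an ideal (so that $s\setminus\ideal{f_0}$ genuinely appears as a factor of a linearisation of $t$) and simultaneously forces the layers $i_0,\dots,j-1$ of $s$ to be non‑empty.
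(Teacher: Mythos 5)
Your proof is correct and is essentially the paper's argument: both locate the first Foata layer ($\alpha$ in the paper, your $i_0$) at which $\view_p(t)$ misses an event of $t$, exhibit a factor of a linearisation of $t$ (for you $s\setminus\ideal{f_0}$, for the paper the suffix $\Foata(\view_p(t))_\alpha\cdots\Foata(\view_p(t))_{\Foatalength{\view_p(t)}}$) in which some process of the missing event's letter is starved, and then combine the resulting $k$-fairness bound of $k-1$ letters with the definition of $\f(\cdot,k-1)$ to force $i_0\geq j$. The differences are presentational: you work with events and ideals where the paper works with letters and steps, and your final count makes explicit the non-emptiness of the intermediate steps that the paper leaves implicit in the line ``thus $j\leq\alpha$''.
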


\begin{example}
  The bound given in Lemma~\ref{lem:completeTraceBeyondKFoataComponents} is optimal in the following sense. Consider the alphabet $\{a, b\}$ with two processes $p_1$ and $p_2$ such that $\loc(a)= \{p_1\}$, $\loc(b) = \{p_2\}$. The trace $t$ whose linearisation is $b a^{k-1}$ is $k$-fair, and the view of process $p_1$ has linearisation $a^{k-1}$ of length $k-1$, with FNF where all steps are singletons. The FNF of $t$ is $\{a, b\} \{a\}^{k-2}$, and thus $p_1$ is not fully aware of the first step $\{a, b\}$. In particular, $\f(\view_{p_1}(t), k-1) = 1$, and indeed the first step is such that $\Foata(t)_1 \neq \Foata(\view_{p_1}(t))_1$.
\end{example}

We need a slightly more refined result in the following, since the full trace is known by no process, and thus we need to understand for a process $p$, what guarantee it has on the view of another process $p'$, to enable a successful synchronisation between them.

\begin{restatable}{lemma}{everyoneKnowsBeyondKFoataComponents}
  \label{lem:everyoneKnowsBeyond2KFoataComponents}
  Let $t$ be a $k$-fair trace and $p \in P$ be a process with a view of length at least $2k-2$. Then for all $i < \f(\view_p(t), 2k-2)$ and for all $p' \in P$, $\Foata(t)_i = \Foata(\view_{p'}(t))_i$.
\end{restatable}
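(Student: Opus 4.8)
The plan is to deduce the statement from Lemma~\ref{lem:completeTraceBeyondKFoataComponents} applied to the \emph{other} process $p'$, once we know that the region on which $\view_{p'}(t)$ is ``complete'' (i.e.\ agrees step-for-step with $t$) is at least as large as the region claimed for $p$. Throughout, write $A = \view_p(t)$ and $B = \view_{p'}(t)$, and set $n = \f(A, 2k-2)$, which is well-defined since $\length{A} \geq 2k-2$ by hypothesis; the goal is then to show $\Foata(t)_i = \Foata(B)_i$ for every $i < n$.

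First I would collect two preliminary observations. From $k$-fairness and Lemma~\ref{lem:pairOfProcsAtmostK-1Apart} we get $\length{A} - \length{B} \leq k-1$, hence $\length{B} \geq \length{A} - (k-1) \geq k-1$, so Lemma~\ref{lem:completeTraceBeyondKFoataComponents} may legitimately be applied to the process $p'$. Second, since increasing the second argument of $\f$ can only decrease its value, $n = \f(A, 2k-2) \leq \f(A, k-1)$; therefore, applying Lemma~\ref{lem:completeTraceBeyondKFoataComponents} to $p$ itself gives $\Foata(t)_i = \Foata(A)_i$, and in particular $|\Foata(t)_i| = |\Foata(A)_i|$, for all $i < n$.

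The core of the proof is the inequality $\f(B, k-1) \geq n$. By the definition of $n$, the steps $\Foata(A)_n, \Foata(A)_{n+1}, \dots$ contain at least $2k-2$ letters in total, so $\sum_{i < n} |\Foata(A)_i| \leq \length{A} - (2k-2)$. Since $B$ is an ideal of $t$, Lemma~\ref{lem:prefixFoata} gives $|\Foata(B)_i| \leq |\Foata(t)_i|$ for every $i$, and combining this with the second observation above yields $|\Foata(B)_i| \leq |\Foata(A)_i|$ for $i < n$; summing, $\sum_{i<n}|\Foata(B)_i| \leq \length{A} - (2k-2)$. Consequently the steps of $B$ from index $n$ onwards contain $\length{B} - \sum_{i<n}|\Foata(B)_i| \geq \length{B} - \length{A} + (2k-2) \geq -(k-1) + (2k-2) = k-1$ letters in total, which is precisely the statement that $\f(B, k-1) \geq n$. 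Finally, for any $i < n \leq \f(B, k-1)$, Lemma~\ref{lem:completeTraceBeyondKFoataComponents} applied to $p'$ gives $\Foata(t)_i = \Foata(B)_i = \Foata(\view_{p'}(t))_i$, which is what we wanted.

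I expect the only delicate point to be the counting step: the useful insight is that on the early steps $i < n$ one may use, simultaneously, that $A$'s $i$-th step \emph{equals} $t$'s $i$-th step and that $B$'s $i$-th step is \emph{contained in} $t$'s $i$-th step, so $B$'s early steps are bounded by $A$'s early steps --- this is exactly what lets the large ``tail mass'' of $A$ (at least $2k-2$ letters) be transferred to $B$ (at least $k-1$ letters, after paying the $k-1$ length gap between the two views, which is where the factor two in $2k-2$ comes from). The remaining steps are bookkeeping with the definition of $\f$ and its monotonicity in the second argument, plus the routine checks that each invocation of Lemma~\ref{lem:completeTraceBeyondKFoataComponents} meets its hypothesis on the length of the view.
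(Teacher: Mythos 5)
Your proof is correct and follows essentially the same route as the paper's: both first apply Lemma~\ref{lem:completeTraceBeyondKFoataComponents} to $p$, then use a counting argument (trading the $2k-2$ tail mass of $\view_p(t)$ against the $k-1$ length gap from Lemma~\ref{lem:pairOfProcsAtmostK-1Apart}) to verify the hypothesis for a second application of Lemma~\ref{lem:completeTraceBeyondKFoataComponents} to $p'$. The only cosmetic difference is that the paper performs this second application inside the sub-trace $\view_p(t)$ (via $\view_{p'}(\view_p(t))$) and must then relate it back to $\view_{p'}(t)$, whereas you apply it directly to $t$, which slightly streamlines the conclusion.
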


Thus, for a $k$-fair trace $t$, the views of two different processes $\view_p(t)$ and $\view_{p'}(t)$, when considered in their FNF, have an identical prefix, and differ only in the last few steps. % (in which there are at most $2k-2$ letters). 
The identical prefix in fact corresponds to the given full trace $t$. In our construction of an AA, we use this fact crucially to maintain only a finite suffix of the trace in the local states.

%Thus, in a $k$-fair trace, the views of two processes cannot differ more than in the last steps of their Foata normal form containing $2k-2$ letters at least in total. All processes know about a sufficiently long prefix of the trace (and the rest can thus safely be forgotten, which will be the crux of our construction in the next sections).

\begin{example}
  Once again, the result of Lemma~\ref{lem:everyoneKnowsBeyond2KFoataComponents} is optimal. Consider an extension of the previous example where $\Sigma=\{a, b, c, d\}$, $\loc(a) = \{p_1\}$, $\loc(b)= \{p_2\}$, $\loc(c) = \{p_1, p_2\}$, and $\loc(d) = \{p_1, p_3\}$. Consider the trace $t = b a^{k-2} d c a^{k-2}$, whose FNF is $\{a, b\} \{a\}^{k-3} \{d\}\{c\}\{a\}^{k-2}$. The views of the 3 processes are: $\view_{p_1}(t) = t$, $\view_{p_2}(t) = b a^{k-2} d c$, and $\view_{p_3}(t)= a^{k-2} d$. The trace $t$ is $k$-fair: indeed all factors of any linearisation of length $k$ either pick both the letters $d$ and $b$, or the letters $d$ and $c$. Notice that it is not $(k-1)$-fair, since it contains the factor $a^{k-2}d$ in which process $p_2$ does not participate. Finally, $p_1$ has a view of length $2k-1$, and $f(\view_{p_1}(t), 2k-2) = 1$. Indeed, $p_3$ is not fully aware of the first step of the FNF of $t$.
\end{example}

%\subparagraph*{Fair trace languages and automata.}
A DFA satisfying the diamond-property is said to be \emph{fair} if in every
loop, all processes participate in at least one action. Since this must be true
already on loops without repetition of states, we can refine this definition to
introduce a parameter $k$ of fairness, as before. We first do it on the
languages: a trace language $X$ is $k$-fair if for all $t \in X$, $t$ is
$k$-fair. We can also refine the definition of fairness of a DFA to take the
parameter $k$ into account.

\begin{definition} \label{def:kFairDFA}
  A DFA $\A = (Q, \Sigma, q_0, \delta, Q_f)$ satisfying the diamond property is said to be $k$-fair if for each state $q \in Q$ reachable from an initial state, and for every word $u \in \Sigma^*$ such that $|u| \ge k$ and $\delta(q, u)$ is co-reachable (i.e.~there is a path from $\delta(q, u)$ to a final state), we have $\loc(u) = P$.
\end{definition}

This definition is indeed a characterisation of $k$-fair trace languages:

\begin{restatable}{proposition}{fairDFALanguage}
  Let $\A$ be a DFA satisfying the diamond property. Then, $\A$ is $k$-fair if and only if the language $\Lt(\A)$ of traces is $k$-fair.  \label{prop:fairDFALanguage}
\end{restatable}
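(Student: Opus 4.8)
The plan is to prove the two directions of the equivalence separately, with the ``only if'' direction being essentially a linearisation argument and the ``if'' direction being the one requiring a small amount of care.

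For the ``only if'' direction, assume $\A$ is $k$-fair in the sense of Definition~\ref{def:kFairDFA}, and take any trace $t \in \Lt(\A)$. I must show $t$ is $k$-fair, i.e.\ that every factor $u$ of length at least $k$ of every linearisation $w$ of $t$ satisfies $\loc(u) = P$. Since $t \in \Lt(\A)$, some linearisation $w$ is accepted; but because $\A$ satisfies the diamond property, $\Delta(q_0, w) = \Delta(q_0, w')$ for all $w' \sim w$, so \emph{every} linearisation $w'$ of $t$ labels an accepting run of $\A$. Fix such a linearisation and write it as $w' = u_1 u u_2$ where $u$ is the chosen factor with $|u| \geq k$. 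Let $q = \delta(q_0, u_1)$, which is reachable, and note $\delta(q, u) = \delta(q_0, u_1 u)$, from which the accepting state is reachable by reading $u_2$, so $\delta(q, u)$ is co-reachable. Definition~\ref{def:kFairDFA} then gives $\loc(u) = P$, as required.

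For the ``if'' direction, assume $\Lt(\A)$ is $k$-fair, and suppose towards a contradiction that $\A$ is not $k$-fair: there is a reachable state $q$, a word $u$ with $|u| \geq k$, and a co-reachable state $\delta(q,u)$, but $\loc(u) \neq P$. Pick $v$ with $\delta(q_0, v) = q$ and $v'$ witnessing co-reachability so that $v u v'$ is accepted by $\A$. Then the trace $t = [v u v']$ lies in $\Lt(\A)$, and $v u v'$ is one of its linearisations, which contains the factor $u$ of length at least $k$ with $\loc(u) \neq P$. Hence $t$ is not $k$-fair, contradicting $k$-fairness of $\Lt(\A)$.

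The only genuine subtlety, and the step I would be most careful about, is making sure that in the ``if'' direction the word $u$ really survives as a \emph{factor of a linearisation of the trace $[vuv']$} in the sense required by the definition of $k$-fairness of a trace: the definition quantifies over factors of arbitrary linearisations, and $vuv'$ is itself a linearisation, so the factor $u$ occurs literally; no reshuffling is needed, which is what makes the argument go through cleanly. Conversely, in the ``only if'' direction the subtlety is dual: Definition~\ref{def:kFairDFA} speaks of a single word $u$ read from a state, whereas trace-fairness ranges over all linearisations, so one must invoke the diamond property to transfer acceptance of $w$ to acceptance of every equivalent $w'$ before slicing out the factor. Both points are handled by the observations above, so the proof is short once these are spelled out.
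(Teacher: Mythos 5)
Your proof is correct and follows essentially the same argument as the paper's: both directions split a linearisation around the offending (or tested) factor and translate between Definition~\ref{def:kFairDFA} and trace $k$-fairness, with the contrapositive used for the ``if'' direction. Your explicit appeal to the diamond property to transfer acceptance to every linearisation is a detail the paper leaves implicit (it is justified earlier in the preliminaries), but it does not change the route.
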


As a corollary, we see that for a language $L$ of traces, there exists $k$ such that it is $k$-fair if and only if in every loop of any DFA recognising $L$ (satisfying the diamond-property), all processes participate. This is thus easy (polynomial-time) to test if a given DFA recognises a fair language (i.e.~$k$-fair for some $k$).
It is also possible to test if a DFA is $k$-fair, for a given $k$, since it is enough to check the property of Definition~\ref{def:kFairDFA} for every word $u$ of length~$k$. This naive algorithm requires an exponential-time complexity with respect to~$k$. In~\Cref{app:deciding-kfair}., we propose a dynamic-programming algorithm that solves this problem in polynomial-time.

We end this section with an example of a family of languages where the alphabet size, the required number of states in a DFA, and the number of processes increase arbitrarily, while $k$ remains constant. %\todo{Ben: I have added the last sentences to say a little bit more why we put this example}
This shows that the fairness parameter can be helpful to describe the complexity of a trace language, in a different way than other parameters. This is in particular motivating for the contribution we present in the next section, where the only exponential component in the complexity depends on the parameter $k$.

\begin{example}\label{ex:independent}
  Let $\Sigma = \{ c, a_1, a_2, \dots, a_n \}$, $P = \{p_1, \dots, p_n\}$ and $\Sigma_{p_i} = \{a_i, c\}$ for all $i$. Consider the trace language described by $L_n = ((\bigcup_{1 \le i < j \le n} [a_i a_j]).[c] )^*$: it is the alternation of two among the $n$ letters $\{a_1, \ldots, a_n\}$ (where two distinct processes participate concurrently), and the global synchronisation letter $c$ (where all processes participate).
  Hence, $L_n$ is $3$-fair for all $n$, though it requires a DFA with $\Omega(n)$ states. 
\end{example}

\section{A Zielonka's theorem for fair trace languages}\label{sec:contrib-fair}

Our first contribution is a new, and more efficient, proof of Zielonka's theorem in the specific case where the specification trace language is fair:

% In order to measure the complexity of our method, we introduce a parameter depending on the fairness. 
% For a distributed alphabet $(\Sigma, \loc)$, we let $M_\alpha$ be the length of the longest word $u$ whose Foata normal form has the form $S_1 S_2 \ldots S_m$ with $S_2 \ldots S_m$ containing at most $\alpha-1$ letters. 
% Notice that $M_\alpha$ is bounded above by $\alpha+\min(|\Sigma|,|P|)$ since the number of letters that we have to add in $S_1$ to go above length $\alpha$ cannot be more that the number of letters, and even the number of independent letters (and thus the number of processes). Even more precisely, $M_\alpha$ is bounded above by $\alpha + \beta$ where $\beta$ is the size of the biggest subset of independent letters in $\Sigma$.
% Our contribution is as follows, thus particularly efficient in the case where many processes see few letters each: 

\begin{theorem}\label{thm:fair-AA}
  For every $k$-fair DFA $\A$ over the distributed alphabet $(\Sigma, \loc)$ satisfying the diamond property,
  %the concurrent alphabet $(\Sigma, I)$, satisfying the diamond property and that is $k$-fair, and every distributed alphabet $(\Sigma, \loc)$ such that $I_\loc = I$, 
  there exists a finite AA $\B$ over $(\Sigma, \loc)$ such that $\Lt(\A)=\Lt(\B)$, where every set of local states (for each process) is of size bounded by $O(n \times k\times |\Sigma|^{3k-3})$, where $n$ is the number of states of $\A$. %\todo{Sri:shortened statement}
\end{theorem}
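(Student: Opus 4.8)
## Proof Plan

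The plan is to construct the AA so that each process $p$ maintains, as its local state, a compact encoding of a bounded suffix of the FNF of its own view $\view_p(t)$ of the trace $t$ read so far, together with the state $\delta(q_0, \view_p(t))$ that the DFA $\A$ reaches on that view. The key insight, provided by Lemmas~\ref{lem:completeTraceBeyondKFoataComponents} and~\ref{lem:everyoneKnowsBeyond2KFoataComponents}, is that for a $k$-fair trace, the FNF of any view agrees with the FNF of the full trace except in a suffix containing fewer than $2k-2$ letters in total (when one wants agreement across \emph{all} processes). So the "stable" prefix of the view's FNF is common knowledge: the DFA state on that stable prefix is genuinely shared information, and each process only needs to remember the unstable tail of its FNF — a sequence of steps containing at most $O(k)$ letters — plus the current DFA state. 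Counting: a bounded tail of total length $O(k)$ spread over at most $O(k)$ steps, with each step a subset of $\Sigma$, can be encoded in $|\Sigma|^{O(k)}$ ways; multiplied by the $n$ choices of DFA state and an $O(k)$ bookkeeping factor for where the tail starts, this matches the claimed bound $O(n \cdot k \cdot |\Sigma|^{3k-3})$.

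First I would make precise the local state: for a reachable trace $t$, the state of process $p$ is the pair $(\delta(q_0, \view_p(t)),\ \text{tail}_p(t))$, where $\text{tail}_p(t)$ is the sequence of steps of $\Foata(\view_p(t))$ from index $\f(\view_p(t), 2k-2)$ onward (and the DFA state is taken on the trace obtained by truncating $\view_p(t)$ to the steps strictly before that index). I must check this is well-defined and finite — this is the counting argument above, using Lemma~\ref{lem:pairOfProcsAtmostK-1Apart} and the $f$-measure bounds to show the tail has total length $< 2k-2 + (k-1) = 3k-3$ or so, and spans $O(k)$ steps. Next I would define the transition functions $\delta_a$. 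On a letter $a$ with $\loc(a) = \{p_1,\dots,p_r\}$: (i) the processes first reconcile, forming $\view_{\loc(a)}(t) = \bigcup_i \view_{p_i}(t)$ via Lemma~\ref{lem:viewFoata} — this is where the tails get merged step-by-step by union, and crucially, since the stable DFA states they carry must be consistent (they are all equal to the DFA state on a common prefix of the true trace, by Lemma~\ref{lem:everyoneKnowsBeyond2KFoataComponents}), the reconciled party can recover $\delta(q_0, \view_{\loc(a)}(t))$ by replaying the now-merged tail steps through $\delta$ starting from that common stable state; (ii) then the new event labelled $a$ is appended, which appends a (possibly new) step to the FNF and advances the DFA state by one $a$-transition; (iii) finally each participating process recomputes its own new tail by re-truncating at the new value of $\f(\cdot, 2k-2)$, pushing the now-stabilised initial steps into the DFA state. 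Because all participants of $a$ share exactly the same view after the synchronisation, they can all perform steps (i)–(iii) identically, so $\delta_a$ is a genuine function of the participating local states.

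The correctness argument then proceeds by induction on the length of the trace, maintaining the invariant that after reading $t$, each process $p$ is in the state $(\delta(q_0, \view_p(t)\ \text{truncated}),\ \text{tail}_p(t))$ as defined above, and that the global final condition — some process (say one that sees the whole trace, e.g. a process in $\loc$ of the last letter when the trace is nonempty, or more carefully the reconciled maximal view) can reconstruct $\delta(q_0, t)$ and check membership in $Q_f$ — is correctly encoded in $F$. Here I would use that $\view_p(t) = t$ for at least one relevant process, or reconstruct $\delta(q_0,t)$ globally from the union of all views, which equals $t$; since the stable DFA state plus the union of all tails determines $\delta(q_0,t)$, the set $F$ of accepting global states is well-defined. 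The diamond property of the resulting semantics DFA, hence trace-closedness, follows from the standard fact that any AA semantics satisfies it (stated in the excerpt), so $\Lt(\B)$ is well-defined, and the invariant gives $\Lt(\B) = \Lt(\A)$.

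The main obstacle I expect is step (i) of the transition: ensuring that when several processes synchronise, the DFA states they carry on their respective stable prefixes are \emph{compatible} — i.e. that there is a single common stable prefix of the true trace on which they all agree — and that after merging tails the reconstruction of the new DFA state by replaying the merged tail is both well-defined and equals $\delta(q_0, \view_{\loc(a)}(t))$. This requires carefully choosing the truncation threshold (the argument $2k-2$ rather than $k-1$ in $\f$ is exactly what guarantees, via Lemma~\ref{lem:everyoneKnowsBeyond2KFoataComponents}, that the stable prefix of one process's view is known to \emph{every} other process, so the merge is sound), and verifying that replaying the merged tail steps through $\delta$ in any linearisation order yields the same state — which follows from the diamond property of $\A$ since the steps consist of pairwise independent letters and successive steps are consistent with the partial order. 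A secondary technical point is bounding the tail length precisely enough to land at the exponent $3k-3$: one needs that after appending one event and re-truncating, the tail never exceeds total length $3k-3$, which should follow from combining the $2k-2$ threshold with Lemma~\ref{lem:pairOfProcsAtmostK-1Apart}-style bounds on how far a single new event can push the $\f$-boundary.
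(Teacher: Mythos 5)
Your overall architecture is the same as the paper's: each process keeps a bounded suffix (from index $\f(\view_p(t),2k-2)$ onward) of the FNF of its view, plus a DFA state summarising the forgotten prefix, with transitions of the form synchronise--expand--cut, and the counting argument for the bound is the right one. However, there is a genuine gap in your step~(i), and it is precisely where the paper's main technical idea lives. You assert that when the processes in $\loc(a)$ synchronise, ``the stable DFA states they carry must be consistent (they are all equal to the DFA state on a common prefix of the true trace).'' This is false: each process truncates its own view at its own moment (a process only updates when it participates in an action), so different participants have forgotten prefixes of \emph{different} lengths, carry DFA states of \emph{different} prefixes of the global FNF, and their tails start at different step indices. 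Consequently the ``step-by-step union'' of the tails is not well-defined without first \emph{aligning} them, i.e.\ without knowing the relative offset between the truncation points. Lemma~\ref{lem:everyoneKnowsBeyond2KFoataComponents} guarantees that the shorter forgotten prefix is contained in the longer one and that the discrepancy is harmless in content, but it does not tell the processes \emph{by how much} to shift before taking unions.

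Your counting paragraph does smuggle in ``an $O(k)$ bookkeeping factor for where the tail starts,'' which is the right ingredient, but you never say what this bookkeeping is nor why a bounded record of an unbounded absolute position suffices to disambiguate the alignment. The paper resolves this by storing the number of forgotten letters as a counter taken modulo $2k$, and then proving (via Lemma~\ref{lem:pairOfProcsAtmostK-1Apart}) the invariant that the quantities $c_p+|\varphi_p| \bmod 2k$ over all processes always lie in a cyclic window of at most $k$ consecutive residues; since $k$ residues out of $2k$ are therefore unused, the true order of the processes (who has forgotten more) can be recovered unambiguously from the modular values, the lagging process can discard the right number of leading letters, and only then are the tails unioned stepwise and the DFA state of the most advanced process adopted. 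Without this modulo-counter argument (or an equivalent mechanism) your transition function $\delta_a$ is not well-defined on local states, so the construction as written does not go through; the rest of your plan (expand, cut, acceptance by merging all tails, the $3k-3$ letter bound on tails, and the appeal to the diamond property for replaying steps in any order) matches the paper and is fine.
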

  
  %\todo{Ben: this is where we could comment more on the complexity comparison with other techniques...}
Notice that the complexity is linear in the size of the DFA, and polynomial in the size of the alphabet --- with the degree of the polynomial only depending linearly in the fairness parameter. 
% \begin{remark}
%   Let $A$ be a diamond automaton where for all cycles labelled $u \in \Sigma^*$, $\loc(u) = P$. If the length of any maximal path in $A$ is atmost $k-1$, then $A$ is $k$-fair. Indeed, for any factor $u$ of length $k$ of a word in $L(A)$, $\loc(u) = P$.
% \end{remark}
We now describe an overview of the construction over an example. A complete  proof of Theorem~\ref{thm:fair-AA} is provided in~\Cref{app:full-proof-fairness}.

This construction starts from $\A$ and proceeds in three steps, each building an AA accepting $\Lt(\A)$.
In the first step, we build an infinite AA whose states are tuples of the FNF of process views. In the second step, we explain how to cut these views, only keeping suitable suffixes but adding local states and unbounded counters. The third and final AA is then obtained by bounding the counter values using modulo counting.

%\section{Overview of the construction}\label{sec:overview}

\begin{figure}[btp]
  \centering
  \begin{tikzpicture}[state/.style={draw, circle, thick, inner
      sep=2pt}]
    \begin{scope}[every node/.style={state}]
      \node (0) at (0,0) {$q_0$};
      \node (1) at (2,0) {$q_1$};
      \node (2) at (0,-2) {$q_2$};
      \node [double] (3) at (2,-2) {$q_3$};
    \end{scope}

    \begin{scope}[->, >=stealth, thick, auto]
      \draw (-0.7, 0) to (0);
      \draw (0) to node {\scriptsize $c$} (1);
      \draw (0) to node [left] {\scriptsize $d$} (2);
      \draw (0) to [bend left=10] node {\scriptsize $a$} (3);
      \draw (1) to node {\scriptsize $d$} (3);
      \draw (2) to node [below] {\scriptsize $c$} (3);
      \draw (3) to [bend left=10] node {\scriptsize $b$} (0);
    \end{scope}

     \draw [gray, thin] (3.5, -2.2) to (3.5, 0.1);
    % \draw [gray, thin] (7, -2.2) to (7, 0.5); 
    % \begin{scope}[xshift=5cm, yshift=-0.5cm]
      
    %   %\node at (-1.5, 0) {\scriptsize $D:$};
    %   \node (1) at (0,0) {\scriptsize $P_1$};
    %   \node (2) at (-1, -1) {\scriptsize $P_2$};
    %   \node (3) at (1, -1) {\scriptsize $P_3$};

    %   \begin{scope}[auto]
    %   \draw (1) to [loop above] node {\scriptsize $d$} (1);
    %   \draw (2) to node [above] {\scriptsize $a$} (1);
    %   \draw (2) to node [below] {\scriptsize $c$} (3);
    %   \draw (1) to node {\scriptsize $b$} (3);
    % \end{scope}

    % %\node at (0, -2) {\scriptsize $I:  \{(c, d), (d, c)\}$};
    % \end{scope}

    \begin{scope}[xshift = 5cm]
      \node [left] at (1, 0) {\scriptsize Word $w: $};
      \node [right] at (1, 0) {\scriptsize $abcdbdcb$};
      \node [left] at (1,-1) {\scriptsize Trace $[w]:$};
      \node [left] at (1, -2) {\scriptsize $\Foata([w]): $};
      \node [right] at (1, -2) {\scriptsize $\{a\}~~\{b\}~~\{c,d\}~~\{b\}
        ~~\{c, d\}~\{b \}$};
      
      \node (a) at (1.3, -0.5) {\scriptsize $a$};
      \node (b1) at (2, -0.5) {\scriptsize $b$};
      \node (d1) at (2.7, -0.5) {\scriptsize $d$};
      \node (b2) at (3.4, -0.5) {\scriptsize $b$};
      \node (d2) at (4.1, -0.5) {\scriptsize $d$};
      \node (c1) at (2.7, -1.5) {\scriptsize $c$};
      \node (c2) at (4.1, -1.5) {\scriptsize $c$};
      \node (b3) at (4.8, -0.5) {\scriptsize $b$};

      \begin{scope}[->, >=stealth, thin]
        \draw (a) to (b1);
        \draw (b1) to (d1);
        \draw (d1) to (b2);
        \draw (b2) to (d2);
        \draw (d2) to (b3);
        \draw (b1) to (c1);
        \draw (b2) to (c2);
        %\draw (a) to (c1);
        \draw (c1) to (b2);
        \draw (c2) to (b3);
      \end{scope}

    \end{scope}
  \end{tikzpicture}
  \caption{(Left) A DFA satisfying the diamond property; 
  %(Middle) the dependence relation $D$ shown as a graph, and the independence relation $I$;
  (Right) a word $w$, the trace $[w]$ and its Foata normal form
  $\Foata([w])$.}
\label{fig:running-eg-intro}
\end{figure}
%%% Local Variables:
%%% mode: latex
%%% TeX-master: "fair-Zielonka"
%%% End:

Consider the DFA specification $\A$ in Figure~\ref{fig:running-eg-intro} over the distributed alphabet $(\{a, b, c, d\}, \loc)$ with $\loc(a) = \{p_1, p_2\}$, $\loc(b) = \{p_1, p_3\}$, $\loc(c) = \{p_2, p_3\}$, and $\loc(d) = \{p_1\}$ (note that the distributed alphabet is different from the one previously used in our examples). The associated independence relation is $I_\loc = \{(c, d), (d, c)\}$. The word $abcdbdcb$ can be read from state $0$, reaching back to state $0$. It is a linearisation of the trace depicted on the right of the figure, which also gives its FNF. 

\begin{figure}
  \centering

\scalebox{.76}{\begin{tikzpicture}
    %\draw [help lines] (0,0) grid (14, 1.5);

    \draw [gray] (0,0) to (15.2, 0);
    \draw [gray] (0,0.5) to (15.2, 0.5);
    \draw [gray] (0,1) to (15.2, 1);
    \draw [gray] (0,1.5) to (15.2, 1.5);

    \node [left] at (0, 1.25) {\scriptsize $p_1$};
    \node [left] at (0, 0.75) {\scriptsize $p_2$};
    \node [left] at (0, 0.25) {\scriptsize $p_3$};
    
    \draw [gray] (0, 0) to (0, 1.5);
    \draw [gray] (1, 0) to (1, 1.5);
    \node at (0.5, 1.75) {\scriptsize $a$};
    \node at (0.5, 1.25) {\scriptsize $\{\textcolor{blue}{ a }\}$};
    \node at (0.5, 0.75) {\scriptsize $\{\textcolor{blue}{ a }\}$};
    \node at (0.5, 0.25) {\scriptsize $\{ \}$};

    \draw [gray] (2, 0) to (2, 1.5);
    \node at (1.5, 1.75) {\scriptsize $b$};
    \node at (1.5, 1.25) {\scriptsize $\{a\} \{ \textcolor{blue}{ b }\}$};
    \node at (1.5, 0.75) {\scriptsize $\{a\}$};
    \node at (1.5, 0.25) {\scriptsize $\{ \textcolor{blue}{ a }\}\{\textcolor{blue}{ b}\}$};

    \draw [gray] (3.5, 0) to (3.5, 1.5);
    \node at (2.75, 1.75) {\scriptsize $c$};
    \node at (2.75, 1.25) {\scriptsize $\{a\} \{b\}$};
    \node at (2.75, 0.75) {\scriptsize $\{a\} \{ \textcolor{blue}{ b }\}
        \{ \textcolor{blue}{ c }\}$};
    \node at (2.75, 0.25) {\scriptsize
      $\{a\}\{b\}\{ \textcolor{blue}{ c }\}$};

    \draw [gray] (5,0) to (5, 1.5);
    \node at (4.25, 1.75) {\scriptsize $d$};
    \node at (4.25, 1.25) {\scriptsize $\{a\} \{b\} \{
      \textcolor{blue}{ d }\}$};
    \node at (4.25, 0.75) {\scriptsize $\{a\}\{b\}
        \{c\}$};
    \node at (4.25, 0.25) {\scriptsize
      $\{a\}\{b\}\{c\}$};

    \draw [gray] (7,0) to (7, 1.5);
    \node at (6, 1.75) {\scriptsize $b$};
    \node at (6, 1.25) {\scriptsize $\{a\} \{b\} \{
      \textcolor{blue}{c}, d\} \{\textcolor{blue}{b}\}$};
    \node at (6, 0.75) {\scriptsize $\{a\}\{b\}\{c\}$};
    \node at (6, 0.25) {\scriptsize $\{a\}\{b\}\{c,
      \textcolor{blue}{d}\}\{\textcolor{blue}{b}\}$}; 

    \draw [gray] (9.5, 0) to (9.5, 1.5);
    \node at (8.25, 1.75) {\scriptsize $d$};
    \node at (8.25, 1.25) {\scriptsize
      $\{a\}\{b\}\{c,d\}\{b\}\{\textcolor{blue}{d}\}$};
    \node at (8.25, 0.75) {\scriptsize $\{a\} \{b\} \{c\}$};
    \node at (8.25, 0.25) {\scriptsize $\{a\} \{b\} \{c, d\} \{b\}$};

    \draw [gray] (12, 0) to (12, 1.5);
    \node at (10.75, 1.75) {\scriptsize $c$};
    \node at (10.75, 1.25) {\scriptsize $\{a\} \{b\} \{c,d\}\{b\}
      \{d\}$};
    \node at (10.75, 0.75) {\scriptsize $\{a\} \{b\} \{c,
      \textcolor{blue}{d}\} \{\textcolor{blue}{b}\}
      \{\textcolor{blue}{c}\}$};
    \node at (10.75, 0.25) {\scriptsize $\{a\} \{b\} \{c, d\} \{b\} \{
      \textcolor{blue}{c}\}$};

    \draw [gray] (15.2, 0) to (15.2, 1.5);
    \node at (13.6, 1.75) {\scriptsize $b$};
    \node at (13.6, 1.25) {\scriptsize $\{a\} \{b\} \{c,d\}\{b\}
      \{\textcolor{blue}{c}, d\} \{ \textcolor{blue}{b}\}$ };
    \node at (13.6, 0.75) {\scriptsize $\{a\}\{b\}\{c,d\}\{b\}\{c\}$};
    \node at (13.6, 0.25) {\scriptsize $\{a\}\{b\}\{c,d\}\{b\}\{c,
      \textcolor{blue}{d}\}\{\textcolor{blue}{b}\}$}; 
  \end{tikzpicture}}
  \caption{Run of the infinite asynchronous automaton corresponding to
    the finite state automaton of Figure~\ref{fig:running-eg-intro}, on the word
    $abcdbdcb$}
  \label{fig:run-of-infinite-aa}
\end{figure}
%%% Local Variables:
%%% mode: latex
%%% TeX-master: "fair-Zielonka"
%%% End:

\smallskip

\noindent \textbf{Step 1.} Even without the fairness assumption, it is always possible to define an \emph{infinite} AA recognising $\Lt(\A)$, in which each process keeps its \emph{view of the current trace} as its local state. Furthermore, maintaining the view in the FNF allows to compute the synchronised views easily during shared actions. We illustrate this idea in Figure~\ref{fig:run-of-infinite-aa}. %such information when synchronising different processes is made easy by representing the view in its FNF. 
The beginning of the run of this infinite AA is depicted in Figure~\ref{fig:run-of-infinite-aa}. Initially, the views of all the processes are empty. When the letter $a$ is read, processes $p_1$ and $p_2$ update their view, while process $p_3$ is not aware of it. When the letter $b$ is read next, processes $p_1$ and $p_3$ synchronise. This implies that $p_3$ catches up by learning about the letter $a$ read before. Since $a$ and $b$ are dependent, we add a new step in the FNF of both views. We then read letters $c$ and $d$ in the same manner (note that we would obtain the same result by reading first $d$ and then $c$). When letter $b$ is read next, processes $p_1$ and $p_3$ learn from each other about the previous letters $c$ and $d$. They merge their views by making stepwise unions of their FNF (as justified by Lemma~\ref{lem:viewFoata}), before adding a new step with letter $b$. The run goes on like this, increasing the number of steps in the FNF. At the end, the acceptance status of the current run is obtained by once again merging the FNF of all processes (through a stepwise union), getting the FNF of the actual full trace, over which we can simply run the DFA to know whether the last state is final or not. Our objective now is to make this construction finite using the fairness assumption.

\smallskip

\noindent \textbf{Step 2.} The DFA of Figure~\ref{fig:running-eg-intro} is $4$-fair, since every sequence of 4 transitions makes every process participate: notice that it is not $3$-fair because of the word $dbd$ that can be read from state $q_1$, where process $p_2$ does not participate. The crux of our construction is Lemma~\ref{lem:everyoneKnowsBeyond2KFoataComponents}. As an example, let $t = abcdbdcb$, and consider $\view_{p_1}(t)$ as shown in the last column of Figure~\ref{fig:run-of-infinite-aa}. We have $k = 4$, and hence $2k - 2 = 6$, and $j = f(\view_{p_1}(t), 2k-2) = 3$. Observe that the prefix up to $j-1$ (given by $\{a\}\{b\}$ in this case) is known to every other process. Hence the process does not need to maintain this prefix in the local state. We can instead cut out these first steps of the FNF, and update the state of the DFA.

\begin{figure}[tbp]
\centering

\scalebox{.8}{\begin{tikzpicture}
  \begin{scope}[thin, gray]
  \draw (0, 0) rectangle (4.4, 1.5);
  \draw (0,0.5) to (4.4, 0.5);
  \draw (0, 1) to (4.4, 1);

  \draw [dashed] (4.8, 0.5) rectangle (9.2, 1.5);
  \draw [dashed] (4.8, 1) to (9.2, 1);

  \draw [dashed] (9.6, -0.5) rectangle (13.4, 0.5);
  \draw [dashed] (9.6, 0) to (13.4, 0);

  \draw [dashed] (4.8, -2) rectangle (9.2,-1);
  \draw [dashed] (4.8, -1.5) to (9.2,-1.5);

  \draw (0, -2.5) rectangle (4.4, -1);
  \draw (0, -2) to (4.4, -2);
  \draw (0, -1.5) to (4.4, -1.5);
\end{scope}

\node at (2.2, 1.25) {\scriptsize $q_0$, 2, 
  $\{c,d\}\{b\}\{c,d\}\{b\}$};
\node at (2.2, 0.75) {\scriptsize $q_0$, 0,
  $\{a\}\{b\}\{c,d\}\{b\}\{c\}$};
\node at (2.2, 0.25) {\scriptsize $q_0$, 2, 
$\{c,d\}\{b\}\{c,d\}\{b\}$};

\node at (7, 0.2) {\scriptsize \sf{synchronize}};
\node at (7, 1.25) {\scriptsize $q_0$, 2, $\{c,d\}\{b\}\{c,d\}\{b\}$};
\node at (7, 0.75) {\scriptsize $\blue{q_0}$, \blue{2}, $\{c,d\}\{b\}\{c,\blue d\}\{\blue b\}$};

\node at (11.5, 0.75) {\scriptsize \sf{expand}};
\node at (11.5, 0.25) {\scriptsize $q_0$, 2, $\{c,d\}\{b\}\{c,d\}\{b\}\{\blue{a}\}$};
\node at (11.5, -0.25) {\scriptsize $q_0$, 2, $\{c,d\}\{b\}\{c,d\}\{b\}\{\blue{a}\}$};

\node at (7, -0.75) {\scriptsize \sf{cut}};
\node at (7, -1.25) {\scriptsize $q_0$, 2, $\{c,d\}\{b\}\{c,d\}\{b\}\{a\}$};
\node at (7, -1.75) {\scriptsize $q_0$, 2, $\{c,d\}\{b\}\{c,d\}\{b\}\{a\}$};

\node at (2.2, -1.25) {\scriptsize $q_0$, 2, $\{c,d\}\{b\}\{c,d\}\{b\}\{a\}$};
\node at (2.2, -1.75) {\scriptsize $q_0$, 2, $\{c,d\}\{b\}\{c,d\}\{b\}\{a\}$};
\node at (2.2, -2.25) {\scriptsize $q_0$, 2, $\{c,d\}\{b\}\{c,d\}\{b\}$};

\begin{scope}[->, >=stealth]
  \draw (2, 0) to node [left] {\scriptsize $a$} (2, -1);
  \draw [gray, thin, dashed] (4.4, 1) to (4.8, 1);
  \draw [gray, thin, dashed] (9.2, 1) to (10, 0.5);
  \draw [gray, thin, dashed] (10, -0.5) to (9.2, -1);
  \draw [gray, thin, dashed] (5, -1.5) to (4.4, -1.5);
\end{scope}
\end{tikzpicture}}
\caption{Illustrating one transition in the AA that
  maintains an unbounded counter}\label{fig:unbouded-counter}
\end{figure}
%%% Local Variables:
%%% mode: latex
%%% TeX-master: "fair-Zielonka"
%%% End:

%As can be seen in Figure~\ref{fig:run-of-infinite-aa}, and as justified by Lemma~\ref{lem:everyoneKnowsBeyond2KFoataComponents}, after a warming-up period, all processes know a common long enough prefix of the FNF of the trace being read, even though they only participated in some parts of it. More precisely, if the view of a process has its last $m$ steps in its FNF that contain at least $2k-2$ letters in total, all processes entirely have the first steps of the trace in their own views. It is therefore useless to keep them in memory, and we can cut them, by replacing them by the state of the DFA that was reached after the cut. 
A process $p$ thus only stores the steps of the FNF of its view $t_p$ starting with the one of index $\f(t_p, 2k-2)$. However, for later synchronisations, processes need to keep a way to align their views in order to do the stepwise union. We first choose to do it by using an \emph{unbounded counter}, remembering how many letters have been forgotten so far. The beginning of computation in Figure~\ref{fig:run-of-infinite-aa} remains identical, simply adding the initial state $q_0$, and a counter value $0$, before the view. 
As seen before, in the last configuration though, processes $p_1$ and $p_3$ can cut two letters still keeping the last steps having $2k-2=6$ letters. They thus update the state component (though it comes back to $q_0$ again), and increment the counter component by 2, and we reach the configuration:
\[\footnotesize
\begin{array}{c}p_1 \\ p_2 \\p_3\end{array}\begin{array}{|l|}\hline 
  q_0, 2, \{c,d\}\{b\}\{c,d\}\{b\}\\\hline 
  q_0, 0, \{a\}\{b\}\{c,d\}\{b\}\{c\}\\\hline 
  q_0, 2, \{c,d\}\{b\}\{c,d\}\{b\}\\\hline 
\end{array}\]
While reading an additional letter $a$, Figure~\ref{fig:unbouded-counter} presents the update performed by processes $p_1$ and $p_2$. They first synchronise their local views: process $p_2$ that had counter value $0$ first removes the first two letters of its view to obtain the same counter value $2$ as $p_1$, and then it aligns the rest of its view with $p_1$ and learns about the events $d$ and $b$. Then they expand their views by adding the step $\{a\}$. Even if the steps contain $7>2k-2$ letters in total, the first step, that contains two letters, cannot be cut, since the four last steps contain only $5<2k-2$ letters in total. 
If we read the letter~$b$ afterwards, processes $p_1$ and $p_3$ synchronise to the view of $p_1$ that had the most recent information, and the first step can be cut while incrementing the counter value twice (since the step that is removed contains 2 letters), to obtain the new configuration
\[\footnotesize
\begin{array}{c}p_1 \\ p_2 \\p_3\end{array}\begin{array}{|l|}\hline 
  q_3, 4, \{b\}\{c,d\}\{b\}\{a\}\{b\}\\\hline 
  q_0, 2, \{c,d\}\{b\}\{c,d\}\{b\}\{a\}\\\hline 
  q_3, 4, \{b\}\{c,d\}\{b\}\{a\}\{b\}\\\hline 
\end{array}\]
%This process continues. 
For each letter read, the corresponding processes first synchronise, then expand, and finally make the necessary cuts.

\smallskip

\noindent \textbf{Step 3.} So far, only the counter values make the set of local states infinite. We now propose to only remember these values modulo $2k$. The crux is Lemma~\ref{lem:pairOfProcsAtmostK-1Apart}: processes cannot lag behind one another by more than $k - 1$ letters. %The key idea is that, by Lemma~\ref{lem:pairOfProcsAtmostK-1Apart}, since processes cannot lag one behind another more than $k-1$ letters, 
Because of this observation, the range of possible counter values of all processes, when added to the number of letters not yet forgotten, %when we add to them the number of letters not yet forgotten
is included in an interval of values of the form $\{c, c+1, c+2, \ldots,
c+(k-1)\}$, i.e.,~$k$ values. When considered modulo $2k$, this becomes a cyclic
interval of $k$ values. In particular, there is a full range of the other $k$
values (modulo~$2k$) that cannot be taken by any process. We can thus
unambiguously find which of the processes is ahead of the other when
synchronising (even if its counter value seems lower than others, due to the
modulo counting). For instance, let us continue reading letters $dcb$ after
which we reach the following configuration still with counter values at most~$2k-1$:
\[\footnotesize
\begin{array}{c}p_1 \\ p_2 \\p_3\end{array}\begin{array}{|l|}\hline 
  q_3, 7, \{b\}\{a\}\{b\}\{c,d\}\{b\}\\\hline 
  q_0, 5, \{c,d\}\{b\}\{a\}\{b\}\{c\}\\\hline 
  q_3, 7, \{b\}\{a\}\{b\}\{c,d\}\{b\}\\\hline 
\end{array}\]
When reading the next letter $a$, the cutting of the first step would increase the counter value to $8$, which is rounded modulo $2k=8$ to $0$. We thus get: 
\[\footnotesize
\begin{array}{c}p_1 \\ p_2 \\p_3\end{array}\begin{array}{|l|}\hline 
  q_0, 0, \{a\}\{b\}\{c,d\}\{b\}\{a\}\\\hline 
  q_0, 0, \{a\}\{b\}\{c,d\}\{b\}\{a\}\\\hline 
  q_3, 7, \{b\}\{a\}\{b\}\{c,d\}\{b\}\\\hline 
\end{array}\] We arrive in the situation depicted on the top left of
Figure~\ref{fig:modulo-counter}. When reading the next letter $b$, processes
$p_1$ and $p_3$ have respective counter values $0$ and $7$. They add $6$ (the length of the suffix of the view which they explicitly remember) to get values $6$
and $5$ modulo $8$. As mentioned before, the list of counter values added to
the length of the remaining suffix, falls under a continuous range of at most
$k = 4$ values. These values should include $6$ and~$5$ --- hence $p_1$ is
indeed ahead of $p_3$ (despite having a smaller  counter value originally). If
it was the other way around, that is, $p_1$ is behind $p_3$, the range of
values would include $6, 7, 0, 1, 2, 3, 4, 5$, which has more than $k = 4$
values: this is not possible. Thus, $p_3$ can discard its first
step to synchronise. They expand the configuration by adding $b$ in a new
step of the FNF. They cut the first element, and update their
state to $q_3$ and counter value to $1$. This way, using modulo counting, the
processes are able to determine who is ahead and then perform the
synchronisation, expanding and cutting to get to the new local states that
maintain a suffix of their views.

\begin{figure}[tbp]
  \centering
  
  \scalebox{0.8}{\begin{tikzpicture}
    \begin{scope}[thin, gray]
    \draw (-0.2, 0) rectangle (4.2, 1.5);
    \draw (-0.2,0.5) to (4.2, 0.5);
    \draw (-0.2, 1) to (4.2, 1);
  
    \draw [dashed] (4.8, 0) rectangle (9.2, 0.5);
    \draw [dashed] (4.8, 1) rectangle (9.2, 1.5);
  
    \draw [dashed] (9.6, -0.25) rectangle (13.4, 0.25);
    \draw [dashed] (9.6, -1.25) rectangle (13.4, -0.75);
  
    \draw [dashed] (4.8, -2.5) rectangle (9.2,-2);
    \draw [dashed] (4.8, -1.5) rectangle (9.2,-1);
  
    \draw (-0.2, -2.5) rectangle (4.2, -1);
    \draw (-0.2, -2) to (4.2, -2);
    \draw (-0.2, -1.5) to (4.2, -1.5);
  \end{scope}
  
  \node at (2, 1.25) {\scriptsize $q_0$, 0, $\{a\}\{b\}\{c, d\}\{b\}\{a\}$};
  \node at (2, 0.75) {\scriptsize $q_0$, 0, $\{a\}\{b\}\{c, d\}\{b\}\{a\}$};
  \node at (2, 0.25) {\scriptsize $q_3$, 7, $\{b\}\{a\}\{b\}\{c, d\}\{b\}$};

  \node at (7, -0.2) {\scriptsize \sf{synchronise}};
  \node at (7, 1.25) {\scriptsize $q_0$, 0, $\{a\}\{b\}\{c, d\}\{b\}\{a\}$};
  \node at (7, 0.25) {\scriptsize $\blue{q_0}$, \blue{0}, $\{a\}\{b\}\{c, d\}\{b\}\{\blue a\}$};
  
  \node at (11.5, 0.5) {\scriptsize \sf{expand}};
  \node at (11.5, 0) {\scriptsize $q_0$, 0, $\{a\}\{b\}\{c, d\}\{b\}\{a\}\{\blue{b}\}$};
  \node at (11.5, -1) {\scriptsize $q_0$, 0, $\{a\}\{b\}\{c, d\}\{b\}\{a\}\{\blue{b}\}$};
  
  \node at (7, -0.75) {\scriptsize \sf{cut}};
  \node at (7, -1.25) {\scriptsize $\blue{q_3}$, \blue{1}, $\{b\}\{c, d\}\{b\}\{a\}\{b\}$};
  \node at (7, -2.25) {\scriptsize $\blue{q_3}$, \blue{1}, $\{b\}\{c, d\}\{b\}\{a\}\{b\}$};

  \node at (2, -1.25) {\scriptsize $q_3$, 1, $\{b\}\{c, d\}\{b\}\{a\}\{b\}$};
  \node at (2, -1.75) {\scriptsize $q_0$, 0, $\{a\}\{b\}\{c, d\}\{b\}\{a\}$};
  \node at (2, -2.25) {\scriptsize $q_3$, 1, $\{b\}\{c, d\}\{b\}\{a\}\{b\}$};
  
  \begin{scope}[->, >=stealth]
    \draw (2, 0) to node [left] {\scriptsize $b$} (2, -1);
    \draw [gray, thin, dashed] (4.2, 0.75) to (4.8, 0.75);
    \draw [gray, thin, dashed] (9.2, 0.75) to (10, 0.25);
    \draw [gray, thin, dashed] (10, -1.25) to (9.2, -1.75);
    \draw [gray, thin, dashed] (4.8, -1.75) to (4.2, -1.75);
  \end{scope}
  \end{tikzpicture}}
  \caption{Illustrating one transition in the finite AA $\B$ that
    counts modulo $2k$}\label{fig:modulo-counter}
  \end{figure}

\paragraph*{Computing the state in the DFA.} The final task is to be able to check whether a given global state of the built AA $\B$ is accepting or not. To do so, we map each global state of $\B$ to a unique state of the DFA $\A$ as follows.
First we synchronise all the processes: first $p_1$ and $p_2$ synchronise, then the joint state of $\{p_1, p_2\}$ is reconciled with the local state of $p_3$ and so on. During the synchronisation, the same care is taken on the modulo counter values. Once this is done, all the local states are the same. From the state $q$ stored in a local state, we compute the state $q'$ reached in $\A$ by reading all the letters of the common suffix, step by step, in any order. % Once again, we can resynchronise all the processes (by the same care taken to modulo counter values), and compute from the state $q$ stored in the local state of any process that state $q'$ that would be reached in $\A$ by reading all the letters of the common suffix of view, step by step, in any order (since all letters of a step are independent on each other). 
We declare the global state accepting in the AA $\B$ if and only if the state $q'$ of $\A$ is accepting.
  
The entire construction is carefully presented in~\Cref{app:full-proof-fairness} for the general case. The crucial point in the proof is to maintain that the state $q'$ computed above to decide the acceptance condition in $\B$ is indeed the state in $\A$ that would be reached if we had read the full trace.

The number of local states for each process is bounded by $n \times 2k \times |\Sigma|^M$ where $n$ is the number of states of $\A$, and $M$ is the maximum number of letters that should be maintained in the suffix of views. This value $M$ is bounded by $2k-2+(k-1)=3k-3$ since we may have to add up to $k-1$ letters to the first step in order to obtain a full step: indeed in a $k$-fair trace, every step of its FNF contains at most $k$ letters (otherwise there would be a letter $a$ independent of $k$ other letters in a factor $u$ of the trace, meaning that all processes reading $a$ would be starved during the factor $u$).

\newcommand{\pref}{\operatorname{pref}}

\paragraph*{Lower bound.} For the general case considering any trace-closed regular language as a specification,~\cite{GenGim10} presents a lower bound on the number of local states in any \emph{locally-rejecting} AA: a (deterministic) AA $\B$ is said to be locally-rejecting if for every process $p$, there is a set of reject states $R_p$ s.t. for every trace $t$, $\view_p(t) \notin \pref(\Lt(\B))$ iff the local state reached by $p$ on $t$ is in $R_p$, where $\pref(\Lt(\B))$ denotes the set of all ideals of all traces accepted by $\B$. If an AA synthesis construction maintains enough information so that the state reached by the global specification DFA $\A$ on $\view_p(t)$ can be deduced, then the synthesised AA is locally rejecting by a suitable choice of states $R_p$. It is remarked in~\cite{GenGim10} that every known Zielonka-type AA synthesis construction until then had induced a locally-rejecting automaton. After \cite{GenGim10}, the only AA synthesis procedure that we are aware of is \cite{AdsGas24}, but it is a different setting since the specification is written as a logical formula and the implementation is a cascade product of AAs. Notice that our algorithm also synthesises a locally-rejecting automaton since we can indeed derive the state reached by $\A$ on $\view_p(t)$. 
We are then able to show the following lower-bound, whose proof is presented in~\Cref{app:lower-bound}.

\begin{restatable}{theorem}{lowerBound}
There is a family of $n$-fair languages $L_n$ for which every locally-rejecting $AA$ has size at least $2^{\frac{n}{4}}$.
\end{restatable}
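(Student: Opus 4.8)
The goal is to exhibit a family $L_n$ of $n$-fair trace languages such that every locally-rejecting AA recognising $L_n$ has some process with at least $2^{n/4}$ local states. The natural strategy, borrowed from the communication-complexity style lower bounds of~\cite{GenGim10}, is to design $L_n$ over a distributed alphabet with (at least) two processes $p$ and $q$ that must privately accumulate information about each other's past, and then argue that the local state of one of them must distinguish exponentially many "histories" of the other. Concretely, I would take a constant number of processes — say $p$ and $q$ plus a controller $r$ needed only to enforce fairness — with a private letter $a_p$ for $p$, a private letter $a_q$ for $q$, and a synchronising letter $c$ involving all of $p,q,r$; the language will force $p$, just before a synchronisation $c$, to "know" which subset $S\subseteq\{1,\dots,m\}$ of a bounded block of $q$'s private moves occurred, so that $2^m$ local states of $p$ are needed, with $m=\Theta(n)$. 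Fairness at level $n$ is obtained by inserting the global letter $c$ frequently enough — every $O(n)$ letters — which is precisely why the lower bound is stated as $n$-fair rather than $k$-fair for fixed $k$: the amount of private information that can be smuggled between two consecutive synchronisations grows linearly with $k$.

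The key steps, in order, would be: (1) fix the distributed alphabet and define $L_n$ as a trace-closed regular language in which, between two global $c$-events, $q$ may perform a word over a small block of private actions encoding an arbitrary subset $S$, $p$ performs a fixed number of private actions, and acceptance after the next $c$ depends on a correlation between $S$ and a subsequent choice made by $q$ — designed so that $p$'s view up to the $c$ must determine, for each candidate continuation, whether the resulting trace is a prefix of an accepted trace; (2) verify that $L_n$ is trace-closed, regular, and $n$-fair (the fairness check: every factor of length $n$ of any linearisation contains a $c$, hence all processes); (3) invoke the locally-rejecting hypothesis: since $\B$ is locally-rejecting, for each process $p$ there is a reject set $R_p$ such that $p$'s local state after reading $\view_p(t)$ lies in $R_p$ exactly when $\view_p(t)\notin\pref(\Lt(\B))$; (4) build a fooling set: exhibit $2^{m}$ traces $t_S$ ($S$ ranging over subsets of an $m$-element block, $m\ge n/4$) such that all $\view_p(t_S)$ are prefixes of accepted traces (so no reject state is hit), but for any two distinct $S\neq S'$ there is a common continuation $u$ with $\view_p(t_S)u\in\pref(\Lt(\B))$ and $\view_p(t_{S'})u\notin\pref(\Lt(\B))$ (or vice versa); (5) conclude that $p$ must be in pairwise-distinct local states after the $t_S$, since otherwise the locally-rejecting characterisation applied after $u$ would be violated — giving $|Q_p|\ge 2^m\ge 2^{n/4}$.

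The bookkeeping obstacle, and the part I expect to take real care, is step (4): engineering the continuations $u$ so that a single $u$ separates $t_S$ from $t_{S'}$ in the "is-a-prefix-of-an-accepted-trace" relation, while simultaneously keeping all the $\view_p(t_S)$ themselves inside $\pref(\Lt(\B))$ (otherwise the traces land in reject states and carry no information). This is the standard difficulty in turning a "process must remember $S$" intuition into a clean fooling-set argument for AA: because we only get to observe $p$'s local state, the separating experiment must be expressible as something $p$ participates in, so the language has to be arranged so that after feeding $u$ the acceptance/prefix status is decided by a synchronisation that reconciles $p$'s stored $S$ with fresh information — and one must make sure $2^m$ subsets genuinely give $2^m$ pairwise-inequivalent $p$-views rather than collapsing. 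A secondary subtlety is the arithmetic making $m\ge n/4$: one must count how many private $q$-letters fit in a fairness window of size $n$ while leaving room for $p$'s moves and the synchronisation letter, and a factor-of-$4$ loss is the comfortable bound. Once the fooling set is in place, steps (3) and (5) are routine, exactly mirroring the observation already made in the excerpt that "we can derive the state reached by $\A$ on $\view_p(t)$", which is what makes our construction — and the comparison class — locally rejecting.
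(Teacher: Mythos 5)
Your route is genuinely different from the paper's, and as written it has a real gap at its core. The paper does not build a new language at all: it takes the family $L_n\subseteq\Path_n$ from the lower-bound proof of~\cite{GenGim10} --- defined over $n$ processes with alphabet $\binom{P}{2}$, consisting of $2^{n/4}$ traces each of length $3n/4+1$ --- observes that since every trace is shorter than $n$ the language is \emph{vacuously} $n$-fair, notes that its DFA has $O(n)$ states, and then simply cites Theorem~2 of~\cite{GenGim10} for the conclusion that process $n$ needs $2^{n/4}$ local states. The hardness there comes from the number of processes, not from long private blocks inside a fairness window. Your plan instead uses constantly many processes and encodes $\Theta(n)$ bits of information in $q$'s private block between two synchronisations. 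That is a legitimate alternative, but note that any such language forces the \emph{specification DFA} itself to remember the subset $S$, so its minimal DFA already has $2^{\Theta(n)}$ states; you would prove the literal statement, but lose the point of the lower bound, namely that the exponential dependence on the fairness parameter is unavoidable even when the DFA is small (linear in $n$, as in the paper's $L_n$).

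The concrete gap is in your steps (4)--(5), which you label as bookkeeping but which are the entire proof. When $p$ learns $S$ at the synchronisation $c$, so do $q$ and the controller $r$; hence their local states after $t_S$ also depend on $S$. If your separating continuation $u$ contains any further synchronisation with $q$ or $r$, then from $q_p(t_S)=q_p(t_{S'})$ you cannot conclude that the rejection status after $u$ agrees --- the discrepancy could be carried by the other processes' states, and no contradiction arises. So $u$ must consist solely of $p$-local letters, and then you must engineer $L_n$ so that the sets $\{u\in\Sigma_p^*\mid \view_p(t_S)\,u\in\pref(\Lt(\B))\}$ are pairwise distinct over all $2^m$ subsets $S$. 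With a single private letter $a_p$ this fails: the status of $\view_p(t_S)a_p^i$ typically reveals only coarse information such as $\max(S)$, collapsing the fooling set. It can be repaired (e.g.\ with $m$ distinct $p$-local letters $d_1,\dots,d_m$ and acceptance requiring $p$ to spell out exactly the increasing enumeration of $S$ after the $c$), but that construction, its trace-closure, and the fairness arithmetic all need to be written out; none of it is present in the proposal, whereas the paper's proof discharges the whole burden by reusing an existing fooling-set argument on a language that happens to be fair for trivial reasons.
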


\section{Combining fairness with acyclic communication}\label{sec:combination}

In a fair DFA, every process participates in every cycle. In a DFA specification that is not fair, some set of processes can perform an unbounded number of actions, while the other processes wait. This could lead to processes having diverging views with an unbounded number of events in the difference. However, Krishna and Muscholl~\cite{KRISHNA2013109} have proposed an efficient way to synthesise an asynchronous automaton when the communication between processes happens in a hierarchical manner. Given a set $P$ of processes, and a distributed alphabet $(\Sigma, \loc)$, an undirected graph called the \emph{communication graph} $\gcomm$ is constructed as follows: vertices of $\gcomm$ are the processes in $P$; for $p_1, p_2 \in P$, there is an edge $(p_1, p_2)$ if $p_1$ and $p_2$ share a common action, i.e.~$\Sigma_{p_1} \cap \Sigma_{p_2}\neq \emptyset$. \cite{KRISHNA2013109} provides an efficient AA synthesis when the communication graph is a tree. In particular, this entails that every process synchronises only with its parent or its child in the communication graph. In their construction, the local states of a process $p$ are pairs of states $(\overleftarrow{q_p}, q_p)$ of the specification DFA $\A$ so that: 
%\begin{itemize}
 % \item 
 $\overleftarrow{q_p}$ is the state reached by $\A$ on reading $\jointview_p(t)$, the smallest ideal of $\view_p(t)$ containing all actions where $p$ synchronises with its parent in $\gcomm$; 
%\item 
and $q_p$ is the state reached by the DFA $\A$ on reading $\view_p(t)$.
%\end{itemize}
Their key technique is an elegant method to construct the state of $\A$ reached on a trace $t$ by looking at the local states $(\overleftarrow{q_p}, q_p)$ as above. We recall the details of the construction \Cref{sec:krishna-muscholl}.

In this section, we propose a \emph{tree-of-bags} architecture where essentially each node in the communication graph is replaced with a set (bag) of processes. When we impose a fairness condition on the DFA restricted to each of these bags, we are able to incorporate the construction of Section~\ref{sec:contrib-fair} into the method of~\cite{KRISHNA2013109}.

\smallskip
%\subparagraph*{Tree-of-bags architecture.}
\noindent \textbf{Tree-of-bags architecture.} A distributed alphabet $(\Sigma, \loc)$ is said to form a \emph{tree-of-bags} architecture if the following conditions are all satisfied: 
\begin{itemize}
\item the set of processes $P$ can be partitioned into bags $\{ B_1, B_2, \dots, B_\ell\}$,
\item each bag $B_j$ has a special process $o_j$ called its \emph{outer process}  --- the set of all outer processes is denoted as $O$, and the rest of the processes are called \emph{inner processes};
\item the communication graph restricted to $O$ forms a tree --- hence, one outer process can be designated as the root, and for every other outer process $o_j$, there is a unique outer process which is the parent in this tree, which we denote as $\parent(o_j)$;
\item every inner process communicates within its bag: for a process $\iota \in B_j \setminus \{o_j\}$, there is no edge outside $B_j$ in the communication graph.
\end{itemize}

\begin{wrapfigure}{r}{4cm}
%  \centering
\vspace{-9mm}
\scalebox{0.8}{\begin{tikzpicture}
  \begin{scope}[xshift=11cm, yshift=1cm, scale=0.8]
    %\node at (0, -3.2) {\scriptsize Tree-of-bags architecture};
    \begin{scope}[every node/.style={circle, fill, inner sep=2pt}]
      \node (0) at (0,0) {};
      \node (1) at (-1,-1) {};
      \node (2) at (1,-1) {};
      \node (3) at (-1.5, -2) {};
      \node (4) at (-0.5, -2) {};
      \node (5) at (1.5, -2) {};
    \end{scope}

    \begin{scope}[every node/.style={rectangle, fill, blue, inner sep=2pt}]
      \node (a) at (-0.4, 0.5) {};
      \node (b) at (0.4, 0.5) {};
      \node (c) at (-1.4, -0.3) {};
      \node (d) at (-1.4, -0.8) {};
      \node (e) at (-1.7, -0.5) {};
      \node (f) at (1.2, -0.6) {};
      \node (g) at (1.5, -0.3) {};
      \node (h) at (1.8, -0.6) {};
      \node (i) at (1.5, -0.8) {};
      \node (j) at (-2.2, -1.7) {};
      \node (k) at (-2.5, -2) {};
      \node (l) at (-2.1, -2.2) {};
      \node (m) at (2, -1.8) {};
      \node (n) at (2, -2.2) {};
    \end{scope}

    \begin{scope}[thin, gray]
      \draw (0) to (a) to (b) to (0);
      \draw (0) to (1);
      \draw (0) to (2);
      \draw (1) to (d) to (c) to (e) to (d);
      \draw (1) to (3);
      \draw (1) to (4);
      \draw (2) to (5);
      \draw (2) to (f);
      \draw (f) to (g);
      \draw (f) to (i);
      \draw (i) to (g);
      \draw (i) to (h);
      \draw (3) to (j) to (k) to (l) to (j);
      \draw (3) to (k);
      \draw (5) to (m) to (n) to (5);
    \end{scope}

    \begin{scope}[very thin, red, dashed]
      \draw (0) .. controls (-2, 1) and (2,1) .. (0);
      \draw (1) .. controls (-3, -1) and (-1, 1) .. (1);
      \draw (2) .. controls (1, 1) and (3, -1) .. (2);
      \draw (3) .. controls (-3, -0.5) and (-3, -3.5) .. (3);
      \draw (5) .. controls (2.5, -0.5) and (2.5,-3.5) .. (5); 
    \end{scope}
  \end{scope}
\end{tikzpicture}}
\vspace{-1.5cm}
\end{wrapfigure}
On the right, we depict an example where the red dashed bubbles represent the bags, each bubble has an outer process shown as a black circle, and the blue squares are the inner processes. We write $(\Sigma, \loc, \BB)$ to denote a tree-of-bags architecture as above, with $\BB = \{B_1, B_2, \dots, B_\ell\}$. For a bag $B \in \BB$, we write $o(B)$ for the outer process of $B$. Also, we let $\Sigma^{\inn}(B) = \{ a \in \Sigma \mid \loc(a) \incl B\}$ be the set of actions for which only the processes in $B$ participate. We will sometimes refer to $\Sigma^{\inn}(B)$ as the bag alphabet of $B$.

\begin{definition}
Let $(\Sigma, \loc)$ be a distributed alphabet forming a tree-of-bags architecture $(\Sigma, \loc, \BB)$. A DFA specification $\A$ over $(\Sigma, \loc)$ is said to be \emph{fair for the tree-of-bags architecture} $(\Sigma, \loc, \BB)$ if in every loop of $\A$, either all processes in a bag participate, or none of them participates.
\end{definition}

%As for the notion of fairness studied in Section~\ref{sec:fair}, the DFA $\A$
%can be shown to be fair for a tree-of-bags architecture for all
%bags $B\in \BB$, there is a natural number~$k_B$ such that the automaton $\A$
%where the transitions labelled with $\Sigma \setminus \Sigma^{\inn}(B)$ are replaced
%by $\varepsilon$-transitions is $k_B$-fair (said otherwise, for every trace in
%$L(\A)$, the trace obtained by only keeping events labelled by
%$\Sigma^{\inn}(B)$ is $k_B$-fair). \todo{Arnab; Reviewer 1 suggests splitting this sentence}
As for the notion of fairness studied in Section~\ref{sec:fair}, one can test if the DFA $\A$ is fair for a tree-of-bags architecture $(\Sigma, \loc, \BB)$. For each bag $B \in \BB$ replace the transitions labelled with $\Sigma \setminus \Sigma^{\inn}(B)$ by $\varepsilon$-transitions, and find if there exists a natural number~$k_B$ such that the resulting automaton is $k_B$-fair. Such constants must exist for each bag $B \in \BB$ for $\A$ to be fair for $(\Sigma, \loc, \BB)$.
In other words, for every trace in $L(\A)$, the trace obtained by only keeping events labelled by $\Sigma^{\inn}(B)$ is $k_B$-fair.
It is also possible to compute in polynomial
time such constants $k_B$ for all bags (by using the same algorithm described in~\Cref{app:deciding-kfair} for computing the fairness parameter for a DFA). In our later explanations, we suppose
that such constants $k_B$ are known, and we use them to give a complexity bound
of our construction.

\smallskip
\noindent \textbf{The construction.} Here is the main challenge: on reading a trace $t$, we need for each bag $B$, the state reached by $\A$ on $\view_B(t)$. Let us write $\delta(q_0, \view_B(t))$ for this state. Once we know it, we can use the $\state$ function of \cite{KRISHNA2013109} to compute the state reached by $\A$ on the whole trace $t$. So let us now focus on how we can compute $\delta(q_0, \view_B(t))$. Firstly, recall that $\view_B(t) = \bigcup_{\iota \in B} \view_{\iota}(t)$. If we maintain the individual views $\view_\iota(t)$ for each $\iota \in B$, in FNF, we could take the stepwise union to get $\view_B(t)$. Since we assumed $\A$ is fair when restricted to each bag, we are tempted to apply the construction of Section~\ref{sec:contrib-fair} and maintain only a finite suffix of these views. However, there are new challenges to overcome.
Since the outer process $o(B)$ communicates with processes outside bag $B$,  $\view_o(t)$ includes letters outside the bag alphabet. Further, as other inner processes $\iota \in B$ can potentially synchronise with $o$, $\view_{\iota}(t)$ may have also letters from outside the bag alphabet $\Sigma_B$. Hence we cannot directly cut the prefixes of the views based on the parameter $k_B$. 

Here are the key ideas. Firstly, we consider $\trest$, the trace $t$ restricted to the bag alphabet and maintain $\view_\iota(\trest)$ in FNF. Each process $\iota \in B$ maintains $\view_{\iota} (\trest)$. With this information, we can compute $\view_B(\trest)$. Secondly, we observe that $\view_B(t)$ can be written with $\view_o(t)$, $\view_B(\trest)$ and $\view_o(\trest)$. 

\begin{restatable}{lemma}{viewB}
 If $t$ is a trace, $B \in \BB$, and $o = o(B)$, we have $\view_B(t) = \view_o(t) \cdot t'$ where $t'$ is the trace obtained by removing $\view_o(\trest)$ from $\view_B(\trest)$. 
\end{restatable}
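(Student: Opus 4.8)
The plan is to prove the identity $\view_B(t) = \view_o(t) \cdot t'$ by reasoning about the events of the trace and their order, keeping in mind that for traces $\view_B(t) = \bigcup_{\iota \in B} \view_\iota(t)$ is the ideal generated by the maximal $\Sigma_p$-events for $p \in B$. First I would fix notation: let $t_B = \trest$, the subtrace of $t$ consisting of events labelled in $\Sigma^{\inn}(B)$, with the induced order. The key structural observation is that every event of $t$ labelled in $\Sigma^{\inn}(B)$ that appears in $\view_B(t)$ already appears in $\view_B(t_B)$, and conversely, because dependencies among such events are the same in $t$ and in $t_B$ (two letters in $\Sigma^{\inn}(B)$ are dependent exactly when they share a process, which is a bag process in both cases). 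The events in $\view_B(t)$ labelled \emph{outside} $\Sigma^{\inn}(B)$ are exactly those involving the outer process $o$ (since inner processes only use bag letters, any event outside the bag alphabet that a bag process sees must have been communicated through $o$), and these are precisely the events of $\view_o(t)$ that lie outside $\Sigma^{\inn}(B)$.

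The main decomposition argument then runs as follows. I claim that as an ideal of $t$, $\view_B(t)$ decomposes into (i) $\view_o(t)$ — which contains all the non-bag events seen by the bag, plus the bag events that $o$ has learned about — and (ii) the remaining bag events, which are exactly the bag events in $\view_B(t_B)$ that are \emph{not} in $\view_o(t_B)$. Indeed, a bag event $e$ is known to $o$ in $t$ if and only if it is known to $o$ in $t_B$ (again because the causal chain from $e$ to $\max_o$ can be routed through bag events — any event on such a chain that $o$ depends on is seen by $o$, and one shows inductively these intermediate events are bag events too, using that $o$'s view of bag letters is determined by $t_B$); so the bag events in $\view_B(t)$ missing from $\view_o(t)$ are exactly those in $\view_B(t_B) \setminus \view_o(t_B)$, which is the event set of $t'$ as defined in the statement. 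It remains to check that the order and labelling on $\view_B(t)$ is exactly the concatenation $\view_o(t) \cdot t'$: labelling is immediate, and for the order one uses that $\view_o(t)$ is downward-closed in $\view_B(t)$ (it is a view, hence an ideal), that $t'$ sits ``above'' it, and that the concatenation operation adds precisely the covering pairs between maximal events of $\view_o(t)$ and minimal events of $t'$ that are non-independent — which matches the induced order of $t$ because $t$ is a trace and these are exactly the forced covering relations.

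I would carry this out in three steps: (1) show $\view_B(t)$ and $\view_o(t)$ agree on all non-bag events, i.e.~the non-bag part of $\view_B(t)$ equals the non-bag part of $\view_o(t)$; (2) show that the bag events of $\view_B(t)$ split as (bag events of $\view_o(t)$) together with (event set of $t'$), with the crucial lemma being that a bag event is in $\view_o(t)$ iff it is in $\view_o(t_B)$, and similarly $\view_B(t) \cap \Sigma^{\inn}(B)$-events $=$ $\view_B(t_B)$; (3) verify the order matches the concatenation, using that $\view_o(t)$ is an ideal of $\view_B(t)$ and that $t$ being a trace forces exactly the covering relations produced by $\cdot$.

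The main obstacle I anticipate is step (2), specifically the claim that a bag-alphabet event is seen by the outer process $o$ in $t$ precisely when it is seen by $o$ in the restricted trace $t_B$. One direction (seen in $t_B$ implies seen in $t$) is easy since $t_B$ embeds order-preservingly into $t$ and the witnessing chain survives. The reverse needs care: a causal chain in $t$ from a bag event up to $\max_o(t)$ might a priori pass through non-bag events, and one must argue it can be replaced by, or already is, a chain through bag events — this is where one leans on the tree-of-bags hypothesis that inner processes never communicate outside $B$, so any non-bag event on the chain involves $o$, and the portion of the chain between consecutive $o$-events can be absorbed. Making this rerouting precise (and handling $\max_o$ itself possibly being a non-bag event, in which case we instead route to $\max_o(t_B)$) is the delicate combinatorial heart of the proof; the rest is bookkeeping with ideals and the concatenation definition.
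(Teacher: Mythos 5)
Your plan is correct and is essentially the paper's own argument: the paper isolates your step (2) as Lemma~\ref{lem:algoBag} --- the remainder of $\view_B(t)$ after removing $\view_o(t)$ equals the remainder of $\view_B(\trest)$ after removing $\view_o(\trest)$ --- and proves it with exactly the two ingredients you identify, namely that every non-bag event of $\view_B(t)$ already lies in $\view_o(t)$ (because $o$ is the only process of $B$ communicating outside the bag) and that causal chains between $\Sigma^{\inn}(B)$-events in $t$ can be rerouted through bag events of $o$, so bag-event membership and ordering agree in $t$ and $\trest$. Your only slip is the passing claim that the non-bag events of $\view_B(t)$ are exactly those \emph{involving} $o$ (they need only be dominated by an event of $o$, e.g.\ events of distant bags learned via the tree), but you immediately state the correct version, so nothing essential is missing.
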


\begin{algorithm}[t]
  \caption{Computing the state $\delta(q_0, \view_B(t))$}\label{alg:bagview}
\begin{algorithmic}[1]
  \Require A bag $B$, a trace $\view_\iota(\trest)$ for all $\iota \in B$, $q_o = \delta(q_0, \view_o(t))$ where $o = o(B)$ 
  \Ensure $q = \delta(q_0, \view_B(t))$ 
  \State $\hat{t} \gets$ pointwise union of $\Foata(\view_\iota(\trest))$ over all $\iota \in B$ 
  \State $e \gets$ maximal element of $o$ in $\hat{t}$
  \State $t' \gets$ trace obtained by removing $\ideal{e}$ from $\hat{t}$
  \State $q \gets \delta(q_o, t')$
\end{algorithmic}
\end{algorithm} 

Algorithm~\ref{alg:bagview} shows how to compute $\delta(q_0, \view_B(t))$ using $\delta(q_0, \view_o(t))$ and $\view_\iota(\trest)$ for all $\iota \in B$. It identifies the trace $t'$ in Line 3, and then does the required computation in Line 4. The last task is to be able to implement Algorithm~\ref{alg:bagview} using finite information as states. This is where the fairness assumption on each bag helps, allowing us to maintain only a bounded suffix of $\view_\iota(t)$ for each $\iota$. For outer processes we also maintain a pair $(\overleftarrow{q_o}, q_o)$ where $q_o= \delta(q_0, \view_o(t))$ is the state reached by $\A$ on reading $\view_o(t)$ and $\overleftarrow{q_o} = \delta(q_0, \jointview_o(t))$, to mimick the construction of \cite{KRISHNA2013109} for the outer processes. All the details of the construction are presented in Appendix~\ref{app:full-construction-treeofbags}.

\begin{theorem}
  For a DFA $\A$ that is fair for a tree-of-bags architecture $(\Sigma, \loc, \BB)$, there exists a finite AA $\B$ over $(\Sigma, \loc)$ such that $\Lt(\A)=\Lt(\B)$, where the set of local states for outer processes is of size bounded by $O(n^3 \times k\times |\Sigma|^{3k-3})$, and that of inner processes is bounded by $O(k\times |\Sigma|^{3k-3})$, where $n$ is the number of states of $\A$, and $k$ is the maximum of $k_B$ over all bags $B$.
\end{theorem}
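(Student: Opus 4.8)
The plan is to combine the fairness construction of Section~\ref{sec:contrib-fair} (applied \emph{locally inside each bag}) with the hierarchical-architecture construction of~\cite{KRISHNA2013109} (used \emph{between bags}), so that the glue between the two is exactly Algorithm~\ref{alg:bagview} together with \Cref{lem:viewB}. Concretely, I would define the AA $\B$ as follows. Every process $\iota\in B$ maintains a bounded suffix of $\Foata(\view_\iota(\trest))$, together with a state of $\A$ and a modulo-$2k_B$ counter, exactly as in the three-step construction for fair specifications --- but with $\trest$ in place of $t$, so that the $k_B$-fairness of $\A$ restricted to $\Sigma^{\inn}(B)$ (guaranteed by the hypothesis that $\A$ is fair for $(\Sigma,\loc,\BB)$) gives the needed bounds via Lemmas~\ref{lem:pairOfProcsAtmostK-1Apart}--\ref{lem:everyoneKnowsBeyond2KFoataComponents}. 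In addition, each outer process $o=o(B)$ carries the pair $(\overleftarrow{q_o},q_o)$ of states of $\A$ as in~\cite{KRISHNA2013109}, where $q_o=\delta(q_0,\view_o(t))$ and $\overleftarrow{q_o}=\delta(q_0,\jointview_o(t))$. Transitions for a letter $a$ with $\loc(a)\incl B$ (a bag-internal action) are handled purely by the fair construction inside $B$, updating the suffixes/counters/states of the participating processes and, if $o\in\loc(a)$, updating $(\overleftarrow{q_o},q_o)$ as well; transitions for a letter $a$ that synchronises across bags involve only outer processes and are handled by the $\state$/$\synchronise$ machinery of~\cite{KRISHNA2013109}.

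The heart of the correctness argument is the invariant that, from the local states, one can recover $\delta(q_0,\view_B(t))$ for every bag $B$. For this I would argue that the bounded suffixes stored by the inner processes of $B$, once synchronised (with the same modulo-counter care as in Section~\ref{sec:contrib-fair}), determine $\hat t$ in Line~1 of Algorithm~\ref{alg:bagview} up to a prefix that has already been consumed into the stored $\A$-state; here the crucial point --- paralleling the "full trace is known beyond $\f(\cdot,2k-2)$" phenomenon of \Cref{lem:everyoneKnowsBeyond2KFoataComponents}, but now relative to $\trest$ --- is that the part of $\view_B(\trest)$ that was cut is a prefix (an ideal) that every inner process of $B$ already agrees on, so cutting it and folding it into the $\A$-state is consistent with computing $\delta(q_0,\view_B(\trest))$. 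Then \Cref{lem:viewB} lets us pass from $\view_B(\trest)$ and $\view_o(t)$ to $\view_B(t)$: since $o$ stores $q_o=\delta(q_0,\view_o(t))$, we identify the maximal $o$-event in $\hat t$, remove its ideal, and read the remaining trace $t'$ from $q_o$, which is exactly Lines~2--4. The fact that $\trest$ is $k_B$-fair is what makes "the remaining suffix after the cut" bounded; the fact that inner processes of $B$ never synchronise outside $B$ is what makes $\view_\iota(\trest)$ the right object to track (their views restricted to the bag alphabet behave like views in a standalone fair system).

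Once $\delta(q_0,\view_B(t))$ is available for all bags $B$, I invoke the $\state$ function of~\cite{KRISHNA2013109} (recalled in \Cref{sec:krishna-muscholl}) with the outer-process pairs $(\overleftarrow{q_o},q_o)$ to reconstruct $\delta(q_0,t)$ and hence decide acceptance; the correctness of that reconstruction is their result, so I would only need to check that my maintenance of $(\overleftarrow{q_o},q_o)$ matches their hypotheses, namely that $\overleftarrow{q_o}$ tracks the state reached on $\jointview_o(t)$ and $q_o$ the state on $\view_o(t)$ --- and $q_o$ is obtained precisely by feeding $\delta(q_0,\view_B(t))$ for $B$ containing $o$ (or rather the bag-view computation) into the cross-bag updates. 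The diamond property of the resulting semantics DFA follows as usual because all updates are "local" in the sense of Definition~\ref{def:asynchronousAutomata}. For the size bound: each inner process stores a state of $\A$, a counter modulo $2k_B$, and a suffix of length at most $3k_B-3$ over $\Sigma$, i.e.\ $O(n\times k\times|\Sigma|^{3k-3})$ raw, but the state component and counter are in fact functionally determined by the suffix together with a bounded amount of bag data, collapsing to $O(k\times|\Sigma|^{3k-3})$; each outer process additionally stores the pair $(\overleftarrow{q_o},q_o)\in Q\times Q$, and as in~\cite{KRISHNA2013109} a further $Q$ component arises from the cross-bag bookkeeping, giving $O(n^3\times k\times|\Sigma|^{3k-3})$.

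The main obstacle I anticipate is showing that cutting prefixes of $\view_\iota(\trest)$ independently in each process stays consistent with both (i) the modulo-counter alignment within the bag and (ii) the bag-view recovery through \Cref{lem:viewB} in the presence of interleaved cross-bag actions of the outer process: an inner process' view restricted to $\Sigma^{\inn}(B)$ only changes when it synchronises with another bag member (possibly $o$), yet $\view_o(t)$ advances on cross-bag actions too, so one must verify that the "maximal $o$-event in $\hat t$" used in Algorithm~\ref{alg:bagview} is still correctly locatable from the bounded suffixes after arbitrarily many cuts --- essentially a Foata-prefix-monotonicity argument (\Cref{lem:prefixFoata}, \Cref{lem:viewFoata}) carried out carefully modulo the counter. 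The rest is a (lengthy but routine) simultaneous induction on the length of a linearisation of $t$, deferred to Appendix~\ref{app:full-construction-treeofbags}.
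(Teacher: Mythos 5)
Your proposal follows essentially the same route as the paper: the fairness construction of Section~\ref{sec:contrib-fair} applied bag-locally to $\trest$, the construction of~\cite{KRISHNA2013109} between outer processes, glued together by Algorithm~\ref{alg:bagview} and \Cref{lem:viewB}, with a recursive bag-wise generalisation of the $\state$ function to decide acceptance. One small correction: the paper omits the $\A$-state from inner processes not because it is functionally determined by the stored suffix (it is not), but because $\delta(q_0,\trest)$ is meaningless for recovering $\delta(q_0,t)$ and is never needed --- acceptance flows entirely through the outer process's $q_o$ and the bounded remainder $t'$, whose length is at most $k_B-1$ since $o(B)$ does not participate in it; this same observation (that both $t'$ and the maximal $o$-event lie within the last $k_B$ letters of $\view_B(\trest)$) is exactly how the paper disposes of the ``locating the maximal $o$-event from bounded suffixes'' obstacle you flag at the end.
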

%taken off-the-shelf, the construction maintains $(q_\iota, c_\iota, \varphi_\iota)$ where $\varphi_\iota$ gives a suffix of $\view_\iota(t \downarrow B)$ and $c_\iota$ maintains an index indicating how much has been cut. However the state $q_\iota$ does not make sense here: the trace $t \downarrow B$ is not a factor of $t$, as there are gaps where $t$ can contain other events, in particular due to the outer process of $B$ interacting with its external neighbours. Therefore, we do not maintain it and we do not need it either. It is sufficient to maintain $(c_\iota, \varphi_\iota)$ for every $\iota$. Then, we can apply the $\synchronise$ method to get a suffix of $\hat{t}$. Since it is fair inside $B$, the trace $t'$ is entirely of suffix of this suffix of $\hat{t}$. \todo{Sri will add a few stuffs about the parameter $k_B$ and the theorem summarising the contribution}

%Appendix~\ref{app:full-construction-treeofbags} gives further details about the construction.\todo{Ben: to remove entirely?} 

\section{Implementation and experimentation}
\label{sec:implementation}

We have implemented the synthesis algorithm of
Section~\ref{sec:contrib-fair}. It is available both as a docker container \cite{10.5281/zenodo.18165965}, and as a git repository \url{https://gitlab.lis-lab.fr/benjamin.monmege/faast}. 
The tool is implemented in Python. Each Python class contains tests that
clarify the usage of the tool, and a Jupyter notebook is provided that shows how to use the code, and some of the results we are able to obtain from it. There are no dependencies other than Python 3.

%\paragraph{Application to the Dining Philosophers Problem.}
%
Thanks to the tool, our algorithm was applied to the classic \emph{Dining Philosophers} problem~\cite{Dij71}, a fundamental case study for concurrency and fairness issues that we have recalled in the introduction.
The DFA modelling every possible sequence of actions of the $n\geq 2$ philosophers is trace closed but not fair (unless there are $n=2$ philosophers). One possible goal is to synthesise a distributed implementation of the dining philosophers that guarantees liveness (absence of starvation of philosophers, here meant in the sense that every philosopher should be able to do an action often enough). 

However the DFA is not fair. We can still adapt the algorithm of Section~\ref{sec:contrib-fair} to allow for an unfair DFA in input, as well as a fairness parameter $k\in\N$. The modified algorithm produces an AA that accepts all $k$-fair traces in the language of the trace-closed DFA (possibly none). The construction is modified to only generate a state if its FNF part is $k$-fair (i.e.~all infixes of length $k$ of all possible linearisations of the FNF involve all processes). We are then able to obtain an AA for the dining philosophers by using as fairness parameter $2n$. The value $2n$ is chosen to ensure that at least one complete set of actions for any philosopher occurs over any sufficiently long execution, thereby ensuring fairness.\footnote{If, e.g., the DFA forbids a philosopher to put back a chopstick as long as they did not eat, the built AA ensures that every philosopher eats often.}

Notice that the tool proposes to apply a small optimisation, modifying the \emph{cut} step: instead of keeping at least $2k-2$ letters in the FNF, we cut as many steps as possible in the FNF, by removing all steps containing letters that are all known by every process (i.e.~that are fully in the views of all processes). This is a correct optimisation, and we depict the various results we obtain with the tool on a small example in Appendix~\ref{app:tool}.

As an alternative solution for the dining philosophers, we could apply directly the construction of Section~\ref{sec:contrib-fair} to the intersection of the DFA with the set of all $2n$-fair traces. Notice that the set of all $2n$-fair traces is indeed a regular trace language that is fair, but the DFA that recognises it is huge (at least exponential in $n$). Thus, the intersection results in a huge DFA that is long to produce, and thus impracticable for the synthesis algorithm afterwards.

\paragraph{On-the-Fly Generation Optimisation.}
Our implementation also introduces a crucial optimisation to manage the potential state-space explosion of the AA, whose size is exponential with respect to the fairness parameter $k$. It consists of an  on-the-fly generation of the AA, that computes and stores the states and transitions of the AA only when they are effectively needed to compute an actual run from the initial global state. Instead of computing the transition functions $\delta_a$ for all letters $a$, we only write the recipe to generate all possible transitions. Afterwards, the collections of states and transitions are populated dynamically only: when we run the AA (to check whether a given word is accepted, or via a random exploration of the paths in the global semantics), it is only at this moment that the new local states, global transitions, and global accepting states are computed and stored.
This approach also ensures that the constructed AA is limited to its reachable portion from the initial state, which is essential in practice for analyzing complex distributed systems where the non-reachable global state-space would be too large to compute.

\section{Conclusion} 

We have studied Zielonka's theorem to synthesise AA from DFA specifications, in the context of fair specifications. % where no process can starve for too long. 
We have strengthened this result to restrict fairness to a subset of processes with an outer process, where outer processes have no cyclic dependencies. In these special cases, we obtain AA whose sets of local states do not depend on the number of processes, contrary to all previous methods where an exponential dependency in this number of processes is unavoidable. Furthermore, our construction is, arguably, simpler to understand conceptually. It would be interesting to study how the fairness restriction could simplify the construction of the gossip automaton, and thus the other proofs of Zielonka's theorem. As future works, we could also consider infinite traces, where the fairness condition makes even more sense. Similarly, we could investigate the fairness condition in the context of a specification that is stated as a logical formula, for instance LTL or PDL, hoping that our techniques can lead to beneficial results in the context of translating logical formulae into AA. We would also like to study extensions of our techniques to the setting of $k$-connectedly communicating processes~\cite{MadThi05}, where the fairness restriction can be relaxed for processes that are no longer communicating with each other in the future.

\bibliographystyle{splncs04}
\bibliography{fair-Zielonka-etaps26.bib}

\appendix

\section{Appendix for Section~\ref{sec:prelim}}

\prefixFoata*
\begin{proof}
  We proceed by induction on $\Foatalength s$, for a fixed trace $t$. 
  
  If $\Foatalength s = 1$, then $\Foata(s)$ is a single step containing the (independent) letters that label all the events of $s$. All such events are minimal events of $s$, and thus of $t$, since $s$ is an ideal of $t$. Therefore, the first step of $\Foata(t)$ contains all these letters too. 

  Suppose now that $\Foata(s)=S_1 \dots S_{k-1} S_k$ with $k\geq 2$. We also let $\Foata(t)=T_1\dots T_m$. By induction hypothesis, we have that for all $i \leq k-1$, $S_i \subseteq T_i$. Let $a \in S_k$. Since $s$ is an ideal of $t$, $a$ appears in $T_\ell$ for some $\ell \leq m$. We show that $\ell=k$. Since $a\in S_k$, we know that there exists $b \in S_{k-1}$ that is dependent on $a$, such that the corresponding events of $s$ (and thus of $t$) satisfy $e \lessdot f$. Since $b \in S_{k-1}\subseteq T_{k-1}$, we have $\ell \geq k$. Similarly, since $a\in T_\ell$, there exists $c \in T_{\ell-1}$ that is dependent on $a$, such that the corresponding events of $t$ satisfy $e' \lessdot f$. Since $s$ is an ideal of $t$, $e'$ is also an event of $s$, and thus $c$ appears in a step $S_{\ell'}$ with $\ell'<k$. 
  This means that $c \in S_{\ell'}\incl T_{\ell'}$ which implies that $\ell = \ell'+1 \leq k$. We thus conclude that $\ell = k$, so that $a\in T_k$. Since this holds for all letters $a\in S_k$, we have $S_k\subseteq T_k$. 
%  Assume on the contrary there exists $i \le m$ such that $a \in s_i \setminus t_i$ and $i$ is the least such index. Since $s$ is a trace prefix of $t$, $a \in t_j$ with $j \neq i$. There are two cases:
%  \begin{enumerate}
%    \item $i<j$. Then there is a letter $b \in t_{j-1}$ such that $(b, a) \in D$. However since $s$ is a prefix of $t$, $b$ occurs in $s$, say in component $s_l$ with $l < i$. If $i = 1$ then we have an absurdity, otherwise $l < j-1$ (since $i < j$) implying $b \in d_l \setminus c_l$ contradicting the minimality of $i$.
%    \item $j<i$. Then there is a letter $b \in s_{j-1}$ such that $(b, a) \in D$. By minimality of $i$, $b \in t_{i-1}$ which implies $i-1<j$, an absurdity.
%  \end{enumerate}
\end{proof}

\viewFoata*
\begin{proof}
  For all $p \in X$, $\view_p(t)$ is an ideal of $\view_X(t)$. From Lemma~\ref{lem:prefixFoata}, we thus know, for all $i\leq \Foatalength{\view_p(t)}$, 
  $\Foata(\view_p(t))_i \subseteq \Foata(\view_X(t))_i$. This inequality is even true for $\Foatalength{\view_p(t)}<i\leq \Foatalength{\view_X(t)}$, since we have added empty sets as steps at the end of FNF. Thus, we have $\bigcup\limits_{p\in X} \Foata(\view_p(t))_i \subseteq \Foata(\view_X(t))_i$.
  
  Reciprocally, if some event labelled by letter $a$ occurs in $\view_X(t)$ then it appears in some $\view_p(t)$ for $p \in X$, by definition of the views. Thus, we get the desired equality.
\end{proof}

\section{Proof of Section~\ref{sec:fair}}

\pairOfProcsAtmostKApart*
\begin{proof}
    Let $p\in P$. Let $s$ the factor (even the suffix) of $t$ after $\view_p(t)$: i.e., we have $t = \view_p(t) s$. Since $\view_p(t)$ is the view of process $p$, no event of $s$ has a label in $\Sigma_p$. Since $t$ is $k$-fair, the length of $s$ is bounded by $k-1$, and thus $\length{t}\leq \length{\view_p(t)}+k-1$, so that $\length{\view_p(t)}\leq \length{t}< \length{\view_p(t)}+k$. 

    Let $p, p' \in P$. By the previous bounds, we have $0\leq \length{t}-\length{\view_p(t)}< k$, and $0\leq \length{t}- \length{\view_{p'}(t)}<k$. Thus, $-k < \length{\view_p(t)}- \length{\view_{p'}(t)} < k$ which proves the lemma.
\end{proof}

\completeTraceBeyondKFoataComponents*
\begin{proof}
  Let $t_p = \view_p(t)$, and $j=\f(t_p, k-1)$. By using Lemma~\ref{lem:prefixFoata}, we know that 
  for all $i < j$, $\Foata(t_p)_i \subseteq \Foata(t)_i$. If this inequality is an equality, for all $i$, the lemma holds. Otherwise, let 
  $\alpha$ be such that $\Foata(t_p)_\alpha \subsetneq \Foata(t)_\alpha$, and for all $i < \alpha$, $\Foata(t_p)_i = \Foata(t)_i$. Let $a \in \Foata(t)_\alpha \setminus \Foata(t_p)_\alpha$ and $p' \in \loc(a)$. For all $i \geq \alpha$, $\Sigma_{p'} \cap \Foata(t_p)_i = \emptyset$ since otherwise the event labelled by $a$ in the step $\Foata(t)_\alpha$ would also occur in $t_p$.
  Let $u$ be a linearisation of a trace whose FNF is $\Foata(t_p)_\alpha \cdots \Foata(t_p)_{\Foatalength{t_p}}$. Since $u$ is a factor of a linearisation of a $k$-fair trace, and $p'\notin\loc(u)$, the length of $u$ is less than $k$. Thus, $j\leq \alpha$. In particular, for all $i \leq j$, we have $i\leq \alpha$ and thus $\Foata(t)_i = \Foata(t_p)_i$ as expected.
\end{proof}

\everyoneKnowsBeyondKFoataComponents*
\begin{proof}
  Let $t_p = \view_p(t)$, $j=\f(t_p, 2k-2)$. By application of
  Lemma~\ref{lem:completeTraceBeyondKFoataComponents}, we already know that for
  all $i< \f(t_p, k-1)$ (and thus $i < j$ since $\f(t_p, k-1) \geq
  j$), $\Foata(t)_i = \Foata(t_p)_i$.

  Let $p'\in P$. 
  By Lemma~\ref{lem:prefixFoata}, we know that this implies that  
  $\Foata(\view_{p'}(t))_i \subseteq \Foata(t)_i = \Foata(t_p)_i$, where the last equality comes from the above result. In particular,
  $\Foata(\view_{p'}(t_p))_i \subseteq \Foata(t_p)_i$ (since all events of $\view_{p'}(t_p)$ are in $\view_{p'}(t)$). The trace $t_p$, that has length at least $2k-2$, can be written as $\view_{p'}(t_p) t'$.
  Since $t_p$ is $k$-fair (as an ideal of a $k$-fair trace), and $p'$ does not
  participate in any letter of $t'$, a factor of $t_p$, the length of $t'$ is less than $k$. Thus, the length of $\view_{p'}(t_p)$ is at least $2k-2-|t'| \geq k-1$. In
  particular, $f(\view_{p'}(t_p), k-1) \geq j$. By
  Lemma~\ref{lem:completeTraceBeyondKFoataComponents} applied on $t_p$ and
  process $p'$, we know that for all $\ell<f(\view_{p'}(t_p), k-1)$ (and thus for all $\ell<j$),
  $\Foata(t_p)_\ell = \Foata(\view_{p'}(t_p))$.
\end{proof}

\fairDFALanguage*
\begin{proof}
  If $\A$ is $k$-fair, let $t \in \Lt(\A)$, and $u$ a factor of length at least $k$ of a linearisation $v_1 u v_2$ of $t$. Let $q = \delta(q_0, v_1)$, and $q' = \delta(q, u)$. Since $\delta(q', v_2) \in F$, $q'$ is co-reachable. Since $\A$ is $k$-fair, $\loc(u) = P$.

  Reciprocally, if $\Lt(\A)$ is $k$-fair, assume that $\A$ is not $k$-fair, i.e.~there exist linearisations $u, v, w$ of traces such that $|v| \ge k$ and $\delta(q_0, uvw) \in F$, but $\loc(v) \neq P$. Since $uvw \in \Lt(\A)$ with $|v| \ge k$ and $\loc(v) \neq P$, the $k$-fair property of $\Lt(\A)$ is contradicted. Thus $\A$ is $k$-fair.
\end{proof}

\subsection{\texorpdfstring{Deciding $k$-fairness of a DFA}{Deciding k-fairness of a DFA}}\label{app:deciding-kfair}

\begin{algorithm}[H]
  \caption{Check Fairness of an DFA}\label{alg:k-fair-check}
  \begin{algorithmic}[1]
    \Require $\A = (Q, \Sigma, q_0, \Delta, Q_f)$ is a trim diamond DFA.
    \For {$k \gets \{1, \dots, |Q|\}$}
      \For {$p \gets P$}
      \State $\mathit{fair}(k, p) \gets \mathit{False}$
        \State $G_p \gets \{(q, q') \in Q \times Q \mid \exists a \in \Sigma \setminus \Sigma_p.\ q' \in \Delta(q, a)\}$
        \If {$(G_p)^k = \emptyset$} \Comment{$(G_p)^k$ is the relation $G_p$ composed with itself $k$ times}
          \State $\mathit{fair}(k, p) \gets True$
        \EndIf
      \EndFor
      \If {$\forall p \in P,\ \mathit{fair}(k, p) = \mathit{True}$}
        \Return $k$
      \EndIf
    \EndFor
    \State \Return $\bot$
  \end{algorithmic}
 \end{algorithm}
 
 \begin{lemma}
   For a trim DFA $\A$ satisfying the diamond property, either \Cref{alg:k-fair-check} returns the least positive integer $k$ such that $\A$ is $k$-fair, or \Cref{alg:k-fair-check} returns $\bot$ if $\A$ is not $k$-fair for all $k$. \Cref{alg:k-fair-check} runs in time $O(|P| \cdot |Q|\cdot (|\Delta| + |Q|^4))$
   \label{lem:alg:k-fair-check-correct}
 \end{lemma}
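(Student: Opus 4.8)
The plan is to split the correctness argument into three small observations — a characterisation of the inner-loop test, monotonicity of $k$-fairness, and a pigeonhole bound justifying the cutoff $|Q|$ — and then assemble them; the running-time bound is then a matter of bookkeeping.

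First I would pin down the combinatorial meaning of the test ``$(G_p)^k = \emptyset$''. Since $\A$ is trim, every state is reachable from $q_0$ and co-reachable, so Definition~\ref{def:kFairDFA} collapses to: $\A$ is $k$-fair iff no word $u$ with $|u|\ge k$ that labels a path between two states of $\A$ satisfies $\loc(u)\ne P$. For a fixed process $p$, a path whose labels all lie in $\Sigma\setminus\Sigma_p$ is exactly a walk in the graph $G_p$, and $\loc(u)\ne P$ witnessed by $p$ means precisely $p\notin\loc(a)$ for every letter $a$ of $u$. Hence $\A$ fails $k$-fairness ``because of $p$'' iff $G_p$ has a walk of length at least $k$; truncating such a walk to its first $k$ edges (and using the trivial reverse inclusion), this holds iff $G_p$ has a walk of length exactly $k$, i.e.\ iff $(G_p)^k\ne\emptyset$. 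Consequently, for each fixed $k$ the inner loop of \Cref{alg:k-fair-check} correctly decides whether $\A$ is $k$-fair.

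Second, $k$-fairness is monotone: a word of length $\ge k+1$ has length $\ge k$, so $k$-fair implies $(k+1)$-fair, and thus $\{k : \A \text{ is } k\text{-fair}\}$ is upward-closed. Therefore the least element of this set, if it is nonempty, is exactly the first $k$ at which the inner-loop test of \Cref{alg:k-fair-check} succeeds. Third — this is the observation licensing the cutoff at $|Q|$ — if $\A$ is not $|Q|$-fair then some $G_p$ has a walk of length $|Q|$; such a walk visits $|Q|+1$ vertices and hence contains a cycle, so $G_p$ has walks of every length $k\ge 1$, i.e.\ $(G_p)^k\ne\emptyset$ for all $k$, i.e.\ $\A$ is $k$-fair for no $k$. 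Combining the three: if the algorithm returns some $k$, then by the first and second points $k$ is the least integer for which $\A$ is $k$-fair; and if the outer loop runs to completion and returns $\bot$, then in particular $\A$ is not $|Q|$-fair, hence by the third point not $k$-fair for any $k$, so $\bot$ is the correct answer.

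For the complexity, the outer loop performs at most $|Q|$ iterations and the inner loop $|P|$ iterations, so it suffices to bound one execution of the inner body. Constructing $G_p$ is a single scan of the transitions, $O(|\Delta|)$. Computing $(G_p)^k$ as a Boolean relation costs at most $k-1 \le |Q|$ Boolean matrix products on $|Q|\times|Q|$ matrices, i.e.\ $O(|Q|\cdot|Q|^3)=O(|Q|^4)$, and testing emptiness of the resulting relation is $O(|Q|^2)$, which is absorbed. Hence one inner iteration costs $O(|\Delta|+|Q|^4)$, and the total running time is $O(|P|\cdot|Q|\cdot(|\Delta|+|Q|^4))$, as claimed. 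The only mildly non-routine ingredient in the whole proof is the truncation-and-cycle reasoning that links ``walk of length exactly $k$'', ``walk of length at least $k$'', and ``existence of a cycle''; everything else is direct.
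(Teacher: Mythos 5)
Your proof is correct and follows essentially the same route as the paper's: both identify $(G_p)^k\neq\emptyset$ with the existence of a length-$k$ path avoiding $\Sigma_p$ (using trimness to discharge the reachability/co-reachability side conditions), both justify the cutoff at $|Q|$ by a pigeonhole/cycle argument, and both do the same complexity bookkeeping. Your version is a bit more explicit about the truncation from ``length $\geq k$'' to ``length exactly $k$'' and about monotonicity, but these are refinements of, not departures from, the paper's argument.
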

 
 \begin{proof}
  To begin with: suppose $\A$ is not $|Q|$-fair. Then, there is a word $u = a_1 a_2 \dots a_{|Q|}$, such that $\loc(u) \neq P$ and $\delta(q, u)$ is co-accessible. Let the run of $\A$ on $u$ be: $q_0 \xrightarrow{a_1} q_1 \cdots \xrightarrow{a_{|Q|}} q_{|Q| + 1}$. Since there number of states is $\le |Q|$, some state repeats in this run. This shows that there is a loop in the DFA where not all processes participate. Hence $\A$ is not fair. This shows that $\A$ is fair iff $\A$ is $|Q|$-fair. Now, let us show the correctness and complexity of the algorithm.

   If $\A$ is $k$-fair (for $k \le |Q|$) and not $k'$-fair for all $k' < k$, then for each $k' < k$, there exists a pair of states $(q, q') \in Q \times Q$, a word $u$ of length $k'$, and a process $p \in P$ such that $q' \in \Delta(q, u)$ and $p \notin \loc(u)$. Thus, $(q, q') \in (G_p)^{k'}$ and $\mathit{fair}(k', p) = \mathit{False}$ and the condition on line 7 evaluates to $\mathit{False}$. At iteration $k$, $(G_p)^k = \emptyset$ for all $p \in P$ for otherwise if $(q, q') \in (G_p)^k$ for some $p \in P$ then there exists a word $u$ of length $k$ such that $q' \in \Delta(q, u)$ and $p \notin \loc(u)$ which is not possible. Therefore, the algorithm returns the least non-negative integer $k$ such that $\A$ is $k$-fair.
 
   If $\A$ is not fair, then it is not $|Q|$-fair, and thus for all $k \le |Q|$ there exists $p \in P$ such that $\mathit{fair}(k, p) = \mathit{False}$. Hence, the algorithm returns $\bot$.

 There are at most $|Q| \times |P|$ iterations of lines 3--6. The graph $G_p$
 in Line 4 can be computed in $O(|\Delta|)$ time and $(G_p)^k$ in line 5 can be
 computed in $O(k \cdot |Q|^3)$ time. Line~7 takes $O(|P|)$ time. Thus, the
 algorithm runs in time $O(|P|\cdot |Q| \cdot (|\Delta| + k |Q|^3))$. Since $k\in O(|Q|)$, we deduce the announced complexity. 
 \end{proof}

\section{Full proof of Theorem~\ref{thm:fair-AA}}\label{app:full-proof-fairness}

We first build an infinite AA from the DFA specification based on the FNF, and then simplify it with the counter to make it finite in the end.

\subsection{The construction of an infinite AA based on the Foata normal form}\label{sec:infinite}

Towards the proof of Theorem~\ref{thm:fair-AA}, we first fix a DFA $\A = (Q, \Sigma, q_0, \Delta, F)$ satisfying the diamond property over a concurrent alphabet $(\Sigma, I)$, and a distributed alphabet $(\Sigma, \loc)$ such that $I_\loc = I$, and we show that there exists an infinite AA over $(\Sigma, \loc)$ that recognises it. This is of course less powerful than Zielonka's theorem, but the construction we use (that consists in an unfolding based on the FNF) is a prerequisite for the later simplification, in case the DFA 
$\A$ is fair. In this section, we thus do not rely on any fairness property. 

The infinite AA that we build is $\B^\infty = ((Q^\infty_p)_{p \in P}, \Sigma, q^\infty_0, (\delta^\infty_a)_{a \in \Sigma}, F^\infty)$. For each process $p\in P$, we let $Q^\infty_p$ be the set of FNF of possible views of this process along an execution: $Q^\infty_p = \{\varepsilon\} \uplus \{\Foata(\view_p(t))\mid t \text{ non-empty trace}\}$, where $\varepsilon$ denotes the FNF of the empty trace that we use at the beginning of the execution.
We thus let $q^\infty_0 = (\{\varepsilon\}, \ldots, \{\varepsilon\})$ since no process has viewed any letter of the trace. We now introduce two operations in order to define the transitions and final states.

First, we explain the operation $\synchronise$ that consists in merging the views of a subset of processes (either all processes at the end of the trace, or just the processes that will read a common letter of the trace). For a subset $X\subseteq P$ of processes, and local states $(\varphi_{p})_{p\in X}$ for all these processes, we let $\synchronise((\varphi_{p})_{p\in X}) = \bigcup_{p\in X} \varphi_{p}$ be the stepwise union of the FNF.
% (i.e.~the sequence $S_1\cdots S_m$ of steps obtained by taking the union of the corresponding steps: if for all $p$, $q^\infty_p = S^{(p)}_1\cdots S^{(p)}_m$, possibly adding empty sets on the right so that the sequences have the same length for all processes $p\in X$, we have $S_i = \bigcup_{p\in X} S^{(p)}_i$ for all $i$). 
By Lemma~\ref{lem:viewFoata}, this implies that: 

\begin{lemma}\label{lem:infiniteB-synchronisation}
  For all traces $t$, 
  we have $\synchronise((\Foata(\view_p(t)))_{p\in X}) = \Foata(\view_X(t))$. 
\end{lemma}

For instance, before reading the second $b$ in Figure~\ref{fig:run-of-infinite-aa}, processes $p_1$ and $p_3$ synchronise their respective FNF of views, by adding $c$ and $d$ in their last steps respectively. Before reading the first  $c$, processes $p_2$ and $p_3$ synchronise: $p_2$ learns about the new step $\{b\}$ in the FNF. 

After the synchronisation step, we explain the operation $\expand$ that consists in adding a new letter to the trace. For a subset $X\subseteq P$ of processes, a sequence $S_1\cdots S_m$ of steps (the local state common to all these processes since they first synchronised), and a letter $a$ such that $\loc(a) = X$, we let $\expand(S_1\cdots S_m, a) = S_1\cdots S_m\{a\}$. This expansion builds again a FNF, since there exists a letter in the step $S_m$ where one of the processes of $X$ participates (otherwise this step would simply not appear in the synchronisation step), and thus that is dependent on $a$.%\todo{Sri: to discuss this.}

With the help of these two functions, we define the transition functions. For $a\in \Sigma$, letting $X= \loc(a)$, and local states $(\varphi_{p})_{p\in X}$, we let $S'_1\cdots S'_{m'} = \expand(\synchronise((\varphi_{p})_{p\in X}), a)$, and define $\delta^\infty_a((\varphi_{p})_{p\in X}) = (S'_1\cdots S'_{m'}, \ldots, S'_1\cdots S'_{m'})$. By induction, we maintain the following invariant over the state reached in $\B^\infty$ after having read a certain trace: 
\begin{lemma}\label{lem:Binfty}
For all traces $t=[w]$ with $w\in \Sigma^*$ and processes $p\in P$, the local state of $p$ reached in $\B^\infty$ after having read $w$ is $\Foata(\view_p(t))$.
\end{lemma}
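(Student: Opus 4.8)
The plan is to prove the invariant by induction on the length of $w$, the inductive step being driven by the definitions of $\delta^\infty_a$, by Lemma~\ref{lem:infiniteB-synchronisation}, and by one structural fact about views under trace concatenation.

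\textbf{Base case.} For $w=\varepsilon$ the trace $t$ is empty, every process sits in its initial local state $\varepsilon$, and $\Foata(\view_p(t))=\varepsilon$, so the invariant holds.

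\textbf{Inductive step.} I would write $w=w'a$, set $t'=[w']$ and $X=\loc(a)$, and let $e_a$ be the new event of $t=t'\cdot[a]$, which is maximal in $t$. By the AA semantics, reading $a$ leaves the local state of every $p\notin X$ unchanged and sets the local state of every $p\in X$ to $\expand(\synchronise((\varphi_q)_{q\in X}),a)$, where by the induction hypothesis $\varphi_q=\Foata(\view_q(t'))$. For $p\notin X$, the event $e_a$ is labelled outside $\Sigma_p$ and is maximal, so $\max_p(t)=\max_p(t')$ and $\view_p(t)=\view_p(t')$, and the claim follows from the induction hypothesis. For $p\in X$ we have $\max_p(t)=e_a$, hence $\view_p(t)=\ideal{e_a}$, and the heart of the argument is to establish the identity
\[ \ideal{e_a}\;=\;\view_X(t')\cdot[a]. \]
For $\view_X(t')\subseteq\ideal{e_a}$: for each $q\in X$ the labels of $\max_q(t')$ and of $e_a$ both lie in $\Sigma_q$, hence are dependent, so by the first trace axiom $\max_q(t')$ and the maximal event $e_a$ are comparable, forcing $\max_q(t')<e_a$ and thus $\view_q(t')=\ideal{\max_q(t')}\subseteq\ideal{e_a}$; taking the union over $q\in X$ gives the inclusion. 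Conversely, if $f<e_a$, take a covering chain $f\le f_{n-1}\lessdot e_a$; the second trace axiom gives $(\lambda(f_{n-1}),a)\notin I$, so $\lambda(f_{n-1})\in\Sigma_q$ for some $q\in X$, and since $f_{n-1}$ is an event of $t'$ with a $\Sigma_q$-label it is comparable with (hence $\le$) the largest $\Sigma_q$-event $\max_q(t')$, so $f\in\view_q(t')\subseteq\view_X(t')$. This shows $\ideal{e_a}=\view_X(t')\cup\{e_a\}$; that the induced order is exactly the concatenation order is then immediate, since everything strictly below $e_a$ lies in $\view_X(t')$, so $f\lessdot e_a$ in $t$ precisely when $f$ is a maximal event of $\view_X(t')$ with $(\lambda(f),a)\notin I$. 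Applying $\Foata$, using that appending $e_a$ yields $\Foata(\view_X(t'))\{a\}$ (as already justified when introducing $\expand$), together with Lemma~\ref{lem:infiniteB-synchronisation}, we obtain
\[ \Foata(\view_p(t)) = \Foata(\view_X(t'))\{a\} = \expand\bigl(\synchronise\bigl((\Foata(\view_q(t')))_{q\in X}\bigr), a\bigr), \]
which is precisely the local state of $p$ after reading $w$, closing the induction.

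\textbf{Main obstacle.} The only delicate point is the structural identity $\view_p([w'a])=\view_{\loc(a)}([w'])\cdot[a]$ for $p\in\loc(a)$: one must argue, using the two defining axioms of traces, that the downward closure of the freshly added event coincides, as a labelled partial order, with the concatenation of the current joint view with a single $a$-event. Everything else is a routine unfolding of the definitions of $\synchronise$, $\expand$, and the AA semantics.
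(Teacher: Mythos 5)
Your proof is correct and follows exactly the route the paper intends: the paper merely asserts the invariant ``by induction'' over the definitions of $\synchronise$ and $\expand$, and your argument is that induction spelled out, with the one genuinely non-trivial ingredient --- the identity $\view_p(t'\cdot[a])=\view_{\loc(a)}(t')\cdot[a]$ for $p\in\loc(a)$ --- correctly derived from the two trace axioms. Your handling of both inclusions (comparability of $\max_q(t')$ with the new maximal event, and the covering-chain argument forcing a dependent predecessor in some $\Sigma_q$) is sound, and deferring the step $\Foata(\view_X(t')\cdot[a])=\Foata(\view_X(t'))\{a\}$ to the paper's justification of $\expand$ is legitimate.
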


We can also define the accepting global states of $\B^\infty$. For a global state $(\varphi_{p})_{p\in P}$, we put it in $F^\infty$ if and only if the state $q'$ reached in $\A$ after reading any word $w$ such that $\Foata([w]) = \synchronise((\varphi_{p})_{p\in P})$ is accepting in $\A$: since $\A$ satisfies the diamond property, this definition does not depend on the choice of $w$ in the equivalence class $[w]$. Once again using Lemma~\ref{lem:viewFoata}, and the previous invariant, we deduce that after having read a trace $t$ in $\B^\infty$, we reach the global state $(\varphi_{p})_{p\in P}$ such that $\synchronise((\varphi_{p})_{p\in P})$ is exactly $\Foata(t)$: we thus accept $t$ if and only if any of its linearisations $w$ is accepted in $\A$. This ends the proof that 

\begin{proposition}\label{prop:infinite-AA}
  The AA $\B^\infty$ recognise the language of the DFA $\A$. 
\end{proposition}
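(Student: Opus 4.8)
The plan is to derive Proposition~\ref{prop:infinite-AA} from the invariant of Lemma~\ref{lem:Binfty}, so the content is in proving that invariant. I would first isolate two elementary facts about traces. \emph{Fact A:} for any trace $s$, any letter $a$, and any $p\in\loc(a)$, one has $\view_p(s\cdot[a])=\view_{\loc(a)}(s)\cdot[a]$. Indeed, in $s\cdot[a]$ the unique $a$-labelled event $e_a$ is maximal and is $\ge$ every other event labelled in $\Sigma_p$ (it shares process $p$ with each of them), so $\view_p(s\cdot[a])=\ideal{e_a}$; and $\ideal{e_a}\cap s=\view_{\loc(a)}(s)$, since any $f\lessdot e_a$ has $\lambda(f)$ dependent on $a$ and hence $f\le\max_q(s)$ for some $q\in\loc(a)$, while conversely $\max_q(s)\le e_a$ for every $q\in\loc(a)$. \emph{Fact B:} if the last step of $\Foata(s)$ contains a letter dependent on $a$, then $\Foata(s\cdot[a])=\expand(\Foata(s),a)=\Foata(s)\{a\}$. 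The point is that the index of the Foata step containing an event is determined by that event's down-set only; so every event of $s$ keeps its step index, while $e_a$ falls exactly one step beyond the last step of $s$ --- it cannot merge into that step because it lies strictly above an element of it, and it cannot fall later because all of its $\lessdot$-predecessors already lie in $s$.

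Granting Facts A and B, I would prove Lemma~\ref{lem:Binfty} by induction on the length of a linearisation $w$ of $t$. The base case $w=\varepsilon$ is the definition of $q^\infty_0$. For $w=w'a$, set $t'=[w']$, $t=[w'a]$, $X=\loc(a)$. If $p\notin X$, then $\delta^\infty_a$ does not touch the component of $p$ and $\view_p(t)=\view_p(t')$ (the new event is maximal and not labelled in $\Sigma_p$), so the induction hypothesis suffices. If $p\in X$, the induction hypothesis gives that every $q\in X$ is in state $\Foata(\view_q(t'))$ just before $a$ is read; by Lemma~\ref{lem:infiniteB-synchronisation}, $\synchronise\bigl((\Foata(\view_q(t')))_{q\in X}\bigr)=\Foata(\view_X(t'))$, and its last step contains $\lambda(\max_{p^\ast}(t'))\in\Sigma_{p^\ast}$ for a process $p^\ast\in X$ whose view has maximal Foata length --- because $\max_{p^\ast}(t')$ is the unique $\le$-maximum of $\view_{p^\ast}(t')$ and hence alone in the last Foata step --- and this letter is dependent on $a$ since $p^\ast\in\loc(a)$. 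Fact B then yields $\expand(\Foata(\view_X(t')),a)=\Foata(\view_X(t')\cdot[a])$, and Fact A rewrites the right-hand side as $\Foata(\view_p(t))$, which is exactly the state $\delta^\infty_a$ assigns to $p$. This closes the induction, and incidentally verifies that $\delta^\infty_a$ maps into $\prod_{p\in\loc(a)}Q^\infty_p$, so $\B^\infty$ is well defined.

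To conclude, fix a trace $t=[w]$. By Lemma~\ref{lem:Binfty}, after reading $w$ the automaton $\B^\infty$ is in the global state $(\Foata(\view_p(t)))_{p\in P}$. Applying Lemma~\ref{lem:infiniteB-synchronisation} with $X=P$, together with $\view_P(t)=t$ (each event, say labelled $a$, lies below $\max_p(t)$ for any $p\in\loc(a)$, and $\loc(a)\ne\emptyset$ since $I_\loc$ is irreflexive), we obtain $\synchronise\bigl((\Foata(\view_p(t)))_{p\in P}\bigr)=\Foata(t)$. By the definition of $F^\infty$ --- which is well posed because $\A$ has the diamond property, so $\Delta(q_0,\cdot)$ is constant on every $\sim$-class --- this global state is accepting iff $\A$ accepts some, equivalently every, linearisation of $t$, i.e.\ iff $t\in\Lt(\A)$. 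Hence $\Lt(\B^\infty)=\Lt(\A)$.

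I expect the main obstacle to be Fact A together with the part of Fact B asserting that $\expand$ returns the genuine Foata normal form: that is, pinning down precisely why reading a shared letter $a$ updates each participant's local state to the stepwise union of the views of all co-participants of $a$, extended by a single fresh step $\{a\}$. Once this is in place, everything downstream is bookkeeping over Lemmas~\ref{lem:viewFoata} and~\ref{lem:prefixFoata}.
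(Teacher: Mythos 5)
Your proof is correct and follows essentially the same route as the paper: the invariant of Lemma~\ref{lem:Binfty} proved by induction on a linearisation, using Lemma~\ref{lem:infiniteB-synchronisation} for $\synchronise$ and the dependence of $a$ on the last step for $\expand$, then concluding via the stepwise union over all of $P$ and the diamond property. The only difference is one of detail: the paper merely asserts the induction and justifies $\expand$ in one sentence, whereas your Facts A and B (that $\view_p(s\cdot[a])=\view_{\loc(a)}(s)\cdot[a]$ for $p\in\loc(a)$, and that appending $\{a\}$ yields the genuine FNF) make the induction step fully explicit --- both facts are correct and are exactly what the paper leaves implicit.
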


Towards the result of Theorem~\ref{thm:fair-AA}, we modify the infinite AA $\B^\infty$ built in the Proposition~\ref{prop:infinite-AA}. We first introduce counters in order to only keep suffixes of the views for each process. The counters will be unbounded, and thus the AA would still be infinite. We then explain how to make the counters finite, by modulo counting. We now heavily relies on the hypothesis that $\A$ is supposed to be $k$-fair (without loss of generality by Proposition~\ref{prop:fairDFALanguage}). 

\subsection{Introduction of unbounded counters to cut the views of processes}\label{sec:unbounded-counter}

We build upon Lemma~\ref{lem:everyoneKnowsBeyond2KFoataComponents}, stating that if the view $t_p$ of a process $p$ is of length at least $2k-2$, then all other processes know entirely all the steps of the FNF of the trace that has been read so far, except possibly the steps starting from the one of index $f(t_p, 2k-2)$. We thus build another infinite AA $\B^\N= ((Q^\N_p)_{p \in P}, \Sigma, q^\N_0, (\delta^\N_a)_{a \in \Sigma}, F^\N)$ 
that recognises the same language. The exponent $\N$ is there to remember that the only infinite part in this new state space is a counter taking values in $\N$. 

We thus need to introduce a new operation $\cut$ that consists in only keeping the shortest suffix of the FNF that contains at least $2k-2$ letters, also keeping a counter (a natural number) to remember how many letters have been forgotten so far, and the state of $\A$ that was reached after having read those letters (this is crucial so that processes can still collectively compute the state reached in $\A$ at the end of the trace).

The set of states $Q^\N_p$ of a process $p\in P$ contains all triples $(q, c, \varphi)$ with $q\in Q$, $c\in \N$, and $\varphi$ a FNF of the form $S_1 S_2 \ldots S_m$ such that the number of letters of $S_2\ldots S_m$ is at most $2k-2$
(optionally, we may also restrict ourselves to FNF of traces that can be the suffix of a view of process $p$: in particular, the last step, if it exists, must contain a letter in which $p$ participates).

Starting from a tuple $(q, c, \varphi)$ with $q\in Q$, $c\in \N$, $\varphi = S_1 \cdots S_m$ a FNF, we define $\cut(q, c, \varphi)$. If 
the number of letters in $S_2\cdots S_m$ is less than $2k-2$, we let $\cut(q, c, \varphi)=(q, c, \varphi)$. 
Otherwise, the tuple is not a local state since the Foata component is too long. We let $\cut(q, c, \varphi)=
(q', c+\sum_{\ell=1}^{j-1}|S_\ell|, S_{j}\cdots S_{m})$ where: 
\begin{itemize}
\item $j$ is the largest index such that $S_{j}\cdots S_{m}$ contains at least $2k-2$ letters;
\item $q'$ is the unique state of $\A$ reached from $q$ by reading all letters of $S_1$, and then all letters of $S_2$, etc.~until all letters of $S_{j-1}$ (for each step, letters can be read in any order, since $\A$ satisfies the diamond property and all letters of a step are independent).
\end{itemize}

We also redefine the functions $\synchronise$ and $\expand$ over these new states. Let $a$ be a letter, with $X=\loc(a)$, and $(q_p, c_p, \varphi_{p})$ be the local state of process $p$, for all $p\in X$. We let $\synchronise((q_p, c_p, \varphi_{p})_{p\in X}) = (q', c', \varphi')$ defined as follows: 
\begin{itemize}
\item we consider any process $p'$ with $c_{p'}=\max(c_p)$, i.e.~one of the processes that has the most recent information, counterwise;
\item we let $q'=q_{p'}$ and $c'=c_{p'}$, trusting the information of process $p'$;
\item for all processes $p\in X$, if $\varphi_{p} = S_1\cdots S_m$, we let $j$ be such that $S_1\ldots S_{j}$ contains $c_{p'}-c_p$ letters (all those letters are known by every process already, since $p'$ has at least $2k-2$ letters after the $c_{p'}$-th letter, and we are thus ensured that there is a union of steps that contain $c_{p'}-c_p$ letters), and we let $\psi_{p} = S_{j+1} \cdots S_m$: we thus simply forget every step that is too old (and known by everyone) with respect to the newest information;
\item finally, we let $\varphi' = \bigcup_{p\in X} \psi_{p}$, thus generalising the operation $\synchronise$ defined in $\B^\infty$. 
\end{itemize}
We also let $\expand((q', c', \varphi'), a) = (q', c', \varphi'\{a\})$, generalising the previous definition of $\expand$. 

To define $B^\N$, we thus consider as initial state $q^\N_0 = ((q_0, 0, \{\varepsilon\}), \ldots, (q_0, 0, \{\varepsilon\}))$; for all letters $a\in \Sigma$, and local states $(q_p, c_p, \varphi_{p})_{p\in X}$, we let 
\begin{align*}
  \delta^\N_a((q_p, c_p, \varphi_{p})_{p\in X}) = ((q'', c'', \varphi''), \ldots, (q'', c'', \varphi''))
\end{align*}
 where $(q'', c'', \varphi'') = \cut(\expand(\synchronise((q_p, c_p, \varphi_{p})_{p\in X}), a))$; 
Finally, we let $F^\N$ be the set of states $(q_p, c_p, \varphi_{p})_{p\in P}$ such that 
$\synchronise((q_p, c_p, \varphi_{p})_{p\in P}) = (q', c', \varphi')$ and the state reached by $\A$ from $q'$ by reading all the letters of the steps of $\varphi'$, from left to right, is accepting in $\A$: once again, since $\A$ satisfies the diamond property, the order in which the letters of a step of $\varphi'$ are read does not matter.

\begin{lemma}\label{lem:Binfty-BN}
  The AA $\B^\infty$ and $\B^\N$ recognise the same language. 
\end{lemma}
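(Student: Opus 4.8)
The plan is to show that $\B^\infty$ and $\B^\N$ recognise the same language by establishing a precise correspondence between the global states reached along a run on a trace $t$. Concretely, I would prove by induction on the length of $w$ the following invariant: if $w \in \Sigma^*$, $t = [w]$, and the global state reached by $\B^\infty$ on $w$ is $(\Foata(\view_p(t)))_{p\in P}$ (which holds by Lemma~\ref{lem:Binfty}), then the global state reached by $\B^\N$ on $w$ is $((q_p, c_p, \varphi_p))_{p\in P}$ where, for each $p$, the triple encodes a ``truncation'' of $\Foata(\view_p(t))$: the counter $c_p$ equals the number of letters forgotten so far, $\varphi_p$ is a suffix of $\Foata(\view_p(t))$ containing between $2k-2$ and $3k-3$ letters (or the whole FNF if it is shorter than $2k-2$), the forgotten prefix consists exactly of the first $\Foatalength{\view_p(t)} - \Foatalength{\varphi_p}$ steps of $\Foata(\view_p(t))$, and $q_p = \delta(q_0, u_p)$ where $u_p$ is any linearisation of that forgotten prefix (well-defined by the diamond property). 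The key structural fact making this coherent is that, by Lemma~\ref{lem:everyoneKnowsBeyond2KFoataComponents}, each forgotten step is a step of $\Foata(t)$ that is already fully known by every process, so all processes agree on these forgotten prefixes up to the point where they have forgotten them.

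The induction step splits according to the three operations $\synchronise$, $\expand$, $\cut$ applied in $\B^\N$, mirrored against the single operation $\delta^\infty_a = \expand \circ \synchronise$ in $\B^\infty$. For a letter $a$ with $X = \loc(a)$: first I would verify that $\synchronise$ on the $\B^\N$ states correctly reconstructs, for the process $p'$ with maximal counter, a common truncated view equal to the truncation (at level $c_{p'}$) of $\Foata(\view_X(t'))$ where $t'$ is the trace before reading $a$ — this uses Lemma~\ref{lem:viewFoata} together with the fact that the steps forgotten by the other processes in $X$ (but not by $p'$) are fully known, so re-adding them changes nothing in the stepwise union. The crucial point to check here is that the index $j$ in the definition of $\synchronise$ is well-defined, i.e.\ there really is a union of initial steps of $\varphi_p$ containing exactly $c_{p'}-c_p$ letters; this follows because $p'$ retains at least $2k-2$ letters, and by Lemma~\ref{lem:everyoneKnowsBeyond2KFoataComponents} everything before that is shared, so the forgotten prefix of $p$ (which $p'$ still remembers, partly) aligns on a step boundary. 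Then $\expand$ trivially adds $\{a\}$ to both, and $\cut$ shaves off complete initial steps while updating $q$ by running $\A$ through them — here I would invoke the diamond property to see the order within each step is irrelevant, and invoke the inductive hypothesis that $q$ was already $\delta(q_0, \text{forgotten prefix})$ so that after $\cut$ it becomes $\delta(q_0, \text{new forgotten prefix})$. A small side check: the $\cut$ keeps at most $3k-3$ letters because each step of a $k$-fair trace has at most $k$ letters, so removing one more step from a suffix of at least $2k-2$ would leave at least $k-1$, hence the ``last step'' we keep adds at most $k-1$ extra.

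Finally I would match the acceptance conditions. A global state of $\B^\infty$ is accepting iff $\A$ accepts any linearisation of $\synchronise$ of the FNF views, i.e.\ of $\Foata(t)$. For $\B^\N$, acceptance is defined by synchronising all processes to get $(q',c',\varphi')$ and then checking whether $\A$ reaches an accepting state from $q'$ by reading the letters of $\varphi'$ step by step. By the invariant and the $\synchronise$ correctness just established, $q' = \delta(q_0, \text{common forgotten prefix})$ and $\varphi'$ is the corresponding suffix of $\Foata(t)$, with the forgotten prefix being an ideal of $t$ whose steps are exactly the first $\Foatalength{t}-\Foatalength{\varphi'}$ steps of $\Foata(t)$; hence reading the forgotten prefix and then $\varphi'$ amounts to reading a linearisation of $t$, so $\A$ reaches the same state as on $\Foata(t)$. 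Thus the two acceptance conditions coincide, and since both AA are deterministic and the runs correspond, $\Lt(\B^\infty) = \Lt(\B^\N)$.

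The main obstacle I anticipate is the bookkeeping in the $\synchronise$ step: showing that the counters of the participating processes, together with the lengths of the retained suffixes, are consistent enough that $\synchronise$ genuinely recovers the right truncated view — in particular that the ``catching up'' performed by the lagging processes (forgetting steps down to the maximal counter $c_{p'}$) lands exactly on a step boundary and does not lose any not-yet-shared information. This is precisely where Lemma~\ref{lem:everyoneKnowsBeyond2KFoataComponents} (applied to the process $p'$ with the longest retained history, whose view has length at least $2k-2$) does the heavy lifting, and care is needed to apply it to the trace $t'$ read so far rather than to some global trace that no process knows.
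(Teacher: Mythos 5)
Your proposal is correct and follows essentially the same route as the paper: an induction on the trace maintaining the invariant that each $\B^\N$ local state equals $\cut(q_0,0,\Foata(\view_p(t)))$ (your unrolled description of the truncation, counter, and state component is exactly this), with Lemma~\ref{lem:everyoneKnowsBeyond2KFoataComponents} doing the work in the $\synchronise$ step and the diamond property handling order within steps and the acceptance check. The step-boundary alignment you flag as the main obstacle is resolved in the paper inside the definition of $\synchronise$ for $\B^\N$, by the same argument you give.
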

\begin{proof}
  We show by induction on the trace $t$ that the local states $(\varphi_{p})_{p\in P}$ reached after reading $t$ in $\B^\infty$ and the local states $(q_p, c_p, \psi_{p})_{p\in P}$ reached after reading $t$ in $\B^\N$ are such that for all $p\in P$,
  \begin{equation}
    (q_p, c_p, \psi_{p}) = \cut(q_0, 0, \varphi_{p})
  \label{eq:Binfty-BN}
  \end{equation}
  We will use the result of Lemma~\ref{lem:Binfty} stating moreover that $\varphi_p = \Foata(\view_p(t))$.

  \eqref{eq:Binfty-BN} trivially holds at the beginning of the runs. Suppose that \eqref{eq:Binfty-BN} holds for a trace $t$. We show it for the trace $t'$ obtained by adding a new event $a$ at the end. Let $(\varphi_{p})_{p\in P}$ and $(q_p, c_p, \psi_{p})_{p\in P}$ be the respective local states reached after reading $t$ in $\B^\infty$ and $\B^\N$. Let $X = \loc(a)$. By induction hypothesis, $(q_p, c_p, \psi_{p}) = \cut(q_0, 0, \varphi_{p})$ for all $p\in P$. Since $\varphi_p = \Foata{(\view_p(t))}$, we know (Lemma~\ref{lem:infiniteB-synchronisation}) that $\synchronise((\varphi_p)_{p\in X}) = \Foata(\view_X(t))$. 

  The new local states $(\varphi'_{p})_{p\in P}$ and $(q'_p, c'_p, \psi'_{p})_{p\in P}$ obtained after reading $a$ are unchanged for processes $p\notin \loc(a)$. We thus only consider processes in $X$. 
  We have \[\varphi'_p = \expand(\Foata(\view_X(t)), a),\] and \[(q'_p, c'_p, \psi'_{p}) = \cut(\expand(\synchronise((\cut(q_0, 0, \varphi_{p}))_{p\in X}), a)).\] By definition of $\cut$, for all $p\in X$, $\cut(q_0, 0, \varphi_{p})$ is a suffix of $\view_p(t)$ of length at least $2k-2$. By Lemma~\ref{lem:everyoneKnowsBeyond2KFoataComponents}, letting $p$ be the process of $X$ having the maximal value of $c_p$, for all $i\leq \f(\view_p(t),2k-2)$, all processes $p'\in X$ have the full information on the step $i$ of $\Foata(t)$. Thus, $\synchronise((\cut(q_0, 0, \varphi_{p}))_{p\in X})$ is a suffix of $\view_X(t)$ that fulfils the condition to be the Foata component in the states of $\B^\N$. The operations $\expand$ performed in the two automata only depend on the last step of the FNF and thus $\expand(\synchronise((\cut(q_0, 0, \varphi_{p}))_{p\in X}), a)$ is a suffix of $\expand(\Foata(\view_X(t)), a)$. Since we simply perform a $\cut$ on top of that in $\B^\N$, we get, for all $p\in X$, $(q'_p, c'_p, \psi'_{p}) = \cut(q_0, 0, \varphi'_{p})$.
\end{proof}

As a corollary of Proposition~\ref{prop:infinite-AA}, we have that the infinite AA $\B^\N$ also recognises the language of $\A$.

\subsection{Modulo counting to obtain a finite AA}
\label{sec:final-AA}
\newcommand\modu{\mathrm{mod}}

We finally build a third AA, that will be \emph{finite}, obtained by bounding the counter values taken in $\B^\N$. Formally, we let $\B^\modu= ((Q^\modu_p)_{p \in P}, \Sigma, q^\modu_0, (\delta^\modu_a)_{a \in \Sigma}, F^\modu)$ where the exponent is there to remember that the counter is bounded by taking it modulo a constant, more precisely, modulo $2k$.

We thus let $Q^\modu_p$, for all processes $p\in P$, be the set of states $(q, c, \varphi)$ with $q\in Q$,  $c\in \{0, 1, \ldots, 2k-1\}$ and $\varphi$ as the Foata component in $Q^\N$. We let $q^\modu_0 = q^\N_0$. We then update the definition of the three functions $\synchronise$, $\expand$ and $\cut$ to define the transitions and final states of $B^\modu$. 

Starting from a letter $a$, with $X=\loc(a)$, and a state $(q_p, c_p, \varphi_{p})$ for all $p\in X$, we let $\synchronise((q_p, c_p, \varphi_{p})_{p\in X}) = (q', c', S')$ as follows. The idea is to mimick the definition of the function $\synchronise$ in $\B^\N$. Everything is easy but the beginning of the definition where we must choose a process $p'$ with $c_{p'} = \max(c_p)$ to get a process with the most recent information counterwise. Indeed, the counters are now counted modulo $2k$, and therefore the relative order is not a priori known: it is possible that a counter value $c_{p'}=0$ has more recent information than a counter value $c_p = 2k-1$. Thanks to Lemma~\ref{lem:pairOfProcsAtmostK-1Apart}, we will maintain by induction the property that the values $(c_p+|\varphi_p|)_{p\in X}$ (representing the length of the views up to a constant value) are in a range of at most $k$ consecutive values modulo $2k$. This means that we can find a range of at least $k$ such values that are not taken, and thus we are able to give an order between these values that is consistent with the actual lengths of views. Suppose thus that we have a value $c$ such that $(d_p = c_p+|\varphi_p|+c\mod 2k)_{p\in X}$ are in the range $\{0, 1, \ldots, k-1\}$. Then, consider the process $p'$ such that $d_{p'}-c = \max(d_p-c)$, so that it is indeed a consistent choice with the one made in $\B^\N$. We then let $q' = q_{p'}$, $c' = c_{p'}$. For all processes $p\in X$, if $\varphi_{p} = S_1\cdots S_m$, and $c_p \cong c_{p'}-\alpha \mod 2k$ with $0\leq \alpha \leq k-1$ (by assumption on $c_{p'}$), we let $\psi_p = S_{j+1} \cdots S_m$ with $S_1\cdots S_j$ containing $\alpha$ letters. Finally, we let $\varphi' = \bigcup_{p\in X}\psi_p$.
As before, we let $\expand((q', c', \varphi'), a) = (q', c', \varphi'\{a\})$, generalising the previous definition of $\expand$.

We finally apply the $\cut$ function to make it a state of $B^\modu$. For a tuple $(q, c, S_1\cdots S_{m})$ (that is not a local state since the Foata component is too long), if the function $\cut$ in $\B^\N$ is such that $\cut(q, c, S_1\cdots S_{m}) = (q', c', S_j\cdots S_{m})$, we now let $\cut(q, c, S_1\cdots S_{m}) = (q', c' \mod 2k, S_j\cdots S_{m})$: the only change is the value of the counter that we consider modulo $2k$. In particular, the state $q'$ is still taken as the unique state of $\A$ reached from $q$ by reading all letters of $S_1 \ldots S_{j-1}$ (in any order, step by step, since $\A$ satisfies the diamond property and all letters of a step are independent).

Thanks to these new definitions, for all letters $a\in \Sigma$, and local states $(q_p, c_p, \varphi_{p})_{p\in X}$, we let 
\[\delta^\modu_a((q_p, c_p, \varphi_{p})_{p\in X}) = ((q'', c'', \varphi''), \ldots, (q'', c'', \varphi''))\] 
where $(q'', c'', \varphi'') = \cut(\expand(\synchronise((q_p, c_p, \varphi_{p})_{p\in X}), a))$. We also let $F^\modu$ be the set of states $(q_p, c_p, \varphi_{p})_{p\in P}$ such that 
$\synchronise((q_p, c_p, \varphi_{p})_{p\in P}) = (q', c', \varphi')$ and the state reached by $\A$ from $q'$ by reading all the letters of $\varphi'$ is accepting in $\A$: once more, since $\A$ satisfies the diamond property, the order in which the letters of a step of $\varphi'$ are read does not matter.

By relating the runs of $\B^\modu$ with the runs of $\B^\N$ (and thus of $\B^\infty$), and by the property of $k$-fairness of $\A$, we indeed get by induction the crucial property that in all reachable states of $\B^\modu$, all counter values of all processes, when translated by the length of the suffix of view that is remembered, are in a range of $k$ consecutive values modulo $2k$.

\begin{lemma}
\label{lem:Bmod-invariant}
  The AA $\B^\N$ and $\B^\modu$ recognise the same language.
\end{lemma}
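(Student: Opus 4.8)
The plan is to prove, by induction on the trace $t$ read so far, a joint invariant linking the reachable states of $\B^\N$ and $\B^\modu$. Writing $(q_p, c_p, \varphi_p)_{p\in P}$ for the state reached by $\B^\N$ on $t$ and $(\tilde q_p, \tilde c_p, \tilde\varphi_p)_{p\in P}$ for the one reached by $\B^\modu$ on $t$, the invariant asserts that $\tilde q_p = q_p$, $\tilde\varphi_p = \varphi_p$, and $\tilde c_p \equiv c_p \pmod{2k}$ for every $p\in P$. I would carry alongside it the auxiliary invariant --- this is the only place where $k$-fairness is used, through Lemma~\ref{lem:pairOfProcsAtmostK-1Apart} applied to the $k$-fair trace $t$ --- that the integers $c_p + |\varphi_p|$, which equal $\length{\view_p(t)}$ (letters forgotten plus those still recorded), all lie in a window of at most $k$ consecutive values. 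Granting both invariants, the statement is immediate: acceptance of $t$ in either automaton is decided by applying $\synchronise$ to all processes and then running $\A$ on the resulting common Foata suffix from the resulting common $\A$-state, and the invariant guarantees these two inputs coincide --- the counter, reduced modulo $2k$ or not, is irrelevant to the acceptance test --- so, together with Lemma~\ref{lem:Binfty-BN} and Proposition~\ref{prop:infinite-AA}, we obtain $\Lt(\B^\modu)=\Lt(\B^\N)=\Lt(\A)$.

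For the inductive step, on reading a letter $a$ with $X=\loc(a)$, only the components of processes in $X$ change, and in both automata the new common component is $\cut(\expand(\synchronise(\cdot),a))$. I would verify that each of the three operations commutes with ``reduce the counter modulo $2k$''. The operation $\expand$ merely appends the step $\{a\}$ and leaves the counter untouched. The operation $\cut$ drops the same prefix steps in both runs (the cut point depends only on the Foata component, equal by the induction hypothesis), updates the DFA state to the same value (the same letters are read step by step, using the diamond property), and increments the counter by the number of forgotten letters --- an increment performed modulo $2k$ in $\B^\modu$, so the congruence survives. The delicate operation is $\synchronise$, which must identify the leading process $p'$ whose recorded information is most recent and then align every $\varphi_p$ to it before taking the stepwise union. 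Since $\B^\modu$ sees counters only modulo $2k$, one uses the auxiliary invariant: there is a shift $c$ with $d_p := (c_p + |\varphi_p| + c)\bmod 2k$ inside $\{0,\dots,k-1\}$ for all $p\in X$, and for any $p,q\in X$ the integer $d_p - d_q$ is congruent modulo $2k$ to $\length{\view_p(t)} - \length{\view_q(t)}$ while both quantities have absolute value at most $k-1 < 2k$, hence they are equal. Thus the relative order of the $d_p$ reproduces the true order of the view lengths, so $\B^\modu$ selects the same leading process as the run of $\B^\N$, and the amount each process forgets is likewise pinned down by the same congruence read modulo $2k$ inside the window. The new Foata components are therefore letter-for-letter equal and the new counters remain congruent modulo $2k$; and the auxiliary invariant for the extended trace is Lemma~\ref{lem:pairOfProcsAtmostK-1Apart} applied once more, a property of the $k$-fair trace rather than of the automaton.

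The step I expect to be the main obstacle is $\synchronise$: one must be certain that collapsing the counters modulo $2k$ discards nothing needed to realign views. The essential point --- already stressed informally before the statement --- is the choice of modulus: taking $2k$ rather than $k$ leaves a gap of at least $k$ unused residues, so a set of at most $k$ residues forming a cyclic interval of length at most $k$ in $\mathbb Z/2k\mathbb Z$ determines all true pairwise differences. One has to be slightly careful that the shift $c$ above need not be unique when the window has width strictly less than $k$, but the differences $d_p - d_q$ --- hence the selected leading process and each alignment --- do not depend on the admissible choice of $c$. A secondary point is that the side conditions enabling $\synchronise$ and $\cut$ to be applied at all --- namely that every forgotten letter is already known to every other process, which is exactly Lemma~\ref{lem:everyoneKnowsBeyond2KFoataComponents} --- carry over verbatim from the analysis of $\B^\N$, since the Foata components are literally the same in the two automata.
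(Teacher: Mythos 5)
Your proposal is correct and follows essentially the same route as the paper: an induction on the trace maintaining that states and Foata components coincide, counters agree modulo $2k$, and the values $c_p+|\varphi_p|$ lie in a cyclic window of at most $k$ residues (justified by Lemma~\ref{lem:pairOfProcsAtmostK-1Apart}), which is exactly what lets $\synchronise$ in $\B^\modu$ recover the true ordering of view lengths. Your write-up is in fact more detailed than the paper's — in particular the explicit argument that congruence modulo $2k$ of two integers of absolute value at most $k-1$ forces equality, and the observation that the choice of shift $c$ is immaterial — but there is no substantive difference in approach.
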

\begin{proof}
  We show by induction on the trace $t$ that if the local states reached after reading $t$ in $\B^\N$ are $(q_p, c_p, \varphi_p)_{p\in P}$, then the local states reached after reading $t$ in $\B^\modu$ are $(q_p, c_p \mod 2k-2, \varphi_p)_{p\in P}$, and are such that all values $\{c_p+|\varphi_p|\mod 2k \mid p\in P\}$ are included in a set of the form $\{c, c+1, c+2, \ldots, c+(k-1)\}$, with $c\in \{0, 1, \ldots, 2k-1\}$.

  The property holds trivially for the empty trace. If it holds for a trace $t$, we show that it holds for the trace $t'$ obtained by adding a new event $a$ at the end. Nothing has to be done for processes not in $\loc(a)$. For the others, by the induction hypothesis on the range of values $\{c_p+|\varphi_p|\mod 2k \mid p\in P\}$, the $\synchronise$ operation (as well, trivially, as the $\expand$ and $\cut$) indeed performs the same thing with respect to the Foata components and the states in $\B^\N$ and $\B^\modu$. It only remains to show that the property on the counters remain true after the transition. This is indeed the result of Lemma~\ref{lem:pairOfProcsAtmostK-1Apart}, since we know that the Foata components consist in a suffix of the view of each process (by the proof of Lemma~\ref{lem:Binfty-BN}), and that these views cannot differ in length more than $k-1$ steps pairwise.
\end{proof}

As a corollary, we obtain that $\B^\modu$ recognises the language of $\A$. This ends the proof of Theorem~\ref{thm:fair-AA} (the size of the automaton is described at the end of Section~\ref{sec:contrib-fair}).

\begin{remark}
  In our construction, every process maintains a state $q$ of the DFA $\A$ in its local state. This is the state reached by $\A$ on reading the steps forgotten so far. However, not every process needs to maintain this information. The information about this state is needed only to compute the acceptance condition. It is sufficient to designate one process for storing the state information. While computing the final state reached in the DFA, we could first synchronise all the processes and then compute the actual final state of $\A$ using the state information stored by the designated process.
  %
  %every process remembers about the current state after the steps forgotten so far. However, this state is only needed to compute the acceptance condition, and we see that it is only necessary that one of the processes remembers this information, since the only place where we use it is in the final state definition, where we could first synchronise the processes, and then compute the actual final state of $\A$ starting from the information known by the only process remembering the states. 
\end{remark}

\section{Proof of the lower bound}
\label{app:lower-bound}

\newcommand{\Path}{\operatorname{Path}}

In \cite{GenGim10}, a family of languages $\Path_n$ is defined for which it is shown that every AA has size at least $2^{k}$ states, with $k = \frac{n}{4}$. As it is, the language $\Path_n$ is not fair. However, the proof considers a subset $L_n \incl \Path_n$ ($L_n$ is called $L$ in the original proof) which contains $2^k$ traces, and there is a process $n$ which must end up in different local states on each of these traces. We first notice that $L_n$ is $n$-fair. Moreover, the DFA accepting $L_n$ has size $\mathcal{O}(n)$. This results in the following lower bound.

\lowerBound*
\begin{proof}
  The language $L_n$ is defined over the alphabet $\Sigma = {P\choose 2}$ where $P = \{1, \dots n\}$ is the set of processes with $\loc(\{i, j\}) = \{i, j\}$. Suppose $n = 4k$. We define for $0 \le m < k$, traces $a_m = \{4m, 4m+1\}\{4m+1, 4m+2\}\{4m+2, 4m+4\}$ and $b_m = \{4m, 4m+1\}\{4m, 4m+3\}\{4m+3, 4m+4\}$. We further define the set of letters $S_n=\{\{4m, n\} \mid 0 \le m < k\}$. The language $L_n$ is then defined by the regular expression $(a_0 + b_0)(a_1 + b_1)\dots (a_{k-1} + b_{k-1})S_n$.

  Since each component $(a_i+b_i)$ can be recognised by a DFA of $5$ states, the language $L_n$ is recognised by a DFA with $5k+2$ states, which is linear in $n$. Since each trace of $L_n$ is of length $3k+1$, it is trivially $(3k+2)$-fair and thus $n$-fair for $k > 1$.
  
  While the final component $S_n$ is not a part of the description of $L_n$ in Theorem 2 of \cite{GenGim10}, it is only a minor adjustment allowing us to follow the rest of the proof of Theorem 2 of \cite{GenGim10} to conclude that process $n$ must have at least $2^k$ local states, thus proving our theorem.
\end{proof}

\section{Recalling~\cite{KRISHNA2013109}}
\label{sec:krishna-muscholl}
 In this section, we recall the work of Siddharth Krishna and Muscholl \cite{KRISHNA2013109} where the restriction is imposed on the \emph{communication architecture} of the distributed alphabet.

\paragraph*{Acyclic communication.}
\label{sec:acyclic-recall}
Given a set $P$ of processes, and a distributed alphabet $(\Sigma, \loc)$, an undirected graph called the \emph{communication graph} $\gcomm$ is constructed as follows: vertices of $\gcomm$ are the processes in $P$; for $p_1, p_2 \in P$, there is an edge $(p_1, p_2)$ if $p_1$ and $p_2$ share a common action, i.e.~$\Sigma_{p_1} \cap \Sigma_{p_2}\neq \emptyset$. We first recall two restrictions considered in \cite{KRISHNA2013109}:
\begin{itemize}
\item actions are either binary or local to a single process: for every $a \in \Sigma$, we have $|\loc(a)| \le 2$,
 \item the communication graph $\gcomm$ is connected and acyclic, and hence is a tree. 
 \end{itemize}
 We fix an arbitrary process $\roo$ as the root of $\gcomm$. Then, for each process $p$, we can assign a parent in this tree, denoted as $\parent(p)$. Secondly, we denote by $X_p$, the set of processes in the subtree rooted at $p$ in $\gcomm$ (including $p$).
 
Starting with a DFA specification $\A = (Q, \Sigma, q_0, \delta, F)$ satisfying the $I$-diamond property, the main challenge in synthesising an AA, as always, is to abstract the infinite AA maintaining the views of each process, into a finite AA. The technique of truncating the views to a bounded suffix, as used for $k$-fair specifications, does not work here: for instance, the specification could allow an unbounded number of local actions between two synchronisations, and hence a bounded suffix of the view does not carry enough information. We need to be able to maintain sufficient information about the view of each process so that we are able to synchronise processes and also compute the global state of the trace reached, simply by looking at the tuple of local states (and thereby determine whether to accept or not). 

We recall an AA $\Bacy$ as described in~\cite{KRISHNA2013109}, which achieves these goals. Each process $p$ maintains a pair of states $(\overleftarrow{q_p}, q_p)$ of $\A$ so that: 
\begin{itemize}
  \item $\overleftarrow{q_p}$ is the state reached by the DFA $\A$ on reading $\jointview_p(t)$, which is the smallest ideal of $\view_p(t)$ containing all actions where $p$ synchronises with its parent in $\gcomm$, 
\item $q_p$ is the state reached by the DFA $\A$ on reading $\view_p(t)$.
\end{itemize}
The transitions are designed to maintain this invariant. When process $p$ at state $(\overleftarrow{q_{p}},q_{p})$ encounters a local action $a$, it moves to $(\overleftarrow{q_{p}}, q'_p)$ where $q'_p = \delta(q_p, a)$. When $p$ synchronises with $\parent(p)$, both processes need to first reconcile their views. Suppose $(\overleftarrow{q}, q)$ is the state of $\parent(p)$ and $(\overleftarrow{q_p}, q_p)$ is the state of $p$. The last time $p$ and its parent synchronised with each other, they reached state $\overleftarrow{q_p}$ (according to the above invariant). Since then, $p$ has seen a sequence of actions $u$ leading to $q_p$, and its parent has seen a sequence $v$ leading to $q$. Observe that $\loc(u) \incl X_p$ and $\loc(v) \cap X_p = \emptyset$. Therefore, $\loc(u)\cap \loc(v)=\emptyset$, and the state reached on reading $v$ from $q_p$ is the same as the state reached on reading $u$ from $q$ (Figure~\ref{fig:diamond}-left). However, we do not know what $u$ and $v$ are. Fortunately, we do not need it. As long as we consider words $u$ with $\loc(u) \incl X_p$ and words $v$ whose domains do not intersect $X_p$, the bottom state of the diamond is the same, as made precise by the next lemma, and illustrated in the middle of Figure~\ref{fig:diamond}.

\begin{figure}[t]
  \centering
\scalebox{0.85}{\begin{tikzpicture}
  \node (s) at (0,0) {\footnotesize $q_p$};
  \node (sp) at (1,1) {\footnotesize $\overleftarrow{q_p}$};
  \node (r) at (2, 0) {\footnotesize $q$};
  \draw [snake=coil, segment aspect = 0,->, >=stealth, red, thick] (sp) to (s);
  \draw [snake=coil, segment aspect = 0,->, >=stealth, blue, thick] (sp) to (r);
  \node at (0.3, 0.6) [red] {\scriptsize $u$};
  \node at (1.7, 0.6) [blue] {\scriptsize $v$};
  \node (t) at (1,-1.15) {\footnotesize $q'=\diam(\overleftarrow{q_p}, q_p, q, X_p)$};
  \draw [snake= coil, segment aspect = 0, ->, >=stealth, blue, thick] (s) to (t);
  \draw [snake = coil, segment aspect = 0, ->, >=stealth, red, thick] (r) to (t);
  \node at (0.3, -0.6) [blue] {\scriptsize $v$};
  \node at (1.7, -0.6) [red] {\scriptsize $u$};

  %\node at (4, 0.5) {\scriptsize \bf{$I$-diamond property}};
  %\node at (4, 0) {\scriptsize $u~I~v~\implies$ };
  %\node at (4, -0.4) {\scriptsize $\delta(\overleftarrow{q_p}, uv) = \delta(\overleftarrow{q_p}, vu)$};
  \draw [thin, gray] (2.8, 1.6) -- (2.8, -1.6);

  \begin{scope}[xshift=3.3cm]
    \draw [dashed] (0,-1.5) rectangle (4.5,1.5);
    \draw (0.5, -0.5) rectangle (2, 0.5);
    \node at (1.25,0) {\scriptsize $t_1$};
    \node[right] at (2, 0) {\scriptsize $q_1$};
    \draw (2.5, 0.5) rectangle (3.8, 1.3);
    \node at (2.8, 0.9) {\scriptsize $t_2$};
    \node [red] at (3.3, 1.1) {\tiny $\incl X$};
    \node[right] at (3.8, 0.9) {\scriptsize $q_2$};
    \draw (2.5, -0.5) rectangle (3.8, -1.3);
    \node at (2.8, -1) {\scriptsize $t_3$};
    \node [blue] at (3.3, -0.7) {\tiny $\incl P\setminus X$};
    \node[right] at (3.8, -0.9) {\scriptsize $q_3$};
    \node[right] at (2.2, -1.8) {\scriptsize $\diam(q_1, q_2, q_3, X)$};
  \end{scope}
  \draw [thin, gray] (8.3, 1.6) -- (8.3, -1.6);

  \begin{scope}[xshift=11cm, yshift=1cm, scale=0.8]
    \node at (0, -3.2) {\scriptsize Tree-of-bags architecture};
    \begin{scope}[every node/.style={circle, fill, inner sep=2pt}]
      \node (0) at (0,0) {};
      \node (1) at (-1,-1) {};
      \node (2) at (1,-1) {};
      \node (3) at (-1.5, -2) {};
      \node (4) at (-0.5, -2) {};
      \node (5) at (1.5, -2) {};
    \end{scope}

    \begin{scope}[every node/.style={rectangle, fill, blue, inner sep=2pt}]
      \node (a) at (-0.4, 0.5) {};
      \node (b) at (0.4, 0.5) {};
      \node (c) at (-1.4, -0.3) {};
      \node (d) at (-1.4, -0.8) {};
      \node (e) at (-1.7, -0.5) {};
      \node (f) at (1.2, -0.6) {};
      \node (g) at (1.5, -0.3) {};
      \node (h) at (1.8, -0.6) {};
      \node (i) at (1.5, -0.8) {};
      \node (j) at (-2.2, -1.7) {};
      \node (k) at (-2.5, -2) {};
      \node (l) at (-2.1, -2.2) {};
      \node (m) at (2, -1.8) {};
      \node (n) at (2, -2.2) {};
    \end{scope}

    \begin{scope}[thin, gray]
      \draw (0) to (a) to (b) to (0);
      \draw (0) to (1);
      \draw (0) to (2);
      \draw (1) to (d) to (c) to (e) to (d);
      \draw (1) to (3);
      \draw (1) to (4);
      \draw (2) to (5);
      \draw (2) to (f);
      \draw (f) to (g);
      \draw (f) to (i);
      \draw (i) to (g);
      \draw (i) to (h);
      \draw (3) to (j) to (k) to (l) to (j);
      \draw (3) to (k);
      \draw (5) to (m) to (n) to (5);
    \end{scope}

    \begin{scope}[very thin, red, dashed]
      \draw (0) .. controls (-2, 1) and (2,1) .. (0);
      \draw (1) .. controls (-3, -1) and (-1, 1) .. (1);
      \draw (2) .. controls (1, 1) and (3, -1) .. (2);
      \draw (3) .. controls (-3, -0.5) and (-3, -3.5) .. (3);
      \draw (5) .. controls (2.5, -0.5) and (2.5,-3.5) .. (5); 
    \end{scope}
  \end{scope}
%   \begin{scope}[xshift= 6.5cm]
%   \node (s) at (0,0) {\footnotesize $s'$};
%   \node (sp) at (1,1) {\footnotesize $s$};
%   \node (r) at (2, 0) {\footnotesize $r'$};
%   \draw [snake=coil, segment aspect = 0,->, >=stealth, red, thick] (sp) to (s);
%   \draw [snake=coil, segment aspect = 0,->, >=stealth, blue, thick] (sp) to (r);
%   \node [left] at (0.3, 0.6) [red] {\scriptsize $u_1, u_2$};
%   \node [right] at (1.7, 0.6) [blue] {\scriptsize $v_1, v_2$};

%   \node (t1) at (0,-1.1) {\footnotesize $t_1$};
%   \node (t2) at (2,-1.1) {\footnotesize $t_2$};
%   \node at (-0.3, -0.5) [blue] {\scriptsize $v_1$};
%   \node at (0.5, -0.1) [blue] {\scriptsize $v_2$};
%   \node at (1.3, -0.1) [red] {\scriptsize $u_1$};
%   \node at (2.3, -0.5) [red] {\scriptsize $u_2$};
%   \begin{scope}[snake=coil, segment aspect = 0,->, >=stealth, thick] 
%     \draw [snake, blue] (s) to (t1);
%     \draw [snake, blue] (s) to (t2);
%     \draw [snake, red] (r) to (t1);
%     \draw [snake, red] (r) to (t2);
%   \end{scope}

%   \node at (5, 0) {\scriptsize $u_i~I~v_j, ~\forall i,j \in\{1, 2\} ~\implies$ };
%   \node at (5, -0.4) {\scriptsize $t_1 = \delta(s, u_2 v_1)  = \delta(s, v_1u_2) = t_2$};
%  % \node at (4, -0.8) {\scriptsize $(\delta(s, u_1 v_2) = \delta(s, v_2u_1)$};
%   \end{scope}
\end{tikzpicture}}
\caption{Left: Illustration of the diamond property. Middle: illustration of the $\diam$ function. Right: A tree-of-bags architecture. The picture shows the communication graph. The red dashed bubbles represent the bags. Each bubble has an outer process shown as a black circle. The blue squares are the inner processes.}
\label{fig:diamond}
\end{figure}

\begin{lemma} \label{lem:diamond}
%For the DFA $\A = (Q, q_0, \Sigma, \delta, F)$ with the $I$-diamond property, 
There is a function $\diam\colon Q \times Q \times Q \times 2^P \to Q$ satisfying the following. For all states $q_1, q_2, q_3 \in Q$, subsets of processes $X \incl P$, traces $t_1, t_2, t_3$ such that \begin{enumerate}
  \item $\delta(q_0, t_1) = q_1$, $\delta(q_0, t_1 t_2) = q_2$, $\delta(q_0, t_1 t_3) = q_3$, 
  \item and $\loc(t_1) \incl X$, $\loc(t_2) \incl P \setminus X$,
\end{enumerate}
 we have $\delta(q_0, t_1 t_2 t_3) = \diam(q_1, q_2, q_3, X)$. 
\end{lemma}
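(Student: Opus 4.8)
The plan is to prove that the value $\delta(q_0, t_1 t_2 t_3)$, although computed from particular witnessing traces, depends only on the four data $q_1, q_2, q_3, X$; the function $\diam$ is then defined by: $\diam(q_1, q_2, q_3, X)$ is this common value whenever some triple $(t_1, t_2, t_3)$ witnessing the hypotheses for these data exists, and $\diam(q_1, q_2, q_3, X) = q_0$ otherwise (a case the lemma leaves unconstrained). Throughout I would set $q_1 = \delta(q_0, t_1)$; since $\A$ is deterministic, $\delta(q_0, t_1 t_2) = q_2$ and $\delta(q_0, t_1 t_3) = q_3$ become $\delta(q_1, t_2) = q_2$ and $\delta(q_1, t_3) = q_3$. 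The side conditions will be used only through the fact that the two traces extending $t_1$ have \emph{process-disjoint} alphabets, one included in $X$, the other in $P \setminus X$.

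The core step is a commutation of process-disjoint traces: if $s, s'$ are traces with $\loc(s) \cap \loc(s') = \emptyset$, then every letter of $s$ is independent of every letter of $s'$, so concatenating them adds no order between the events of $s$ and those of $s'$; hence $s s' = s' s$ as traces, and, since $\A$ satisfies the diamond property (so $\Delta(q, \cdot)$ is constant on $\sim$-classes), $\delta(q, s s') = \delta(q, s' s)$ for every state $q$. Applying this to $t_2$ and $t_3$ gives
\[
\delta(q_0, t_1 t_2 t_3) = \delta(q_1, t_2 t_3) = \delta\bigl(\delta(q_1, t_2), t_3\bigr) = \delta(q_2, t_3),
\]
and, reading $t_3$ before $t_2$ via the commutation, $\delta(q_1, t_2 t_3) = \delta(q_1, t_3 t_2) = \delta\bigl(\delta(q_1, t_3), t_2\bigr) = \delta(q_3, t_2)$. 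Thus the sought value equals both $\delta(q_2, t_3)$ and $\delta(q_3, t_2)$, and in particular does not depend on $t_1$.

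The remaining, and I expect the only delicate, step is independence of the witnesses $t_2, t_3$ themselves. Given a second triple $(\tilde t_1, \tilde t_2, \tilde t_3)$ fulfilling the hypotheses for the \emph{same} $q_1, q_2, q_3, X$ --- so $\delta(q_1, \tilde t_2) = q_2$, $\delta(q_1, \tilde t_3) = q_3$, with $\loc(\tilde t_2) \incl X$ and $\loc(\tilde t_3) \incl P \setminus X$ --- I would mix the two decompositions: the alphabets of $t_2$ and $\tilde t_3$ are still disjoint, so the same commutation applied to $t_2 \tilde t_3$ gives $\delta(q_2, \tilde t_3) = \delta(q_1, t_2 \tilde t_3) = \delta(q_1, \tilde t_3 t_2) = \delta(q_3, t_2)$. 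Chaining this with the identities of the previous paragraph yields $\delta(q_2, t_3) = \delta(q_3, t_2) = \delta(q_2, \tilde t_3)$, hence $\delta(q_0, t_1 t_2 t_3) = \delta(q_0, \tilde t_1 \tilde t_2 \tilde t_3)$. This confirms that the value is a function of $q_1, q_2, q_3, X$ alone, which we take as $\diam(q_1, q_2, q_3, X)$; by the displayed identities it satisfies $\delta(q_0, t_1 t_2 t_3) = \diam(q_1, q_2, q_3, X)$, as required. The main obstacle is spotting exactly this mix-and-match --- that one may pair the $X$-part of one decomposition with the $(P \setminus X)$-part of another and let the diamond property carry the computation through; everything else is routine bookkeeping with determinism of $\A$ and commutation of disjoint traces.
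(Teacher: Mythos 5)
Your proof is correct and, notably, more than the paper itself offers: Lemma~\ref{lem:diamond} is only \emph{recalled} from \cite{KRISHNA2013109} and stated without proof in \Cref{sec:krishna-muscholl}, so there is no in-paper argument to compare against. Your two-step structure is exactly what is needed. The commutation step (process-disjoint traces concatenate to the same trace in either order, and the diamond property makes $\delta(q,\cdot)$ invariant on $\sim$-classes) gives $\delta(q_0,t_1t_2t_3)=\delta(q_2,t_3)=\delta(q_3,t_2)$, and the genuinely non-obvious part --- that $\diam$ is well defined as a function of $(q_1,q_2,q_3,X)$ alone --- is handled by your mix-and-match of the $X$-part of one witnessing decomposition with the $(P\setminus X)$-part of another, which is precisely the right trick; the unconstrained case where no witness exists is also correctly dispatched by an arbitrary default value.

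One remark worth making explicit: you silently (and correctly) repaired the hypothesis. As printed, condition~2 constrains $\loc(t_1)$ and $\loc(t_2)$, under which the lemma is false --- with $t_3$ unconstrained, $\delta(q_0,t_1t_2t_3)=\delta(q_2,t_3)$ genuinely depends on $t_3$ and not only on $q_3$. The intended hypothesis, as confirmed by the middle of Figure~\ref{fig:diamond} and by every use of the lemma (e.g.\ in the proof of the $\cstate$ correctness, where the conditions are imposed on the second and third factors of the decomposition), is $\loc(t_2)\incl X$ and $\loc(t_3)\incl P\setminus X$. Your reading --- that the two extensions of $t_1$ live on disjoint sides of the partition --- is the right one, and your proof goes through verbatim under it.
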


Therefore, when a process $p$ and its parent jointly read a letter $a$, they first reconcile their views by computing the state $q' = \diam(\overleftarrow{q_p}, q_p, q, X_p)$ and then computing the $a$-successor $\delta(q',a)$. Process $p$ then transitions to its local state $(\delta(q',a), \delta(q',a))$, the first component being indeed the state that was reached the last time it synchronised with its parent, while process $\parent(p)$ transitions to its local state $(\overleftarrow{q}, \delta(q',a))$ since the first component does not change.

% \begin{align*}
%   \langle~\delta(q_0, \view_p(t)), ~\delta(q_0, \jointview_p(t))~\rangle
% \end{align*}
Finally, one needs to determine the global states $((\overleftarrow{q_p}, q_p))_{p \in P}$ that are accepting in $\Bacy$. If $t$ is the trace that has been read by the AA leading to this global state, by design, we have $q_p = \delta(q_0, \view_p(t))$. Using Algorithm~\ref{alg:globalState}, we can compute $\delta(q_0, \view_{X_p}(t))$ --- the state reached on the union of views of processes in $X_p$. Correctness of the algorithm follows from Lemma~\ref{lem:diamond}. Hence, the state reached by $\A$ on $t$ is obtained by running the algorithm on the process $\roo$. 

\begin{algorithm}
  \caption{Global state function}\label{alg:globalState}
  \begin{algorithmic}[1]
    \Require  A process $p$, a global state $((\overleftarrow{q_p}, q_p))_{p \in P}$
    \Ensure $q = \delta(q_0, \view_{X_p}(t))$
    \Function{$\state(p, ((\overleftarrow{q_p}, q_p))_{p \in P})$}{}
    \State $q \gets q_p$
    \For{each child $p'$ of $p$}
    \State $q \gets \diam(\overleftarrow{q_{p'}}, \state(p'), q, X_{p'})$
    \EndFor
    \State \Return $q$
    \EndFunction
  \end{algorithmic}
\end{algorithm}

\begin{lemma}
  Let $t$ be a trace and let $((\overleftarrow{q_p}, q_p))_{p \in P}$ be the global state reached on $t$ by the AA $\Bacy$. Then $\delta(q_0, t) = \state(\roo, ((\overleftarrow{q_p}, q_p))_{p \in P})$.
\end{lemma}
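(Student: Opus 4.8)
The plan is to prove the \Ensure clause of Algorithm~\ref{alg:globalState} in full generality, i.e.\ that $\state(p, ((\overleftarrow{q_{p'}}, q_{p'}))_{p'\in P}) = \delta(q_0, \view_{X_p}(t))$ holds for \emph{every} process $p$, by induction on $\gcomm$ from the leaves towards $\roo$; the lemma is then the special case $p=\roo$, using $X_\roo = P$ and $\view_P(t)=t$. Throughout I would use the invariant that $\Bacy$ is designed to maintain: after reading $t$, the local state $(\overleftarrow{q_p}, q_p)$ of each process $p$ satisfies $q_p = \delta(q_0,\view_p(t))$ and $\overleftarrow{q_p} = \delta(q_0,\jointview_p(t))$. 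For the base case, if $p$ is a leaf then $X_p=\{p\}$, the \textbf{for} loop runs zero times, and the call returns $q_p = \delta(q_0,\view_p(t)) = \delta(q_0,\view_{X_p}(t))$.

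For the inductive step, let $p$ have children $p_1,\dots,p_r$ and assume $\state(p_j)=\delta(q_0,\view_{X_{p_j}}(t))$ for all $j$. Write $Y_0=\{p\}$, $Y_j=Y_{j-1}\cup X_{p_j}$, and let $q^{(j)}$ be the value of the loop variable $q$ after $j$ iterations, so $q^{(0)}=q_p$ and $q^{(j)}=\diam(\overleftarrow{q_{p_j}},\state(p_j),q^{(j-1)},X_{p_j})$. I would prove by a secondary induction on $j$ that $q^{(j)}=\delta(q_0,\view_{Y_j}(t))$; taking $j=r$ finishes the step since $Y_r=X_p$. The case $j=0$ is exactly the $\Bacy$ invariant. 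For $j-1\to j$, the key is to decompose the two views around their common part: $\view_{X_{p_j}}(t) = \jointview_{p_j}(t)\cdot u$ with $\loc(u)\incl X_{p_j}$, $\view_{Y_{j-1}}(t) = \jointview_{p_j}(t)\cdot v$ with $\loc(v)\cap X_{p_j}=\emptyset$, and hence $\view_{Y_j}(t) = \view_{Y_{j-1}}(t)\cup\view_{X_{p_j}}(t) = \jointview_{p_j}(t)\cdot u\cdot v$ (the domains of $u$ and $v$ being disjoint). Applying Lemma~\ref{lem:diamond} with $t_1=\jointview_{p_j}(t)$, $t_2=u$, $t_3=v$ and $X=X_{p_j}$ then yields
\[
\delta(q_0,\view_{Y_j}(t)) = \diam\bigl(\delta(q_0,\jointview_{p_j}(t)),\ \delta(q_0,\view_{X_{p_j}}(t)),\ \delta(q_0,\view_{Y_{j-1}}(t)),\ X_{p_j}\bigr),
\]
and by the $\Bacy$ invariant (for $\overleftarrow{q_{p_j}}$), the main induction hypothesis (for $\state(p_j)$) and the secondary induction hypothesis (for $q^{(j-1)}$), the right-hand side is precisely $\diam(\overleftarrow{q_{p_j}},\state(p_j),q^{(j-1)},X_{p_j}) = q^{(j)}$.

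I expect the only real obstacle to be the structural decomposition claim used above: that $\jointview_{p_j}(t)$ is simultaneously an ideal of $\view_{X_{p_j}}(t)$ and of $\view_{Y_{j-1}}(t)$, and that the two remainders $u$ and $v$ have disjoint process domains. This is where acyclicity of $\gcomm$ and the binary/local nature of actions ($|\loc(a)|\le 2$) matter: the subtree $X_{p_j}$ is joined to $P\setminus X_{p_j}$ only through the edge $(p_j,\parent(p_j))=(p_j,p)$, and $p\in Y_{j-1}$. Thus $\jointview_{p_j}(t)=\ideal{e}$ where $e$ is the last $p_j$--$p$ synchronisation event, and since $e\in\view_p(t)\incl\view_{Y_{j-1}}(t)$ it is also an ideal there. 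For the domains, any event known both to a process of $X_{p_j}$ and to a process of $Y_{j-1}$ must, by chasing a $\lessdot$-chain, pass through a synchronisation across the cut $\{X_{p_j},P\setminus X_{p_j}\}$; the only such actions are the $p_j$--$p$ shared ones (a $p_j$-event, hence $\leq e$), so the event lies inside $\jointview_{p_j}(t)$. This forces $\loc(v)\cap X_{p_j}=\emptyset$, and a symmetric argument (using that $X_{p_1},\dots,X_{p_r}$ hang off $p$ along distinct edges, so they are pairwise disjoint and disjoint from $\{p\}$) gives $\loc(u)\incl X_{p_j}$ as well. I would isolate this as a short lemma on views in tree-structured distributed alphabets, proved by induction along chains in the order $\leq$ of $t$; with it in hand, the remainder is just the two nested inductions and one invocation of Lemma~\ref{lem:diamond} per loop iteration.
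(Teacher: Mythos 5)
Your proof is correct and follows essentially the same route as the paper: the paper does not prove this recalled lemma explicitly (it defers to Proposition~4.2 of \cite{KRISHNA2013109}), but its proof of the analogous tree-of-bags lemma for $\cstate$ uses exactly your nested induction, the same decomposition $\view_{X_{p_j}}(t)=\jointview_{p_j}(t)\cdot u$, $\view_{Y_{j-1}}(t)=\jointview_{p_j}(t)\cdot v$ with disjoint domains, and one application of Lemma~\ref{lem:diamond} per loop iteration. The only addition on your side is that you spell out the structural argument (via acyclicity of $\gcomm$ and $\lessdot$-chain chasing) for why the decomposition and domain-disjointness hold, which the paper leaves implicit; that argument is sound.
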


As a corollary, accepting global states in $\Bacy$ are the ones such that the computation of $\state(\roo, ((\overleftarrow{q_p}, q_p))_{p \in P})$ above gives an accepting state of $\A$.

\section{The detailed construction for Section~\ref{sec:combination}.}
\label{app:full-construction-treeofbags}

The construction of the AA in this case essentially simulates the construction of Section~\ref{sec:contrib-fair} restricted to bags, and additionally performs a construction similar to~\cite{KRISHNA2013109} for the outer processes. The local states on reading an input trace $t$ are as follows. 

\begin{itemize}
\item For every inner process $\iota$ from a bag $B$, we maintain the (potentially infinite) trace $\view_\iota(\trest)$. Notice that $\trest$ is the trace $t$ restricted to the bag alphabet of $B$, and it is thus $k_B$-fair.
\item For every outer process $o$, we keep a pair $(\overleftarrow{q_o}, q_o)$ where $q_o= \delta(q_0, \view_o(t))$ is the state reached by $\A$ on reading $\view_o(t)$, and $\overleftarrow{q_o} = \delta(q_0, \jointview_o(t))$, where $\jointview_o(t)$ is the smallest ideal containing all actions where $o$ synchronises with $\parent(o)$. Along with this pair, we also maintain $\view_o(\trest)$, to simulate its behaviour as its inner counterparts. 
\end{itemize}
 
%\begin{itemize}
  %  \item For every inner process $\iota$ from a bag $B$, we maintain the (potentially infinite) trace $\view_\iota(\trest)$. Notice that $\trest$ is the trace $t$ restricted to the bag alphabet of $B$, and it is thus $k_B$-fair. 
  %\item For every outer process $o$, we maintain a pair $(\overleftarrow{q_o}, q_o)$ where $q_o= \delta(q_0, \view_o(t))$ is the state reached by $\A$ on reading $\view_o(t)$, and $\overleftarrow{q_o} = \delta(q_0, \jointview_o(t))$, where $\jointview_o(t)$ is the smallest ideal containing all actions where $o$ synchronises with $\parent(o)$. Along with this pair, we also maintain $\view_o(\trest)$, to simulate its behaviour as its inner counterparts.   
%\end{itemize}

Therefore the inner processes are trying to mimick the construction of Section~\ref{sec:contrib-fair}, whereas the outer processes run both this construction and the acyclic-architecture construction in parallel. There is one challenge though: the trace $\trest$ is not an ideal of $t$, so the state $\delta(q_0, \trest)$ has no meaning in the global context of the given trace, and it does not help in computing $\delta(q_0, t)$. What we really need is the state reached on $\view_B(t)$, that is $\delta(q_0, \view_B(t))$, so that we can use the technique of \cite{KRISHNA2013109} (Algorithm~\ref{alg:globalState} in Appendix~\ref{sec:krishna-muscholl}) in order to compute $\delta(q_0, t)$. 
%  \item for every bag $B$, we know the state reached by $\A$ on $\view_B(t)$ --- let $q_B = \delta(q_0, \view_B(t)$.
%\end{itemize}
Algorithm~\ref{alg:bagview} shows how to compute the state $q_B =\delta(q_0, \view_B(t)) $ from the states maintained as above. It indeed takes the full trace $\view_\iota(\trest)$ as input. Using the fairness hypothesis on bags, we will discuss later how the algorithm can be implemented using a bounded suffix of this view (of length depending on $k_B$).

Here is an intuitive idea of the algorithm. What part of $\view_B(t)$ do we already have? We have access to $\view_o(t)$, in particular $q_o = \delta(q_0, \view_o(t))$. If we know the trace $t'_1$ obtained from $\view_B(t)$ by removing events in $\view_o(t)$, then we can get $\delta(q_0, \view_B(t))$ as $\delta(q_o, t'_1)$. Line~$1$ computes $\view_B(\trest)$. Line~$3$ computes trace $t'$ as the trace obtained by removing $\view_o(\trest)$ from $\view_B(\trest)$. 

The lemma below shows that $t'_1$ is equal to $t'$ --- trace~$t_2'$ of the lemma below is indeed the trace $t'$ used in Line~3.

\begin{lemma}\label{lem:algoBag}
%$\view_B(t) = \view_o(t) \cdot t'$.
Let $t$ be a trace, $B \in \BB$, and $o = o(B)$. The following two traces are equal:
\begin{itemize}
  \item the trace $t'_1$ obtained by removing the prefix $\view_o(t)$ from $\view_B(t)$.
  \item the trace $t'_2$ obtained by removing prefix $\view_o(\trest)$ from $\view_B(\trest)$. 
\end{itemize}
\end{lemma}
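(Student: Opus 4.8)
I want to show that the two traces $t'_1$ (remove $\view_o(t)$ from $\view_B(t)$) and $t'_2$ (remove $\view_o(\trest)$ from $\view_B(\trest)$) coincide. The natural strategy is to identify both traces with the same set of events equipped with the same order and labelling. Concretely, I will argue that the underlying event set of $t'_1$ equals that of $t'_2$, namely the events of $\view_B(\trest)$ that are *not* in $\view_o(t)$; and then that the induced partial orders agree because both are obtained by restricting the order of $t$ to the same set (the removal of a prefix/ideal from a trace keeps the induced order on the remaining events). The whole statement hinges on an event-level identity, so I will phrase everything in terms of events of the global trace $t$.

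**The key event-level claim.** Let me write $E_B$ for the events of $\view_B(t)$, $E_o$ for those of $\view_o(t)$, $E_B^{\inn}$ for the events of $\view_B(\trest)$, and $E_o^{\inn}$ for those of $\view_o(\trest)$. Since $\trest$ keeps exactly the events of $t$ labelled in $\Sigma^{\inn}(B)$, we have $E_B^{\inn} = E_B \cap \Sigma^{\inn}(B)$-labelled events, and similarly $E_o^{\inn} = E_o \cap \Sigma^{\inn}(B)$-labelled events. The event set of $t'_1$ is $E_B \setminus E_o$, and that of $t'_2$ is $E_B^{\inn} \setminus E_o^{\inn}$. So I need:
\begin{equation}
  E_B \setminus E_o \;=\; E_B^{\inn} \setminus E_o^{\inn}.
\end{equation}
The inclusion $\supseteq$ is easy: an event in $E_B^{\inn}$ is in $E_B$, and if it is labelled in $\Sigma^{\inn}(B)$ and lies in $\view_o(t)$ then it lies in $\view_o(\trest)$ (here I use that an ideal of $\view_o(t)$ consisting only of inner-labelled events is exactly picked out inside $\trest$ — the max-$o$ event, being labelled in $\Sigma_o \subseteq \Sigma^{\inn}(B)$, survives the restriction, and $\view_o(\trest)$ is the downset of that event within $\trest$). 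For $\subseteq$, take $e \in E_B \setminus E_o$. I must show $e$ is labelled in $\Sigma^{\inn}(B)$ (i.e. $\loc(\lambda(e)) \subseteq B$), and that it is not in $E_o^{\inn}$. The second part is immediate once the first is known, since $E_o^{\inn} \subseteq E_o$. For the first part, this is exactly where the tree-of-bags structure enters: $e \in \view_B(t) \setminus \view_o(t)$ means $e \le \max_\iota(t)$ for some inner $\iota \in B \setminus \{o\}$, but $e \not\le \max_o(t)$. If $e$ were labelled by a letter outside $\Sigma^{\inn}(B)$, then $\loc(\lambda(e))$ would meet $P \setminus B$; since the only process in $B$ that communicates outside $B$ is $o$, the only way an inner-process view contains such an event is through the chain of dependencies going via $o$ — I need to argue that any event below $\max_\iota(t)$ with a non-inner label is in fact below $\max_o(t)$ as well, hence in $\view_o(t)$, a contradiction.

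**The main obstacle.** The delicate step is precisely this last implication: that in a tree-of-bags architecture, every event with a non-bag label that is visible to an inner process $\iota \in B$ is also visible to the outer process $o = o(B)$. I would prove it by tracking a covering chain ($\lessdot$-path) in $t$ witnessing $e \le \max_\iota(t)$: since $e$ is labelled outside $\Sigma^{\inn}(B)$ and $\max_\iota(t)$ is labelled in $\Sigma_\iota \subseteq \Sigma^{\inn}(B)$ (as $\iota$ is inner, $\Sigma_\iota$ involves only processes of $B$), somewhere along this chain there must be a "transition" from a process not in $B$ to a process in $B$; and the only process in $B$ adjacent in the communication graph to a process outside $B$ is $o$. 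Hence the chain passes through an event labelled in $\Sigma_o$, so $e$ lies below that event, which lies below $\max_o(t)$; thus $e \in \view_o(t)$. Making this chain argument rigorous (in particular, that consecutive events in a $\lessdot$-chain share a process, and that the "first" event on the chain whose label touches $B$ must touch $o$) is the crux; everything else is bookkeeping. Once the event-set identity is established, I conclude by noting that $t'_1$ and $t'_2$ both carry the partial order and labelling obtained by restricting those of $t$ to this common event set — indeed, removing an ideal as a prefix from a trace yields exactly the induced suborder on the complement — so the two traces are equal.
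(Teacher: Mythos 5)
Your overall strategy---identify the two traces at the level of events of the global trace $t$, then argue that the induced orders agree---is essentially the paper's, and you correctly isolate the central structural fact: every event of $\view_B(t)$ labelled outside $\Sigma^{\inn}(B)$ already lies in $\view_o(t)$, because $o$ is the only process of $B$ with an edge leaving $B$ in the communication graph. Your covering-chain sketch for that fact is fine (the paper merely asserts it as clear). But two of your steps have genuine gaps. The first is in the $\supseteq$ inclusion, where you justify ``inner-labelled and in $\view_o(t)$ implies in $\view_o(\trest)$'' by claiming that $\Sigma_o \subseteq \Sigma^{\inn}(B)$, so that the maximal $o$-event survives the restriction. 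That containment is false: the outer process is by definition the one process of $B$ that shares letters with other bags, so $\max_o(t)$ may be labelled by a letter $a$ with $\loc(a)\not\subseteq B$, in which case it is deleted in $\trest$ and $\max_o(\trest)$ is a strictly earlier event. You still owe the implication that an inner-labelled event below $\max_o(t)$ is below $\max_o(\trest)$ in the order of $\trest$; this is exactly where the paper's case analysis on $f_1$ (the maximal $\Sigma^{\inn}(o)$-event) versus $f_2$ (the maximal event of $o$ labelled outside $\Sigma^{\inn}(B)$) does real work, using that a covering chain from an inner event up to $f_2$ must pass through an event whose label contains $o$ and lies in $\Sigma^{\inn}(B)$.

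The second gap is in your concluding sentence, which treats the order of $t'_2$ as ``the restriction of the order of $t$.'' But $t'_2$ lives inside $\trest$, whose partial order is computed from the dependence relation of the subalphabet $\Sigma^{\inn}(B)$ and is a priori \emph{coarser} than the restriction of the order of $t$: two inner events ordered in $t$ only via a chain of mediating non-inner events could become incomparable after projection. Your observation that removing an ideal leaves the induced suborder is correct, but it addresses the removal, not the projection. The missing step is short given what you have already established: since every non-inner event of $\view_B(t)$ lies in the ideal $\view_o(t)$, a covering chain between two events of $t'_1$ can never visit a non-inner event (that would drag its lower endpoint into the downward-closed set $\view_o(t)$), so such chains consist entirely of pairwise-dependent inner letters and survive in $\trest$; this is what makes the two orders coincide on the common event set, and it is the content of the explicit order-comparison in the last paragraph of the paper's proof. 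With these two repairs the architecture of your argument is sound and matches the paper's.
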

\begin{proof}
Let $f_1$ be the maximal $\Sigma^{\inn}(o)$ event in $\view_B(t)$ and $f_2$ be the maximal $\Sigma^{\out}(o)$ event in $\view_B(t)$.
  It is clear that any any $\Sigma \setminus \Sigma^{\inn}(B)$ event in $\view_B(t)$ is in $\view_o(t)$, and in particular in $\ideal{f_2}$, since $o$ is the only process that communicates with processes outside the bag. Thus there are no $\Sigma \setminus \Sigma^{\inn}(B)$ events in $t'_1$. We show there are no such events in $t'_2$ either. If $f_2 < f_1$ (where $\le$ is the order in the partial order representation of trace $t$), then $f_1 = \max_o(\trest)$ and the claim follows. If $f_1 < f_2$, then for any $\Sigma \setminus \Sigma^{\inn}(B)$ event $g$ such that $g \not< f_1$ but $g < f_2$ there is no $\Sigma^{\inn}(B)$ event $h$ such that $g < h$. For otherwise there must be a $\Sigma^{\inn}(o)$ event $h'$ such that $g < h' < h$ which contradicts the fact that $g \not< f_1$.
  
  Note that for any $\Sigma^{\inn}(B)$ event $e$, if $e < f_2$, then $e \le f_1$ since there must be a $\Sigma^{\inn}(o)$ event $h$ such that $ e \le h < f_2$ and $h \le f_1$. Consequently, $t'_1$ and $t'_2$ contain the same events. It only remains to show that the order is the same as well.

  We now show that any two events $e, f$ such that $\loc(e), \loc(f) \subseteq \Sigma^{\inn}(B)$ are ordered in $\view_B(\trest)$ if and only if they are ordered in $\view_B(t)$. If $e$ and $f$ are ordered in $\view_B(\trest)$ then they are obviously ordered in $\view_B(t)$. We prove the other direction. Let $e = g_1 \lessdot \cdots \lessdot g_n = f$ be a sequence of events in $\view_B(t)$, and $g_\alpha$ be the last $\Sigma^{\inn}(B)$ event and $g_\beta$ be the last $\Sigma \setminus \Sigma^{\inn}(B)$ event. We then have $o \in \loc(g_\alpha)$ and $o \in \loc(g_{\beta + 1})$. Since both $g_\alpha$ and $g_{\beta + 1}$ are $\Sigma^{\inn}(B)$ events, we have $e < f$ in $\view_B(\trest)$.
\end{proof}

\viewB*
\begin{proof}
Directly follows from Lemma~\ref{lem:algoBag}.
\end{proof}

Once we know how to compute $q_B$ for each bag, we can extend Algorithm~\ref{alg:globalState} (Appendix~\ref{sec:acyclic-recall}) to the bag setting in a similar manner. We describe the new procedure in Algorithm~\ref{alg:globalStateCombo}. 

\smallskip 
%\subparagraph*{Making the construction effective.}
\noindent \textbf{Making the construction effective.}
We cannot maintain $\view_\iota(\trest)$ in full as this is infinite. However, all we need is to be able to calculate the trace $t'$ of Algorithm~\ref{alg:bagview}. Since process $o(B)$ does not participate in $t'$, we have $|t'| \le k - 1$. Secondly the event $e$ of Algorithm~\ref{alg:bagview} also appears in the last $k$ letters of $\view_B(\trest)$. This shows that both $e$ and $t'$ are present in the last $k$ letters of $\view_B(\trest)$. Now, we can almost take the construction of Section~\ref{sec:contrib-fair} and apply it bagwise. Taken off the shelf, the construction maintains for each $\iota \in B$, a state of $\A$, a counter modulo $2k_B$ and a suffix of $\view_\iota(\trest)$ with $2k_B-2$ letters. As said earlier, the state does not make sense. We do not need it either --- we maintain only the last two parts, the counter value and the suffix. With this information in all the processes of $B$, we can apply Algorithm~\ref{alg:bagview} to compute $e$ and $t'$. The outer processes maintain this information coming from the fairness construction, and also a pair of states of~$\A$ for the acyclic construction.

Given a distributed alphabet $(\Sigma, \loc)$ that forms a tree-of-bags architecture, and a DFA $\A$ that is fair for $(\Sigma,\loc)$, we provide a construction to synthesise an asynchronous automaton $\Bcomb$ with the same language. Intuitively, inner processes apply the construction of Section~\ref{sec:final-AA} restricted to letters in their bag, while outer processes maintain two things: the construction of Section~\ref{sec:final-AA} restricted to their bag and the construction for acyclic communication recalled in the beginning of Section~\ref{sec:acyclic-recall}. %While reading a trace $t$, each process $p$ maintains an abstraction of $\view_p(t)$. When processes synchronise, the abstraction should have sufficient information to compute the abstraction of the joint view. Secondly, from the abstracted views of each process, it should be possible to compute the state reached by $A$ on reading $t$. For the construction, we distinguish between the inner and outer processes. Once again, we start with automaton with infinitely many states.
\begin{description}
\item[Inner process.] Consider an inner process $\iota$. Let $B$ be the bag containing $\iota$, and let $\Sigma_B$ be the union of the alphabets of all \emph{inner processes} in $B$. Notice that $\Sigma_B$ does not include the letters that the outer process of $B$ uses to synchronise with other outer processes. On reading a trace $t$, process $\iota$ maintains a pair $(c_\iota, \varphi_\iota)$ such that: 
  \begin{align*}%\label{combo-invariant-inner}
    (c_\iota, \varphi_\iota) = \cut(0, \view_\iota(\trest))
  \end{align*}
  The arguments for the $\cut$ function above do not include the state information. 
  
  \item[Outer process.] Consider an outer process $\theta$. Let $B$ be its bag, and let $\Sigma_B$ be as above. Notice that $\Sigma_B$ also contains letters that $\theta$ shares with inner processes in its bag. On reading trace $t$, process $\theta$ maintains:%\todo{Ben: I removed the $q_\theta$ component in the second component, since this is the one we do not use, I believe...}
  \begin{align*}
    (\overleftarrow{q_\theta}, q_\theta), (c_\theta, \varphi_\theta) 
  \end{align*}
  where:
  \begin{itemize}
    \item $\overleftarrow{q_\theta}$ is the state reached by $\A$ on reading $\jointview_\theta(t)$, which is the smallest prefix of $t$ containing all actions where $\theta$ synchronises with $\parent(\theta)$,
    \item $q_\theta$ is the state reached by $\A$ on reading $\view_\theta(t)$
    \item $(c_\theta, \varphi_\theta)$ is such that $\cut(0, \view_\theta(\trest))$ is of the form $(q_\theta, c_\theta, \varphi_\theta)$
  \end{itemize}
  We will call $(\overleftarrow{q_\theta}, q_\theta)$ as the outer component of the local state, and $(c_\theta, \varphi_\theta)$ as the inner component.
\end{description}

The transitions of the inner processes and the inner component of outer processes is the same as the fairness construction: synchronise, expand and cut. Let us explain the transitions on the outer component. Suppose $(\overleftarrow{q_\theta}, q_\theta)$ is the state of outer process $\theta$. Let $(\overleftarrow{r_\theta}, r_\theta)$ be the state of $\parent(\theta)$. Suppose $\theta$ and $\parent(\theta)$ synchronise on a joint action $a$. Let $X_\theta$ be the union over all bags $B$ such that the outer process $o(B)$ is in the substree containing $\theta$ (including $\theta$) -- therefore, $X_\theta$ contains the process $\theta$, its bag, and all the bags in the subtree of $\theta$. To first reconcile the views, the processes compute $q' =  \diam(q_\theta, q'_\theta, r_\theta, X_\theta)$. Then both $\theta$ moves to $(\delta(q', a), \delta(q', a))$ and $\parent(\theta)$ moves to $(\overleftarrow{r_\theta}, \delta(q', a))$.

For a bag $B$, let~$X_B$ be the union of all the bags that include $B$ and all the bags in the subtree of $B$.
The correctness of this algorithm follows from the next lemma.

\begin{algorithm}
  \caption{Global state function}\label{alg:globalStateCombo}
  \begin{algorithmic}[1]
    \Require  A bag $B$, a tuple of states $((\overleftarrow{q_o}, q_o))$ for all outer processes in $B$ %$q_o = \delta(q_0, \view_o(t))$ where $o$ is the outer process of $B$
    \Ensure $q = \delta(q_0, \view_{X_B}(t))$
    \Function{$\cstate(B)$}{}
    %\State $s \gets \delta(q_0, \view_o(t))$
    \State $q \gets \delta(q_0, \view_B(t))$ \Comment using Algorithm~\ref{alg:bagview}
    \For{each child $B'$ of $B$}
    %\State $s_{B'} \gets \delta(q_0, \view_{B'}(t))$
    % \State $s_{o'} \gets \delta(q_0, \jointview_{o'}(t))$
    \State $q \gets \diam(\overleftarrow{q_{\oo(B')}}, \cstate(B'), q, X_{B'})$ %\Comment where $o' = \oo(B')$
    \EndFor
    \State \Return $q$
    \EndFunction
  \end{algorithmic}
\end{algorithm}

% \begin{algorithm}
%   \caption{Computing the state $\Delta(q_0, \view_B(t))$}
% \begin{algorithmic}[1]
%   \Require A bag $B$, the trace $\view_\iota(t \downarrow_B)$ for each $\iota \in B$, $s'_o = \Delta(q_0, \view_o(t))$ where $o$ is the outer process of $B$
%   \Ensure $s \gets \Delta(q_0, \view_B(t))$ 
%   \State $\hat{t} \gets$ pointwise union of $\Foata(\view_\iota(t \downarrow B))$ over all $\iota \in B$ (this includes $o$)
%   \State $e \gets$ maximal element of $o$ in $\hat{t}$
%   \State $t' \gets$ trace obtained by removing $\ideal{e}$ from $\hat{t}$
%   \State $s \gets \Delta(s'_o, t')$
% \end{algorithmic}
% \end{algorithm} 

\begin{restatable}{lemma}{cState}
  Let $t$ be a trace and let $\overleftarrow{q_o}$ and $q_o$ be the states reached by $\A$ respectively on $\jointview_o(t)$ and $\view_o(t)$, for every outer processes $o$. Then $\delta(q_0, t) = \cstate(B_{\roo}, ((\overleftarrow{q_o}, q_o))_{o})$.
\end{restatable}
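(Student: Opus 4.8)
Here is a plan for establishing the correctness lemma for $\cstate$ (the \texttt{CState} function of Algorithm~\ref{alg:globalStateCombo}).

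\medskip

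\noindent\textbf{Overall approach.} The plan is to prove, by structural induction on the tree of bags, the strengthened statement that for \emph{every} bag $B$, the call $\cstate(B)$ returns $\delta(q_0,\view_{X_B}(t))$, where $X_B$ denotes the set of all processes of $B$ together with all processes of bags in the subtree of $B$. The lemma then follows by taking $B=B_{\roo}$: since the subtree of the root covers all bags, $X_{B_{\roo}}=P$ and $\view_P(t)=t$, so $\cstate(B_{\roo})=\delta(q_0,t)$. This is exactly the bag-level analogue of the correctness argument for $\state$ in Appendix~\ref{sec:krishna-muscholl}, the only genuinely new ingredient being Line~1 of Algorithm~\ref{alg:globalStateCombo}.

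\medskip

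\noindent\textbf{Line~1 and the base case.} First I would argue that, for any bag $B$, after Line~1 of Algorithm~\ref{alg:globalStateCombo} the variable $q$ equals $\delta(q_0,\view_B(t))$. This is what the call to Algorithm~\ref{alg:bagview} in that line computes: by Lemma~\ref{lem:viewFoata} the pointwise union of the $\Foata(\view_\iota(\trest))$ over $\iota\in B$ is $\Foata(\view_B(\trest))$, and then by Lemma~\ref{lem:algoBag} we have $\view_B(t)=\view_o(t)\cdot t'$ with $o=o(B)$ and $t'$ the residual trace extracted in Line~3 of Algorithm~\ref{alg:bagview}; hence $\delta(q_o,t')=\delta(q_0,\view_B(t))$ using the hypothesis $q_o=\delta(q_0,\view_o(t))$. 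For a leaf bag there are no loop iterations, $X_B=B$, and this already closes the base case.

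\medskip

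\noindent\textbf{Inductive step.} Let $B$ have children $B'_1,\dots,B'_m$ in the tree (i.e.\ $o(B'_i)$ is a child of $o(B)$), and assume $\cstate(B'_i)=\delta(q_0,\view_{X_{B'_i}}(t))$ for each $i$. Put $Y_0=B$ and $Y_i=Y_{i-1}\cup X_{B'_i}$, so that $Y_m=X_B$. By a secondary induction on $i$ I would show that the value $q^{(i)}$ of $q$ after the $i$-th loop iteration is $\delta(q_0,\view_{Y_i}(t))$ (with $q^{(0)}=\delta(q_0,\view_B(t))$ from the previous paragraph). Each step is a single call to $\diam$, and I would apply Lemma~\ref{lem:diamond} with $t_1=\jointview_{o(B'_i)}(t)$ (so $\delta(q_0,t_1)=\overleftarrow{q_{o(B'_i)}}$), $t_2$ the part of $\view_{X_{B'_i}}(t)$ after $t_1$ (so $\delta(q_0,t_1 t_2)=\delta(q_0,\view_{X_{B'_i}}(t))=\cstate(B'_i)$ by the primary hypothesis), and $t_3$ the part of $\view_{Y_{i-1}}(t)$ after $t_1$ (so $\delta(q_0,t_1 t_3)=q^{(i-1)}$; here $t_1$ is an ideal of $\view_{Y_{i-1}}(t)$ because the maximal event of $\jointview_{o(B'_i)}(t)$ is an $o(B)$--$o(B'_i)$ synchronisation and $o(B)\in B\incl Y_{i-1}$). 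The label-separation conditions $\loc(t_2)\incl X_{B'_i}$ and $\loc(t_3)\cap X_{B'_i}=\emptyset$ follow from the defining properties of the tree-of-bags architecture, by the same reasoning as in the proof of Lemma~\ref{lem:algoBag}: the only letters whose set of processes straddles $X_{B'_i}$ are those shared by $o(B'_i)$ and $o(B)$ (inner processes stay inside their bag, and every descendant outer process of $o(B'_i)$ keeps its parent inside $X_{B'_i}$), so any $\lessdot$-chain connecting an $X_{B'_i}$-event to an event outside $X_{B'_i}$ must pass through an $o(B)$--$o(B'_i)$ synchronisation, which lies in $\view_{o(B'_i)}(t)$ and hence in $\jointview_{o(B'_i)}(t)=t_1$. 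Finally $\view_{Y_i}(t)=\view_{Y_{i-1}}(t)\cup\view_{X_{B'_i}}(t)=(t_1 t_3)\cup(t_1 t_2)$, which equals $t_1 t_2 t_3$ up to $\sim$ because $\loc(t_2)\cap\loc(t_3)=\emptyset$; Lemma~\ref{lem:diamond} then yields $q^{(i)}=\diam(\overleftarrow{q_{o(B'_i)}},\cstate(B'_i),q^{(i-1)},X_{B'_i})=\delta(q_0,\view_{Y_i}(t))$, completing the induction.

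\medskip

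\noindent\textbf{Main obstacle.} The induction skeleton is routine once the $\state$-lemma of Appendix~\ref{sec:krishna-muscholl} is in hand; the part requiring care is the verification of the hypotheses of Lemma~\ref{lem:diamond} in the bag setting — pinning down $t_1,t_2,t_3$, proving the label-separation, and checking that their concatenation recovers the right union of views — all of which rests on the architecture being a genuine tree-of-bags. A secondary point, already handled when making the construction effective, is that the bounded suffixes of $\view_\iota(\trest)$ actually stored by the processes suffice to run Algorithm~\ref{alg:bagview}, since both its output event and its residual trace $t'$ lie within the last $k$ letters of $\view_B(\trest)$.
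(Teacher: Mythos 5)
Your proposal is correct and follows essentially the same route as the paper's proof: an outer induction over the tree of bags combined with an inner induction over the for-loop, decomposing the relevant view as $\jointview_{o(B'_i)}(t)$ followed by the two residual pieces and applying Lemma~\ref{lem:diamond} at each iteration, with Line~1 handled by Algorithm~\ref{alg:bagview} via Lemma~\ref{lem:algoBag}. Your write-up is in fact somewhat more careful than the paper's (which defers to Proposition~4.2 of the Krishna--Muscholl paper), notably in verifying the label-separation hypotheses of the diamond lemma from the tree-of-bags structure.
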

\begin{proof}
The proof closely follows the proof of Proposition 4.2 in \cite{KRISHNA2013109}. We choose an enumeration of the children of $B$, $B_1, \dots B_m$. We prove by induction on the number of children that the for-loop of lines 3--4 maintains the invariant $q = \view_{\{B\} \cup X_{B_1} \ldots \cup X_{B_i}}(t)$ after iteration $i$ for $i \in \{1, \dots, m\}$. We also recursively assume that $\cstate$ is correct for all the bags in the subtree of $B$, excluding $B$. The invariant is satisfied at line 2. We assume the invariant holds before iteration $i$ and show that it holds after it.

Let $o_i = \oo(B_i)$. We decompose the trace $\view_{\{B\} \cup X_{B_1} \ldots \cup X_{B_i}}(t) = t_0t_1t_2$ where $t_0 = \jointview_{o'}(t)$, $t_0t_1 = \view_{B_i}(t)$ and $t_0t_2 = \view_{\{B\} \cup X_{B_1} \ldots \cup X_{B_{i-1}}}(t)$. Moreover $\delta(q, t_0) = \overleftarrow{q_{o'}}$, $\delta(q, t_0t_2) = q$ and $\delta(q_0, t_0t_1) = \cstate(B_i)$ by the inductive hypothesis. Since $\loc(t_1) \subseteq X_{B_i}$ and $\loc(t_2) \cap X_{B_i} = \emptyset$ we have by \Cref{lem:diamond}, $\delta(q_0, t_0t_1t_2) = \diam(\overleftarrow{s_{o_i}}, \cstate(B_i), q, X_{B_i})$. 
\end{proof}

\section{Use of the tool on a small example}
\label{app:tool}

  As an example of usage of the tool, consider the DFA of Figure~\ref{fig:DFA-unfair} where we again consider two processes $p_1$ and $p_2$, and three letters: $a$ that is local to $p_1$, $b$ that is local to $p_2$ and $c$ that is shared by both processes. This DFA accepts all traces where every two consecutive occurrences of $c$ (or an occurrence of $c$ and the beginning/end of the trace) is separated by an even number of letters $a$ or $b$. This is unfair since the word $(aa)^k$ is accepted for all $k$, whereas process $p_2$ is not participating. For a fairness parameter $k$, we can build the AA obtained via our construction to recognise exactly all traces accepted by the DFA that are $k$-fair. For instance, if $k=1$, we only accept traces of the form $c^*$ since no letter $a$ or $b$ can be read. The generated AA has 3 local states for each process: $(0, 0, \{\})$, $(0, 0, \{c\})$, and $(0, 1, \{c\})$. The semantics of this AA is depicted (as produced by the tool) in Figure~\ref{fig:aa1}. For $k=2$, the AA should now allow for exactly one occurrence of $a$ and $b$ between two successive occurrences of $c$, which generates an AA where every process has $15$ local states: $(0, 0, \{\})$, $(0, 0, \{b\})$, $(0, 0, \{c\})$, $(0, 0, \{c\}\{b\})$, $(0, 0, \{c\}\{c\})$, $(0, 0, \{a,b\}\{c\})$, $(0, 1, \{c\}\{b\})$, $(0, 1, \{c\}\{c\})$, $(0, 2, \{c\}\{b\})$, $(0, 2, \{c\}\{c\})$, $(0, 1, \{a,b\}\{c\})$, $(0, 3, \{c\}\{b\})$, $(0, 3, \{c\}\{c\})$, $(0, 2, \{a,b\}\{c\})$, $(0, 3, \{a,b\}\{c\})$. The semantics of this AA is depicted (as produced by the tool) in Figure~\ref{fig:aa2}. Finally, if we use our optimised version of the algorithm, this results in an AA where every process has 9 local states: $(0, 0, \{\})$, $(0, 0, \{b\})$, $(0, 0, \{c\})$, $(0, 1, \{b\})$, $(0, 1, \{c\})$, $(0, 2, \{c\})$, $(0, 2, \{b\})$, $(0, 3, \{b\})$, and $(0, 3, \{c\})$. The semantics of this AA is depicted (as produced by the tool) in Figure~\ref{fig:aa2_optim}.

  \begin{figure}[tbp]
    \centering 
    \begin{tikzpicture}[state/.style={draw, circle, thick, inner
        sep=3pt},initial text=,>=latex]
      \node[state,accepting,initial] (0) at (0,0) {\scriptsize $0$};
      \node[state] (1) at (2, 0) {\scriptsize $1$};

      \draw[->] (0) edge[loop above] node{$c$} (0)
      (0) edge[bend left] node[above] {$a, b$} (1)
      (1) edge[bend left] node[below] {$a, b$} (0);
    \end{tikzpicture}
    \caption{An unfair trace-closed DFA\label{fig:DFA-unfair}}
  \end{figure}
  \begin{figure}[tbp]
    \centering
    \includegraphics[width=\linewidth]{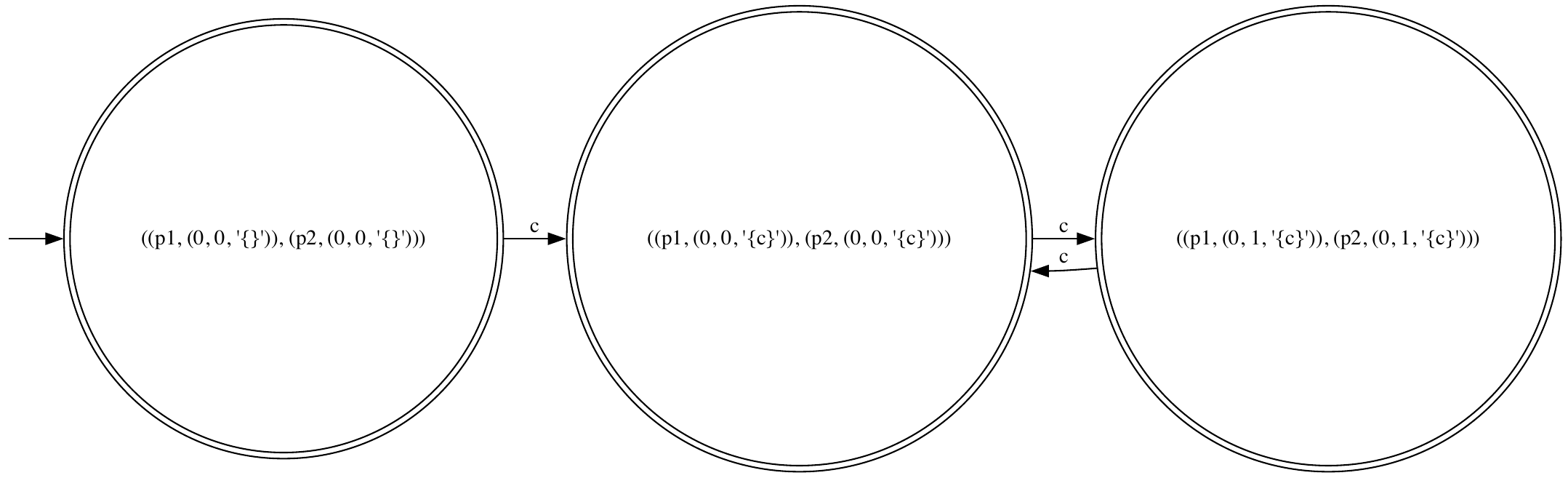}
    \caption{Semantics of the AA synthesised from the unfair DFA of Figure~\ref{fig:DFA-unfair} with fairness parameter $k=1$}\label{fig:aa1}
  \end{figure}
  \begin{figure}[tbp]
    \centering
    \includegraphics[width=\linewidth]{aa2.pdf}
    \caption{Semantics of the AA synthesised from the unfair DFA of Figure~\ref{fig:DFA-unfair} with fairness parameter $k=2$}\label{fig:aa2}
  \end{figure}
  \begin{figure}[tbp]
    \centering
    \includegraphics[width=\linewidth]{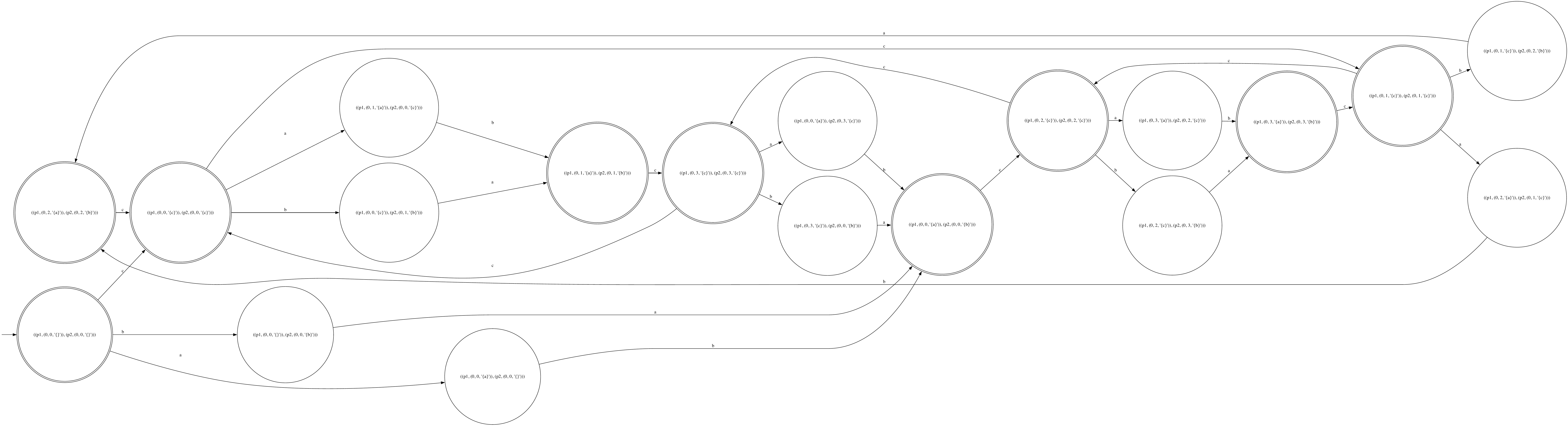}
    \caption{Semantics of the AA synthesised from the unfair DFA of Figure~\ref{fig:DFA-unfair} with fairness parameter $k=2$, with the optimised construction}\label{fig:aa2_optim}
  \end{figure}

\end{document}